\definecolor{purple}{rgb}{0.5, 0.0, 0.5}
\definecolor{dark_green}{rgb}{0.0, 0.5, 0.0}
\definecolor{mygray}{gray}{0.6}
\DeclareMathAlphabet{\mathpzc}{OT1}{pzc}{m}{it}
\newcommand{\white}{\color{white}}
\newcommand{\Cc}{\mathcal{C}}
\newcommand{\ow}{\mathcal{O}}
\newcommand{\I}{\mathcal{I}}
\newcommand{\K}{\mathcal{K}}
\newcommand{\M}{\mathcal{M}}
\newcommand{\Mb}{\bold{M}}
\newcommand{\Qb}{\bold{Q}}
\newcommand{\Qcal}{\mathcal{Q}}
\newcommand{\Rb}{\bold{R}}
\newcommand{\Rt}{\tilde{R}}
\newcommand{\Otil}{\tilde{O}}
\newcommand{\wb}{\bold{w}}
\newcommand{\Wb}{\bold{W}}
\newcommand{\Ub}{\bold{U}}
\newcommand{\vb}{\bold{v}}
\newcommand{\Vb}{\bold{V}}
\newcommand{\xb}{\bold{x}}
\newcommand{\Xb}{\bold{X}}
\newcommand{\xbg}{{\grave{\bold{x}}}}
\newcommand{\xbh}{{\hat{\bold{x}}}}
\newcommand{\yb}{\bold{y}}
\newcommand{\ellb}{\bar{\ell}}
\newcommand{\ellg}{\grave{\ell}}
\newcommand{\ellt}{\tilde{\ell}}
\newcommand{\xih}{\hat{\xi}}
\newcommand{\Fb}{\bold{F}}
\newcommand{\Gb}{\bold{G}}
\newcommand{\zb}{\bold{z}}
\newcommand{\Zb}{\bold{Z}}
\newcommand{\Pibold}{\bold{\Pi}}
\newcommand{\Piba}{\acute{\Pibold}}
\newcommand{\Pibh}{\hat{\Pibold}}
\newcommand{\Pibt}{\tilde{\Pibold}}
\newcommand{\Omb}{\bold{\Omega}}
\newcommand{\Ombwt}{\widetilde{\Omb}}
\newcommand{\Phib}{\bold{\Phi}}
\newcommand{\D}{\mathcal{D}}
\newcommand{\Scal}{\mathcal{S}}
\newcommand{\Z}{\mathbb{Z}}
\newcommand{\R}{\mathbb{R}}
\newcommand{\C}{\mathbb{C}}
\newcommand{\F}{\mathbb{F}}
\newcommand{\E}{\mathbb{E}}
\newcommand{\N}{\mathbb{N}}
\newcommand{\gb}{\bold{g}}
\newcommand{\gt}{\tilde{g}}
\newcommand{\gh}{\hat{g}}
\newcommand{\Pb}{\bold{P}}
\newcommand{\Sb}{\bold{S}}
\newcommand{\Sbwt}{\widetilde{\Sb}}
\newcommand{\SbPi}{\Sb_{\Pibold}}
\newcommand{\SbPh}{\Sb_{\Pibh}}
\newcommand{\SbPt}{\Sb_{\Pibt}}
\newcommand{\Ab}{\bold{A}}
\newcommand{\Acal}{\mathcal{A}}
\newcommand{\Abh}{\widehat{\Ab}}
\newcommand{\Abt}{{\tilde{\bold{A}}}}
\newcommand{\Abc}{{\check{\bold{A}}}}
\newcommand{\Abg}{{\grave{\bold{A}}}}
\newcommand{\bb}{\bold{b}}
\newcommand{\bbh}{\widehat{\bold{b}}}
\newcommand{\bbt}{\tilde{\bold{b}}}
\newcommand{\bbbr}{\breve{\bold{b}}}
\newcommand{\bbc}{{\check{\bold{b}}}}
\newcommand{\bbg}{{\grave{\bold{b}}}}
\newcommand{\Bb}{\bold{B}}
\newcommand{\Bcal}{\mathcal{B}}
\newcommand{\ab}{\bold{a}}
\newcommand{\abt}{\tilde{\bold{a}}}
\newcommand{\abg}{\grave{\bold{a}}}
\newcommand{\Cb}{\bold{C}}
\newcommand{\Db}{\bold{D}}
\newcommand{\Eb}{\bold{E}}
\newcommand{\eb}{\bold{e}}
\newcommand{\Ib}{\bold{I}}
\newcommand{\Hb}{\bold{H}}
\newcommand{\Hbh}{\hat{\bold{H}}}
\newcommand{\Hbt}{\tilde{\bold{H}}}
\newcommand{\Ht}{\tilde{H}}
\newcommand{\Ubt}{\tilde{\bold{U}}}
\newcommand{\Vbh}{\hat{\bold{V}}}
\newcommand{\Sigb}{\bold{\Sigma}}
\newcommand{\err}{\mathrm{err}}
\newcommand{\image}{\mathrm{im}}
\newcommand{\rpm}{\raisebox{.2ex}{$\scriptstyle\pm$}}
\newcommand{\getsU}{\overset{\ U}{\gets}}
\newcommand{\sfsty}[1]{\ensuremath{\mathsf{#1}}}  
\newcommand{\Enc}{\sfsty{Enc}}
\newcommand{\Dec}{\sfsty{Dec}}
\DeclarePairedDelimiter\ceil{\lceil}{\rceil}
\DeclareMathOperator{\GL}{GL}
\DeclareMathOperator{\tr}{tr}
\newcommand{\diag}{\mathrm{diag}}
\newtheorem{Thm}{Theorem}
\newtheorem{Cor}{Corollary}
\newtheorem{Prop}{Proposition}
\newtheorem{Lemma}{Lemma}
\newtheorem{Def}{Definition}
\newtheorem{Rmk}{Remark}
\newcommand{\ind}{\text{\color{white}.$\quad$}}
\begin{document}
\title{Iterative Sketching for Secure Coded Regression}

\author{\large%
Neophytos Charalambides, Hessam Mahdavifar, Mert Pilanci, and Alfred O. Hero III
  \vspace{-.25in}
  \thanks{Part of the material in this paper was presented at the 2022 IEEE International Symposium on Information Theory (ISIT), Espoo, Finland, June 2022 \cite{CMPH22}.
  \newline
  ${\white..}$ The authors N.C. and A.H. are with the Department of Electrical Engineering and Computer Science, University of Michigan, Ann Arbor, MI 48104 USA (email: neochara@umich.edu, hero@umich.edu). The author H.M. is with the Department of Electrical and Computer Engineering, Northeastern University, Boston, MA 02115 (email: h.mahdavifar@northeastern.edu). The author M.P. is with the Department of Electrical Engineering, Stanford University, Stanford, CA 94305 USA (e-mail: pilanci@stanford.edu).}
}

\maketitle

\begin{abstract}
Linear regression is a fundamental and primitive problem in supervised machine learning, with applications ranging from epidemiology to finance. In this work, we propose methods for speeding up distributed linear regression. We do so by leveraging randomized techniques, while also ensuring security and straggler resiliency in asynchronous distributed computing systems. Specifically, we randomly rotate the basis of the system of equations and then subsample \textit{blocks}, to simultaneously secure the information and reduce the dimension of the regression problem. In our setup, the basis rotation corresponds to an encoded encryption in an \textit{approximate gradient coding scheme}, and the subsampling corresponds to the responses of the non-straggling servers in the centralized coded computing framework. This results in a distributive \textit{iterative} stochastic approach for matrix compression and steepest descent.
\end{abstract}

\begin{IEEEkeywords}
Coded computing, gradient coding, subspace embedding, linear regression, distributed computing, compression.
\end{IEEEkeywords}

\section{Introduction}
\label{intro}

Applying a random projection\footnote{By `\textit{projections}' we refer to random matrices, not idempotent matrices.} is a classical way of performing dimensionality reduction, and is widely used in algorithmic and learning contexts \cite{Vem05,Woo14,DMMS11,DM16}. Distributed computations in the presence of \textit{stragglers} have gained a lot of attention in the information theory community. Coding-theoretic approaches have been adopted for this \cite{LLPPR17,reisizadeh2017coded,li2016coded,li2017coding,LSR17,dutta2016short,ramamoorthy2019universally,YSRKSA18,RRG20,CT19,CPH20c,CPH20b,OUG20,OBGU20,CMH21,RCHV23,CPH23}, and fall under the framework of \textit{coded computing} (CC). Data security is also an increasingly important issue in CC \cite{LA20}. The study of privacy and security in distributed computing networks, dates back to the notion of ``\textit{secure multi-party computation}'' \cite{BGV88,CCD88}. In this work, we propose methods to securely speed up distributed linear regression; a fundamental problem in supervised learning, by simultaneously leveraging random projections and sketching, and distributed computations.

Our results are presented in terms of the system model proposed in \cite{LLPPR17}; illustrated in Figure \ref{CMM_illustration}, though they extend to any centralized distributed model, \textit{i.e.} distributed systems with a central server which updates and communicates the parameters to computational nodes; referred to as servers. Furthermore, the desiderata of our approaches are to:
\begin{enumerate}[label=(\Roman*)]
  \item produce straggler-resilient \textit{accurate approximations}, in terms of the resulting $\ell_2$ error,
  \item \textit{secure} the data,
  \item carry out the computations \textit{efficiently} and \textit{distributively}.
\end{enumerate}
These properties will be clarified in Subsection \ref{approach_properties_subsec}.

\begin{figure}[h]
  \centering
    \includegraphics[scale=.14]{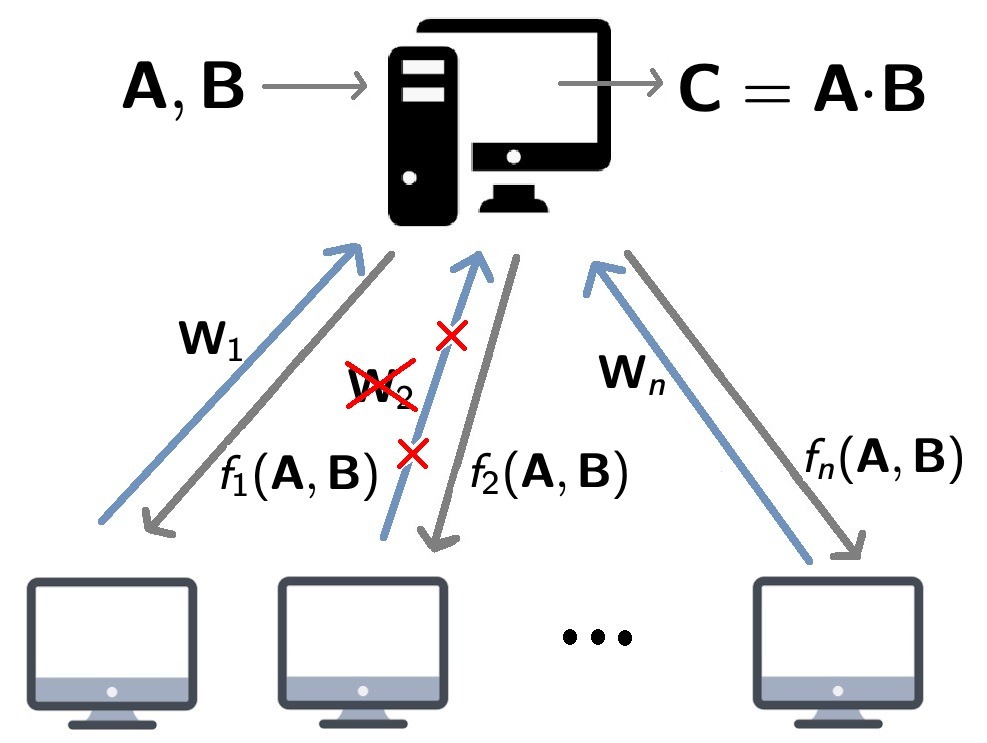}
    \caption{Diagram of an exact coded matrix multiplication scheme, under the model proposed in \cite{LLPPR17}. The central server sends an encoding $f_i(\Ab,\Bb)$ of matrices $\Ab,\Bb$; usually of their submatrices, to each of the $n$ computational nodes. Then, each server performs a computation on their encoded data to obtain $\Wb_i$, which are then delivered to the central server. Once a certain fraction of servers have responded, the central server obtains $\Cb=\Ab\cdot\Bb$, after performing a decoding step. In this example, the second server is a straggler.}
  \label{CMM_illustration}
\end{figure}

We focus on sketching for \textit{steepest descent} (SD) in the context of solving overdetermined linear systems. Part of our theoretical results are given for the sketch-and-solve paradigm \cite{Sar06}, which can be utilized by a single server; to also store a compressed version of the data. The application through CC results in iterative sketching methods for distributively solving overdetermined linear systems. We propose applying a random orthonormal projection to the linear system before distributing the data, and then performing SD distributively on the transformed system through \textit{approximate gradient coding}. Under the straggler scenario and the assumptions we make, this results in a \textit{mini-batch stochastic steepest descent} (SSD) procedure of the original system. A special case of such a projection is the \textit{Subsampled Randomized Hadamard Transform} (SRHT) \cite{DMMS11,Tro11,BG13}, which utilizes recursive Kronecker products of the Hadamard transform; and relates to the \textit{fast Johnson-Lindenstrauss transform} \cite{AC06,JL84}.

The benefit of applying an orthonormal matrix transformation is that we rotate and/or reflect the data's orthonormal basis, which \textit{cannot} be reversed without knowledge of the matrix. This is leveraged to give security guarantees,  while simultaneously ensuring that we recover well-approximated gradients, and an approximate solution of the linear system. Such sketching matrices are also referred to as \textit{randomized orthonormal systems} \cite{BP23}. We also discuss how one can use recursive Kronecker products of an orthonormal matrix of dimension greater than 2 in place of the Hadamard transform, to obtain a more efficient encoding and encryption than through a random and unstructured orthonormal matrix.

In the CC paradigm, the servers are assumed to be \textit{homogeneous} with the same expected response time. In the proposed methods, we stop receiving computations once a fixed fraction of the servers respond; producing a different induced sketch at each iteration. A predominant task which has been studied in the CC framework is the gradient computation of differentiable and additively separable objective functions \cite{TLDK17,HASH17,OGU19,CMH20,YA18,CT22,RTTD17,CPE17,KKR19,BWE19,SH22,GW21,CP18,WCP19,WLS19,HYKM19,CHZP18,CPH20a}. These schemes are collectively called \textit{gradient coding} (GC)\footnote{We abbreviate ‘gradient coding scheme/schemes’ to GCS/GCSs.}. We note that iterative sketching has proven to be a powerful tool for second-order methods \cite{PW16,LLDP20}, though it has not been explored in first-order methods. Since we indirectly consider a different underlying modified problem at each iteration, the methods we propose are \textit{approximate} GCSs. Related approaches have been proposed in \cite{RTTD17,KKR19,BWE19,GW21,SH22,CPE17,CP18,WCP19,WLS19,HYKM19,CHZP18,CPH20a}. Two important benefits of our approaches are that we do not require a decoding step, nor an encoding step by the servers; at each iteration, avoiding prevalent bottlenecks.

An advantage of using an updated sketch at each iteration, is that we do not have a bias towards the samples that would be selected/returned when the sketching takes place, which occurs in the sketch-and-solve approach. Specifically, we do not solve a modified problem which only accounts for a reduced dimension; determined at the beginning of the iterative process. Instead, we consider a different reduced system at each iteration. This is also justified numerically.

Another benefit of our approach, is that random projections secure the information from potential eavesdroppers, honest but curious; and colluding servers. We show information-theoretic security for the case where a random orthonormal projection is utilized in our sketching algorithm. Furthermore, the security of the SRHT, which is a crucial aspect, has not been extensively studied. Unfortunately, the SRHT is inherently insecure, which we show. We propose a modified projection which guarantees computational security of the SRHT.

There are related works to what we study. The work of \cite{BP23} focuses on parameter averaging for variance reduction, but only mentions a security guarantee for the Gaussian sketch, derived in \cite{ZWL08}. Another line of work \cite{KSD17,KSDY19}, focuses on introducing redundancy through equiangular tight frames (ETFs), partitioning the system into smaller linear systems, and then averaging the solutions of a fraction of them. A drawback of using ETFs, is that most of them are over $\C$. The authors of \cite{SKD18} study privacy of random projections, though make the assumption that the projections meet the `$\varepsilon$-MI-DP constraint'. Recently, the authors of \cite{LSM23} considered CC privacy guarantees through the lens of differential privacy, with a focus on matrix multiplication. Lastly, a secure GCS is studied in \cite{YA19}, though it does not utilize sketching. We also clarify that even though we guarantee cryptographic security, our methods may still be vulnerable to various privacy attacks, \textit{e.g.} membership inference attacks \cite{FJR15} and model inversion attacks \cite{SSSS17}. This is another interesting line of work, though is not a focus of our approach.

The paper is organized as follows. In \ref{coded_LR} we review the framework and background for coded linear regression, the notions of security we will be working with, the $\ell_2$-subspace embedding property, and list the main properties we seek to satisfy through our constructions; in order to meet the aforementioned desiderata. In \ref{bl_orth_sk_sec} we present the proposed iterative sketching algorithm, and in \ref{block_SRHT_sec} the special case where the projection is the randomized Hadamard transform; which we refer to as the ``\textit{block-SRHT}''. The subspace embedding results for the general algorithm and the block-SRHT are presented in the respective sections. We consider the case where the central server may adaptively change the step-size of its SD procedure in \ref{opt_ss_sec}, which can be viewed as an \textit{adaptive GCS}. In \ref{security_sec} we present the security guarantees of our algorithm and the modified version of the block-SRHT, as well as how our idea of encryption can be utilized in \textit{any} GCS. Finally, we present numerical experiments in \ref{exper_sec}, concluding remarks in \ref{concl_sec}, and how the proposed techniques can be utilized for logistic regression and other computations in Appendix \ref{orth_encr_distr_tasks}.

\section{Coded Linear Regression}
\label{coded_LR}

\subsection{Least Squares Approximation and Steepest Descent}
\label{LR_SD}

In linear least squares approximation \cite{DMMS11}, it is desired to approximate the solution
\begin{equation}
\label{x_star_pr_lr}
  \xb_{ls}^{\star} = \arg\min_{\xb\in\R^d}\Big\{L_{ls}(\Ab,\bb;\xb)\coloneqq\|\Ab\xb-\bb\|_2^2\Big\}
\end{equation}
where $\Ab\in\R^{N\times d}$ and $\bb\in\R^N$. This corresponds to the regression coefficients $\xb$ of the model $\bb=\Ab\xb+\vec{\varepsilon}$, which is determined by the dataset $\D=\left\{(\ab_i,b_i)\right\}_{i=1}^N\subsetneq \R^d\times\R$ of $N$ samples, where $(\ab_i,b_i)$ represent the features and label of the $i^{th}$ sample, \textit{i.e.} $\Ab=\big[\ab_1 \ \cdots \ \ab_N \big]^T$ and $\bb=\big[b_1 \ \cdots \ b_N \big]^T.$

To simplify our presentation, we first define some notational conventions. Row vectors of a matrix $\Mb$ are denoted by $\Mb_{(i)}$, and column vectors by $\Mb^{(j)}$. Our embedding results are presented in terms of an arbitrary partition $\N_N=\bigsqcup_{\iota=1}^K\K_\iota$, for $\N_N\coloneqq\{1,\dots,N\}$ the index set of $\Mb$'s rows. Each $\K_\iota$ corresponds to a \textit{block} of $\Mb$. The notation $\Mb_{(\K_\iota)}$ denotes the submatrix of $\Mb$ comprised of the rows indexed by $\K_\iota$. That is: $\Mb_{(\K_\iota)}=\Ib_{(\K_\iota)}\cdot\Mb$, for $\Ib_{(\K_\iota)}$ the corresponding submatrix of $\Ib_N$ of size $|\K_\iota|\times N$. We call $\Mb_{(\K_\iota)}$ the `$\iota^{th}$ \textit{block of} $\Mb$'. We abbreviate $(1-\epsilon)\cdot b\leqslant a \leqslant(1+\epsilon)\cdot b$, to $a\leqslant_\epsilon b$. The $k$-fold Kronecker product of $\Mb$ with itself is denoted by $\Mb^{\otimes k}=\Mb\otimes\cdots\otimes\Mb$. Lastly, `$\gets$' denotes a numerical assignment of a varying quantity, and `$\getsU$' a realization of a random variable through uniform sampling.

The orthogonal group in dimension $N$, is denoted by $O_N(\R)$, and consists of all orthonormal matrices in $\R^{N\times N}$. By $\Pibold$ we denote the random orthonormal matrix that is applied to the data matrix $\Ab$; which is drawn uniformly at random from a finite subgroup $\Otil_\Ab$ of $O_N(\R)$, \textit{i.e.} $\Pibold\getsU \Otil_\Ab\subsetneq O_N(\R)$. By $\Pibh$ and $\Pibt$ we denote the special cases where $\Pibold$ is an orthonormal matrix used for the \textit{block-SRHT} and \textit{garbled block-SRHT} respectively.

We address the overdetermined case where $N\gg d$. Existing exact methods find a solution vector $\xb_{ls}^{\star}$ in $\ow(Nd^2)$ time, where $\xb_{ls}^{\star}=\Ab^{\dagger}\bb$. A common way to approximate $\xb_{ls}^{\star}$ is through SD, which iteratively updates the gradient
\begin{equation}
\label{gr_ls}
  g_{ls}^{[t]}\coloneqq\nabla_{\xb}L_{ls}(\Ab,\bb;\xb^{[t]}) = 2\Ab^T(\Ab\xb^{[t]}-\bb)
\end{equation}
followed by updating the parameter vector
\begin{equation}
\label{par_upd}
  \xb^{[t+1]}\gets\xb^{[t]}-\xi_t\cdot g_{ls}^{[t]}.
\end{equation}
In our setting, the step-size $\xi_t>0$ is determined by the central server. The script $[t]$ indexes the iteration $t=0,1,2,\ldots$ which we drop when clear from the context. In \ref{opt_ss_sec}, we derive the optimal step-size $\xi_t^{\star}$ for \eqref{gr_ls} and the modified least squares problems we consider, given the updated gradients and parameters at iteration $t$.

\subsection{The Straggler Problem and Gradient Coding}
\label{str_problem}

Gradient coding is deployed in centralized computation networks, \textit{i.e.} a central server communicates $\xb^{[t]}$ to $m$ computational nodes; who perform computations and then communicate back their results. The central server distributes the dataset $\D$ among the $m$ nodes, to facilitate the solution of optimization problems with additively separable and differentiable objective functions. For linear regression \eqref{x_star_pr_lr}, the data is partitioned as
\begin{equation}
\label{part_data}
  \Ab=\Big[\Ab_1^T \ \cdots \ \Ab_K^T\Big]^T \quad \text{ and } \quad \bb=\Big[\bb_1^T \ \cdots \ \bb_K^T\Big]^T
\end{equation}
where $\Ab_i\in\R^{\tau\times d}$ and $\bb_i\in\R^{\tau}$ for all $i$, and $\tau=N/K$. For ease of exposition, we assume that $K|N$. Then, we have $L_{ls}(\Ab,\bb;\xb) = \sum_{i=1}^KL_{ls}(\Ab_i,\bb_i;\xb)$. A regularizer $\mu R(\xb)$ can also be added to $L_{ls}(\Ab,\bb;\xb)$ if desired.

In GC \cite{TLDK17}, the servers encode their computed \textit{partial gradients} $g_i\coloneqq\nabla_{\xb}L_{ls}(\Ab_i,\bb_i;\xb)$; which are then communicated to the central server. Once a certain fraction of encodings is received, the central server applies a decoding step to recover the gradient $g=\nabla_{\xb}L_{ls}(\Ab,\bb;\xb)=\sum_{i=1}^K g_i$. This can be computationally prohibitive, and takes place at every iteration. To the best of our knowledge, the lowest decoding complexity is $\ow\left((s+1)\cdot\ceil{\frac{m}{s+1}}\right)$; where $s$ is the number of stragglers \cite{CMH20}.

In our approach we trade time; by not requiring encoding nor decoding steps at each iteration, with accuracy of approximating $\xb_{ls}^{\star}$. Unlike conventional GCSs, in this paper the servers carry out the computation on the encoded data. The resulting gradient, is that of the modified least squares problem
\begin{equation}
\label{x_til_pr_lr}
  \xbh_{ls} = \arg\min_{\xb\in\R^d}\Big\{L_{\Sb^{[t]}}(\Ab,\bb;\xb)\coloneqq\big\|\Sb^{[t]}(\Ab\xb-\bb)\big\|_2^2\Big\}\
\end{equation}
for $\Sb^{[t]}\in\R^{r\times N}$ a sketching matrix, with $r\ll N$ and $r>d$. This is the core idea behind our approximation, in which we incorporate iterative sketching with orthonormal matrices and random sampling; and generalizations of the SRHT for $\Sb^{[t]}$, for our approximate GCSs. The sketching approach we take is to first apply a random projection, which also provides security against the servers and eavesdroppers, and then sample computations carried out on the blocks of the transformed data uniformly at random; which corresponds to the responses of the homogeneous non-stragglers. Furthermore, uniform sampling introduces the randomness in our proposed procedure, in such a way that the resulting GCS is a distributed SSD algorithm.

For $q$ the total number of responsive servers, we can mitigate up to $s=m-q$ stragglers. Specifically, the number of responsive servers $m-s$ in the CC model, corresponds to the number of sampling trials $q$ of our sketching algorithm, \textit{i.e.} $q=m-s$. At iteration $t$, a SD update of the modified least squares problem \eqref{x_til_pr_lr} is obtained distributively. Furthermore, we assume that the data is partitioned into as many blocks as there are servers, \textit{i.e.} $K=m$. The stragglers are assumed to be uniformly random and likely differ at each iteration. Thus, there is a different sketching matrix $\Sb^{[t]}$ at each epoch.

In conventional GCSs the objective is to construct an encoding matrix $\Gb\in\R^{m\times K}$ (can have $m\neq K$) and decoding vectors $\ab_\I\in\R^{1\times q}$, such that $\ab_\I\Gb_{(\I)}=\vec{\bold{1}}$ for any set of non-straggling servers $\I$. Furthermore, it is assumed that multiple replications of each encoded block are shared among the servers, such that $m\gneq q\geqslant K$.\footnote{As we mention in \ref{distr_grad_desc}, this can be done in order to mimic sampling \textit{with replacement} through the CC network. The reason we require $q\geqslant K$, is to define $\ab_\I^\star = \vec{\bold{1}}\Gb_{(\I)}^{\dagger}$. This idea has been extensively studied in \cite{CPH23b}.} From the fact that $\ab_\I\Gb_{(\I)}=\vec{\bold{1}}$ for any $\I$, in \textit{approximate} GC \cite{CPE17}, the optimal decoding vector for a set $\I$ of size $q=m-s$ is determined by
\begin{equation}
\label{opt_dec_vec}
  \ab_\I^\star = \arg\min_{\ab\in\R^{1\times q}}\big\{\big\|\ab\Gb_{(\I)}-\vec{\bold{1}}\big\|_2^2\big\} \quad \ \implies \quad \ab_\I^\star = \vec{\bold{1}}\Gb_{(\I)}^{\dagger}.
\end{equation}
The error in the approximated gradient $\grave{g}^{[t]}$ of an optimal approximate linear regression GCS $(\Gb,\ab_\I^\star)$, is then
\begin{equation}
\label{appr_GC_error}
  \big\|g^{[t]}-\grave{g}^{[t]}\big\|_2 \leqslant 2\sqrt{K}\cdot\err(\Gb_{(\I)})\cdot\|\Ab\|_2\cdot\|\Ab\xb^{[t]}-\bb\|_2,
\end{equation}
for $\err(\Gb_{(\I)}) \coloneqq \big\|\Ib_K-\Gb_{(\I)}^{\dagger}\Gb_{(\I)}\big\|_2$.

At this point, it is worth mentioning that there are similar approximate GCSs in terms of decoding, \textit{i.e.} which consider a fixed decoding vector at every iteration \cite{RTTD17,CPE17,KKR19,BWE19,SH22}, usually $\abt=\gamma\vec{\bold{1}}$ for all valid $\I$; where $\gamma$ is an appropriate constant. In our setting, we have $\gamma=1$. Similar to our GCS, these works use randomized constructions for their encoding matrix. Such schemes are referred to as \textit{fixed coefficient decoding}, while most other approximate GCSs are known as \textit{optimal coefficient decoding} \cite{GW21}, since they approximate or explicitly solve \eqref{opt_dec_vec}.

\subsection{Secure Coded Computing Schemes}
\label{security_subs}
Modern cryptography is split into two main categories, \textit{information-theoretic security} and \textit{computational security}. These categories are also referred to as \textit{Shannon secrecy} and \textit{asymptotic security} respectively. In this subsection, we give the definitions which will allow us to characterize the security level of our GCSs.

\begin{Def}[Ch.2 \cite{KL14}]
\label{Sh_secr}
An encryption scheme $\Enc$ with message, ciphertext and key spaces $\M$, $\Cc$ and $\K$ respectively, is \textbf{Shannon secret} w.r.t. a probability distribution $D$ over $\M$, if for all $\bar{m}\in\M$ and all $\bar{c}\in\Cc$:
\begin{equation}
  \Pr_{{\substack{m\sim D\\ k\getsU\K}}}\left[m=\bar{m}\mid\Enc_k(m)=\bar{c}\right] = \Pr_{m\sim D}\left[m=\bar{m}\right] .
\end{equation}
An equivalent condition is \textbf{perfect secrecy}, which states that for all $m_0,m_1\in\M$:
\begin{equation}
\label{perf_secrecy_id}
  \Pr_{k\getsU\K}\left[\Enc_k(m_0)=\bar{c}\right] = \Pr_{k\getsU\K}\left[\Enc_k(m_1)=\bar{c}\right] .
\end{equation}
\end{Def}

\begin{Def}[Ch.3 \cite{KL14}]
\label{comp_sec}
An encryption scheme $\Enc$ is \textbf{computationally secure}, if any probabilistic polynomial-time adversary $\Acal$ with decryption scheme $\Dec_\Acal$ succeeds in obtaining the message $m\in\M$ of length $n$ from the encrypted message $\Enc(m)$, with at most negligible\footnote{A function $\nu(n)$ is \textit{negligible} if it vanishes faster than the inverse of any polynomial in $n$, \textit{i.e.} $\lim_{n\to\infty}\nu(n)\cdot n^c=0$ for every constant $c>0$.} probability. That is, for any $m\in\M$, the decryption success probability of $\Acal$ is $\Pr\left[\Dec_\Acal(\Enc(m))=m\right]\leqslant\frac{1}{\mathrm{poly}(n)}$, where $\mathrm{poly}(\cdot)$ represents any positive polynomial.
\end{Def}

\begin{Def}
\label{secure_CC_def}
A \textbf{secure CC scheme}, is a pair of CC encoding and decoding algorithms $(\Enc,\Dec)$, for which $\Enc(\Ab)$ also guarantees Shannon secrecy or computational security of $\Ab$, and $\Dec$ recovers the data which was encrypted; \textit{i.e.} $\Dec(\Enc(\Ab))=\Ab$.
\end{Def}

In our work, $\Enc$ corresponds to a linear transformation through a randomly selected orthonormal matrix $\Pibold$. By encryption, we refer to this linear transformation, which is utilized in our GCS. Furthermore, we do not require a decryption step by the central server, as it computes an unbiased estimate of the gradient at the end of each iteration. Also, since $\Pibold^T=\Pibold^{-1}$, it follows that $\Pibold^T$ meets the requirement of $\Dec$, so in the following definition we refer to the encoding-decoding pair by only referencing $\Enc$. Furthermore, $\Enc$ depends on a secret key $k$ which is randomly generated. In our case, this is simply $\Pibold$.

\subsection{The $\ell_2$-subspace embedding Property}
\label{embedding_subs}

For the analysis of the sketching matrices $\SbPi$ we propose in Algorithm \ref{alg_orthog_sketch}, we consider any orthonormal basis $\Ub\in\R^{N\times d}$ of the column-space of $\Ab$, \textit{i.e.} $\image(\Ab)=\image(\Ub)$. The subscript of $\SbPi$, indicates the dependence of the sketching matrix on $\Pibold$.

Recall that the \textit{$\ell_2$-subspace embedding} ($\ell_2$-s.e.) property \cite{Woo14,ERNM22} states that any $\yb\in\image(\Ub)$ satisfies with high probability:
\begin{align}
\label{subsp_emb_id}
  \big\|\SbPi\yb\big\|_2\leqslant_\epsilon\|\yb\|_2  \ \iff \ \big\|\Ib_d-(\SbPi\Ub)^T(\SbPi\Ub)\big\|_2\leqslant \epsilon
\end{align}
for $\epsilon>0$. In turn, this characterizes the approximation's error of the solution $\xbh_{ls}$ of \eqref{x_til_pr_lr} for $\Sb\gets\SbPi$, as
$$ \big\|\Ab\xbh_{ls}-\bb\big\|_2 \leqslant \frac{1+\epsilon}{1-\epsilon}\big\|\Ab\xb_{ls}^{\star}-\bb\big\|_2 \leqslant (1+\ow(\epsilon))\big\|\Ab\xb_{ls}^{\star}-\bb\big\|_2 $$
w.h.p., and $\big\|\Ab(\xb_{ls}^{\star}-\xbh_{ls})\big\|_2\leqslant\epsilon\big\|(\Ib_N-\Ub\Ub^T)\bb\big\|_2$.

\subsection{Properties of our Approach}
\label{approach_properties_subsec}

A key property in the construction of our sketching matrices, is to sample \textit{blocks} (\textit{i.e.} submatrices) of a transformation of the data matrix, which permits us to then perform the computations in parallel. The additional properties we seek to satisfy with our GCSs through block sampling are the following:
\begin{enumerate}[label=(\alph*)]
  \item the underlying sketching matrix satisfies the $\ell_2$-s.e. property,
  \item the block leverage scores are flattened through the random projection $\Pibold$,
  \item the projection is over $\R$,
  \item the central server computes an unbiased gradient estimate at each iteration,
  \item do not require encoding/decoding at each iteration,
  \item guarantee security of the data from the servers and potential eavesdroppers,
  \item $\Pibold$ can be applied efficiently, \textit{i.e.} in $\ow(Nd\log N)$ operations.
\end{enumerate}

The seven properties listed above, are grouped together with respect to the desiderata mentioned in Section \ref{intro}. Specifically, desideratum (I) encompasses properties (a), (b), (c), (d), desideratum (II) corresponds to (f), and (III) encompasses (b), (c), (e), (g).

Property (a) is motivated by the sketch-and-solve approach, though through the iterative process, in practice we benefit by having fresh sketches. Leverage scores define the key structural non-uniformity that must be dealt with in developing fast randomized matrix algorithms; and are formally defined in \ref{subsp_emb_alg1}. If property (b) is met, we can then sample uniformly at random in order to guarantee (a). We require $\Pibold$ to be over $\R$, as if it were over $\C$, the communication cost from the central server to the servers; and the necessary storage space at the servers would double. Additionally, performing computations over $\C$ would result in further round-off errors and numerical instability. Properties (d) and (e) are met by requiring $\Pibold$ to be an orthonormal matrix. By allowing the projection to be random; we can secure the data, \textit{i.e.} satisfy (f). Furthermore, the action of applying an orthonormal projection for our encryption; is reversed through the computation of the partial gradients, hence no decryption step is required.

By considering a larger ensemble of orthonormal projections to select from, we can give stronger security guarantees. Specifically, by not restricting the structure of $\Pibold$, we can guarantee Shannon secrecy, though this prevents us from satisfying (g). On the other hand, if we let $\Pibold$ be structured, we can satisfy (g) at the cost of guaranteeing computational security; and not perfect secrecy.

We point out that even though Gaussian and random Rademacher sketches satisfy satisfy (a), (b), (c) and (f), they do not satisfy (d), (e) nor (g) in our CC setting. Experimentally, we observe that our proposed sketching matrices outperform their Gaussian and random Rademacher counterparts, primarily due to the fact that (d) is satisfied. Furthermore, for $\Pibold\in O_N(\R)$, our distributive procedure results in a SSD approach.

\section{Block Subsampled Orthonormal Sketches}
\label{bl_orth_sk_sec}

Sampling blocks for sketching least squares has not been explored as extensively as sampling rows, though there has been interest in using ``block-iterative methods'' for solving systems of linear equations \cite{Elf80,Gut06,NT14,RN20}. Our interest in sampling blocks, is to invoke results and techniques from \textit{randomized numerical linear algebra} (RandNLA) to CC. Specifically, we apply the transformation before partitioning the data and sharing it between the servers, who will compute the respective partial gradients. Then, the slowest $s$ servers will be disregarded. The proposed sketching matrices are summarised in Algorithm \ref{alg_orthog_sketch}.

\begin{algorithm}[h]
\label{alg_orthog_sketch}
\SetAlgoLined
  \KwIn{$\Ab\in\R^{N\times d}$, $\bb\in\R^N$, $\xb^{[0]}\in\R^d$, $\tau=\frac{N}{K}$, $q=\frac{r}{\tau}>\frac{d}{\tau}$}
  \KwOut{approximate solution $\xbh\in\R^d$ to \eqref{x_star_pr_lr}}
  \textbf{Randomly Select:} $\Pibold\in O_N(\R)$, an orthonormal matrix\\
  \For{$t=0,1,2,\ldots$}
    {
    \textbf{Initialize:} $\Omb=\bold{0}_{q\times K}$\\
    \textbf{Select:} step-size $\xi_t>0$\\
    \For{$i=1$ to $q$}
      {
        uniformly sample with replacement $j_i$ from $\N_K$\\
        $\Omb_{i,j_i}=\sqrt{N/r}=\sqrt{K/q}$
      }
    $\Ombwt_{[t]}\gets\Omb\otimes\Ib_\tau$ \Comment{$\SbPi^{[t]}=\Ombwt_{[t]}\cdot\Pibold$}\\
    $\Abh_{[t]}\gets\Ombwt_{[t]}\cdot(\Pibold\Ab)=\SbPi^{[t]}\cdot\Ab$\\
    $\bbh_{[t]}\gets\Ombwt_{[t]}\cdot(\Pibold\bb)=\SbPi^{[t]}\cdot\bb$\\
    \textbf{Update:} $\xbh^{[t+1]}\gets\xbh^{[t]}-\xi_t\cdot\nabla_{\xb}L_{ls}\big(\Abh_{[t]},\bbh_{[t]};\xbh^{[t]}\big)$
    }
\caption{Iterative Orthonormal Sketching}
\end{algorithm}

To construct the sketch $\Abh$, we first transform its orthonormal basis $\Ub$ by applying $\Pibold$ to $\Ab$. Then, we subsample $q$ blocks from $\Pibold\Ab$, to reduce dimension $N$. Finally, we normalize by $\sqrt{N/r}$ to reduce the variance of the estimator $\Abh$. Analogous steps are carried out on $\Pibold\bb$, to construct $\bbh$.

\subsection{Distributed Steepest Descent and Iterative Sketching}
\label{distr_grad_desc}

We now discuss the servers' computational tasks of our proposed GCS, when SD is carried out distributively. The encoding corresponds to $\Abt=\Gb\cdot\Ab$ and $\bbt=\Gb\cdot\bb$ for $\Gb\coloneqq\sqrt{N/r}\cdot\Pibold$, which are then partitioned into $K$ encoded block pairs $(\Abt_i,\bbt_i)$; similar to \eqref{part_data}, and are sent to distinct servers. Specifically, $\Abt_i=\Ib_{(\K_i)}\cdot\Abt$ and $\bbt_i=\Ib_{(\K_i)}\cdot\bbt$. This differs from most GCSs, in that the encoding is usually done locally by the servers on the computed results; at each iteration.

If each server respectively computes $\nabla_{\xb}L_{ls}(\Abt_i,\bbt_i;\xb^{[t]})=2\Abt_i^T(\Abt_i^T\xb^{[t]}-\bbt_i)$ at iteration $t$, and the index set of the first $q$ responsive servers is $\Scal^{[t]}$, the aggregated gradient
\begin{equation}
\label{gr_update}
  \gh^{[t]} = 2\sum\limits_{j\in\Scal^{[t]}}\Abt_j^T\left(\Abt_j\xb^{[t]}-\bbt_j\right)
\end{equation}
is equal to the gradient of $L_{\Sb}(\Ab,\bb;\xb^{[t]})$ for $\Sb\gets\SbPi^{[t]}$ the induced sketching matrix at that iteration, \textit{i.e.} $\gh^{[t]}=\nabla_{\xb}L_{\SbPi^{[t]}}(\Ab,\bb;\xb^{[t]})$. The sampling matrix $\Ombwt_{[t]}$ and index set $\Scal^{[t]}$, correspond to the $q$ responsive servers. We clarify that the dimensionality reduction of Algorithm \ref{alg_orthog_sketch} is a direct consequence of performing sampling through $\Ombwt_{[t]}$, which reduces the linear system from $N$ equations to $r$. Consequently, at each iteration of our distributed approach, we are computing the gradient of the objective function corresponding to the smaller linear system $\Abh_{[t]}\xb=\bbh_{[t]}$. Another benefit of considering uniform sampling through $\Ombwt_{[t]}$ which changes at each iteration, is that our procedure results in a distributed SSD algorithm (Theorem \ref{GC_SGD_thm}). Our procedure is illustrated in Figure \ref{it_sketch_figure}.

\begin{figure}[h]
  \centering
    \includegraphics[scale=.17]{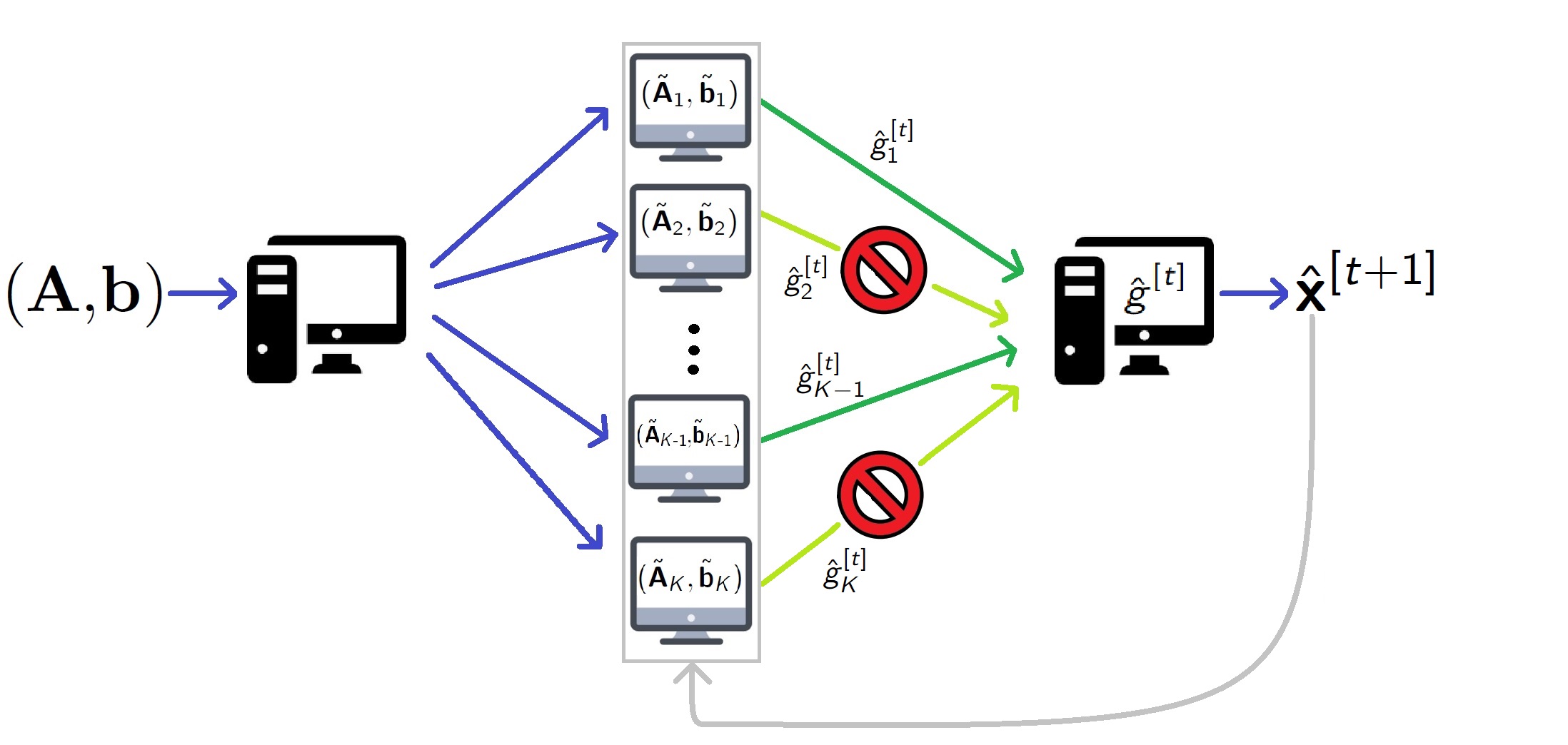}
    \caption{Illustration of our iterative sketching based GCS, at epoch $t+1$.}
  \label{it_sketch_figure}
\end{figure}

In Algorithm \ref{alg_orthog_sketch}, Theorems \ref{subsp_emb_thm_Unif} and \ref{subsp_emb_thm_SRHT}, we assume sampling with replacement. In what we just described, we used one replica of each block, thus $K=m$. To compensate for this, more than one replica of each block could be distributed. This is not a major concern with uniform sampling, as the probability that the $i^{th}$ block would be sampled more than once is $(q-1)/K^2$, which is negligible for large $K$.

\begin{Lemma}
\label{lemma_exp}
At any iteration $t$, with no replications of the blocks across the network, the resulting sketching matrix $\Sb_{[t]}$ satisfies $\E\left[\Sb_{[t]}^T\Sb_{[t]}\right]=\E\left[\Ombwt_{[t]}^T\Ombwt_{[t]}\right]=\Ib_N$.
\end{Lemma}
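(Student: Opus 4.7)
The plan is to unpack the definition of $\Sb_{[t]} = \Ombwt_{[t]} \cdot \Pibold$ and exploit the orthonormality of $\Pibold$ to reduce the claim to showing $\E[\Ombwt_{[t]}^T \Ombwt_{[t]}] = \Ib_N$. Concretely, since $\Pibold^T \Pibold = \Ib_N$ pointwise (not just in expectation), by the tower property
$$\E\bigl[\Sb_{[t]}^T \Sb_{[t]}\bigr] = \E\bigl[\Pibold^T \Ombwt_{[t]}^T \Ombwt_{[t]} \Pibold\bigr] = \E\bigl[\Pibold^T\,\E[\Ombwt_{[t]}^T \Ombwt_{[t]} \mid \Pibold]\,\Pibold\bigr],$$
and since the sampling that defines $\Ombwt_{[t]}$ is performed independently of $\Pibold$, it suffices to prove the unconditional identity $\E[\Ombwt_{[t]}^T \Ombwt_{[t]}] = \Ib_N$; the outer $\Pibold^T(\cdot)\Pibold$ then collapses to $\Ib_N$.

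Next, I would use the mixed-product property of Kronecker products, $\Ombwt_{[t]}^T \Ombwt_{[t]} = (\Omb \otimes \Ib_\tau)^T(\Omb \otimes \Ib_\tau) = (\Omb^T \Omb) \otimes \Ib_\tau$, to reduce further to showing $\E[\Omb^T \Omb] = \Ib_K$, since $\Ib_K \otimes \Ib_\tau = \Ib_N$. For this, I compute $\Omb^T \Omb$ entrywise from the construction: each row $i$ of $\Omb$ has a single nonzero entry $\sqrt{K/q}$ in column $j_i$, with $j_i \sim \mathrm{Unif}(\N_K)$ independently across $i$. Therefore the $(k,k')$ entry of $\Omb^T \Omb$ is $\sum_{i=1}^q \Omb_{i,k}\Omb_{i,k'}$, which vanishes for $k \ne k'$ (each row has a single nonzero), while for $k=k'$ it equals $\frac{K}{q}\sum_{i=1}^q \mathbb{1}[j_i = k]$. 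Taking expectations gives $\E[(\Omb^T\Omb)_{kk}] = \frac{K}{q} \cdot q \cdot \frac{1}{K} = 1$, so $\E[\Omb^T\Omb] = \Ib_K$, as desired.

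Chaining these reductions yields $\E[\Ombwt_{[t]}^T \Ombwt_{[t]}] = \Ib_N$ and hence $\E[\Sb_{[t]}^T \Sb_{[t]}] = \Ib_N$. There is no real obstacle here; the only subtlety worth flagging is the ``no replications'' hypothesis, which is what guarantees that each sampled block index $j_i$ is drawn uniformly from $\N_K$ with the unscaled probability $1/K$, making the normalization $\sqrt{K/q}$ exactly right to cancel the expected count $q/K$. Were blocks replicated across the network, the effective sampling distribution over distinct blocks would no longer be uniform, and the scaling would need to be adjusted to preserve unbiasedness.
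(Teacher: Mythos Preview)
Your proof is correct, and the overall reduction via orthonormality of $\Pibold$ matches the paper. The route to $\E[\Ombwt_{[t]}^T\Ombwt_{[t]}]=\Ib_N$ differs: the paper enumerates all $q$-subsets $\Scal^{[t]}\subseteq\N_K$ (the without-replacement model), expands $\Ombwt_{[t]}^T\Ombwt_{[t]}$ as a sum of $\Ib_{(\K_\iota)}^T\Ib_{(\K_\iota)}$, and closes with the combinatorial identity $\binom{K-1}{q-1}\big/\binom{K}{q}=q/K$. You instead exploit the Kronecker structure $\Ombwt_{[t]}=\Omb\otimes\Ib_\tau$ to reduce to $\E[\Omb^T\Omb]=\Ib_K$ and compute entrywise. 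Your approach is cleaner and makes explicit that the off-diagonal of $\Omb^T\Omb$ vanishes \emph{deterministically}; the paper's combinatorial sum obscures this.

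One remark on your last paragraph: in the paper, ``no replications of the blocks across the network'' means each encoded block is held by exactly one server, so the $q$ fastest responses correspond to $q$ \emph{distinct} blocks, i.e.\ sampling without replacement. You phrase your argument with $j_i$ drawn independently (with replacement, as in Algorithm~\ref{alg_orthog_sketch}). This does not affect the computation, since your diagonal calculation uses only the marginal $\Pr[j_i=k]=1/K$, which holds under either model; but your interpretation of the hypothesis as governing the per-draw probability is not quite what the paper intends.
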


It is worth noting that by Lemma \ref{lemma_exp}, $\Sb_{[t]}$ in expectation satisfies the $\ell_2$-s.e. identity \eqref{subsp_emb_id} with $\epsilon=0$, as $$ \E\left[\Ub^T\big(\Sb_{[t]}^T\Sb_{[t]}\big)\Ub\right]=\Ub^T\E\left[\Sb_{[t]}^T\Sb_{[t]}\right]\Ub=\Ub^T\Ub=\Ib_d. $$

\begin{Thm}
\label{GC_SGD_thm}
The proposed GCS results in a mini-batch stochastic steepest descent procedure for
\begin{equation}
\label{mod_pr_G}
  \xbh = \arg\min_{\xb\in\R^d}\Big\{L_{\Gb}(\Ab,\bb;\xb)\coloneqq L_{ls}(\Gb\Ab,\Gb\bb;\xb)\Big\} .
\end{equation}
Moreover, $\E\left[\gh^{[t]}\right]=\frac{q}{K}\cdot g_{ls}^{[t]}$.
\end{Thm}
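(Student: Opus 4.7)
The plan is to unfold the two claims of the theorem in sequence, leveraging the additive separability of $L_\Gb$ across blocks and the uniform randomness of the straggler pattern at each iteration.

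For the mini-batch SSD claim, I would partition $\Gb\Ab$ and $\Gb\bb$ into their $K$ blocks $\Abt_i = \Ib_{(\K_i)}\Gb\Ab$, $\bbt_i = \Ib_{(\K_i)}\Gb\bb$. Since $L_{\Gb}(\Ab,\bb;\xb) = \sum_{i=1}^{K} L_{ls}(\Abt_i,\bbt_i;\xb)$, its full gradient decomposes as $\nabla_\xb L_\Gb = 2\sum_{i=1}^K \Abt_i^T(\Abt_i\xb - \bbt_i)$. The aggregated quantity $\gh^{[t]}$ in \eqref{gr_update} is precisely this sum restricted to the random index set $\Scal^{[t]} \subseteq \N_K$ of $q$ responsive servers. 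Because the stragglers are assumed to be uniformly random and likely differ across iterations, $\Scal^{[t]}$ is a size-$q$ mini-batch drawn uniformly (with replacement, per the sampling convention of Algorithm \ref{alg_orthog_sketch}) from $\N_K$. The update $\xbh^{[t+1]} = \xbh^{[t]} - \xi_t\, \gh^{[t]}$ therefore matches the definition of a mini-batch SSD step for the finite-sum objective \eqref{mod_pr_G}.

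For the expectation identity, I would apply linearity of expectation over the iid uniform draws $j_1,\dots,j_q$ from $\N_K$, yielding
\begin{equation*}
  \E\!\left[\gh^{[t]}\right] \;=\; 2\sum_{i=1}^q \E_{j_i}\!\left[\Abt_{j_i}^T\bigl(\Abt_{j_i}\xb^{[t]} - \bbt_{j_i}\bigr)\right] \;=\; \frac{2q}{K}\sum_{\ell=1}^K \Abt_\ell^T\bigl(\Abt_\ell\xb^{[t]} - \bbt_\ell\bigr).
\end{equation*}
Reassembling the block partition collapses the inner sum to $(\Pibold\Ab)^T\bigl(\Pibold\Ab\,\xb^{[t]} - \Pibold\bb\bigr)$, and the orthogonality $\Pibold^T\Pibold = \Ib_N$ then reduces this further to $\Ab^T(\Ab\xb^{[t]} - \bb) = \tfrac{1}{2}\,g_{ls}^{[t]}$. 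Combining the factors delivers $\E[\gh^{[t]}] = (q/K)\,g_{ls}^{[t]}$. The computation is consistent with Lemma \ref{lemma_exp}: once the $\sqrt{N/r}$ normalization is folded into $\Ombwt_{[t]}$, we recover $\E[\Sb_{[t]}^T\Sb_{[t]}] = \Ib_N$, and the unnormalized server-side block products produce exactly the $(q/K)$ factor.

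The main obstacle I anticipate is not algebraic but conceptual, namely bookkeeping around the $\sqrt{N/r} = \sqrt{K/q}$ normalization. That prefactor lives inside the sampling matrix $\Ombwt_{[t]}$ in the centralized sketching view, whereas in the distributed view each server acts on a $\Pibold$-rotated block that does not carry the prefactor; conflating the two conventions would spuriously inflate or deflate the answer by a factor of $K/q$. A secondary subtlety is the with-replacement assumption, which is what decouples the $q$ draws and legitimizes the termwise application of linearity of expectation; the discussion following Algorithm \ref{alg_orthog_sketch} notes that in the CC network this is mimicked by distributing a modest number of replicas of each block, and this refinement folds in without altering the expectation.
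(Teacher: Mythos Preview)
Your approach is essentially the paper's. The mini-batch SSD interpretation is handled identically: both of you observe that the random responsive set $\Scal^{[t]}$ selects $q$ of the $K$ block summands of the separable objective $L_\Gb$, which is precisely a mini-batch step.

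For the expectation identity, the one structural difference is the sampling model. You take with-replacement i.i.d.\ draws $j_1,\dots,j_q$ (matching Algorithm~\ref{alg_orthog_sketch}) and apply linearity termwise; the paper instead samples \emph{without} replacement over all size-$q$ subsets $\Scal\in\Qcal$ of $\N_K$ and uses the counting identity $\binom{K-1}{q-1}\big/\binom{K}{q}=q/K$ to collect the per-block contributions. Both routes deliver the same expectation, so this is stylistic. The paper also splits the computation into two stages --- first relating $\gh^{[t]}$ to the gradient $\gt^{[t]}$ of the $\Pibold$-rotated (unscaled) problem via the combinatorial sum, then separately taking an expectation over the partial gradients $g_i^{[t]}$ of the original problem to isolate the $q/K$ factor --- whereas you do it in one pass.

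One bookkeeping slip to fix, which you yourself flag in your closing paragraph: with $\Abt_i=\Ib_{(\K_i)}\Gb\Ab$ carrying the $\sqrt{K/q}$ factor, the reassembly step gives
\[
  \sum_{\ell=1}^K\Abt_\ell^T\bigl(\Abt_\ell\xb^{[t]}-\bbt_\ell\bigr)=(\Gb\Ab)^T\bigl(\Gb\Ab\,\xb^{[t]}-\Gb\bb\bigr)=\frac{K}{q}\,(\Pibold\Ab)^T\bigl(\Pibold\Ab\,\xb^{[t]}-\Pibold\bb\bigr),
\]
not $(\Pibold\Ab)^T(\cdots)$ as written. The paper handles this by explicitly peeling off $(\sqrt{K/q})^2$ at the outset and working with the partial gradients $\gt_i^{[t]}$ of the \emph{unscaled} $\Pibold$-rotated blocks, so that the $K/q$ is carried visibly through the combinatorial sum rather than hidden inside $\Abt_\ell$. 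If you make that factor explicit in your single-pass version the argument is fine; as displayed, the ``collapse'' line silently drops it.
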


\begin{Lemma}
\label{eq_opt_sols}  
The optimal solution of the modified least squares problem $L_{\Gb}(\Ab,\bb;\xb)$, is equal to the optimal solution $\xb_{ls}^{\star}$ of \eqref{x_star_pr_lr}.
\end{Lemma}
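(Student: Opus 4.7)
The plan is to exploit the fact that $\Gb=\sqrt{N/r}\cdot\Pibold$ with $\Pibold\in O_N(\R)$, so $\Gb$ is simply a scaled orthonormal matrix. The key observation is that the modified objective $L_{\Gb}$ and the original objective $L_{ls}$ differ only by a positive multiplicative constant, from which equality of minimizers follows immediately.

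Concretely, I would first expand
\[
L_{\Gb}(\Ab,\bb;\xb)=\|\Gb(\Ab\xb-\bb)\|_2^2=(\Ab\xb-\bb)^T\Gb^T\Gb(\Ab\xb-\bb),
\]
and then use the orthonormality of $\Pibold$ (i.e.\ $\Pibold^T\Pibold=\Ib_N$) to compute $\Gb^T\Gb=\tfrac{N}{r}\Ib_N$. Substituting back yields
\[
L_{\Gb}(\Ab,\bb;\xb)=\tfrac{N}{r}\,\|\Ab\xb-\bb\|_2^2=\tfrac{N}{r}\,L_{ls}(\Ab,\bb;\xb).
\]
Since $N/r>0$, the two objectives are order-equivalent on $\R^d$, so $\arg\min_{\xb}L_{\Gb}(\Ab,\bb;\xb)=\arg\min_{\xb}L_{ls}(\Ab,\bb;\xb)=\xb_{ls}^{\star}$, which is exactly the claim.

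As a sanity check, I would cross-verify via the normal equations: the problem $L_{\Gb}$ has normal equations $\Ab^T\Gb^T\Gb\Ab\,\xb=\Ab^T\Gb^T\Gb\,\bb$, which after substitution $\Gb^T\Gb=\tfrac{N}{r}\Ib_N$ reduce to $\Ab^T\Ab\,\xb=\Ab^T\bb$, the normal equations for \eqref{x_star_pr_lr}; hence the same minimizer $\xb_{ls}^{\star}=\Ab^{\dagger}\bb$ (uniquely when $\Ab$ has full column rank, and otherwise in the sense of the minimum-norm solution). There is essentially no obstacle here; the only subtle point worth flagging is that the argument uses $\Pibold$ orthonormal (so $\Pibold$ is an isometry on $\R^N$) and that $N/r$ is strictly positive, both of which hold by construction in Algorithm \ref{alg_orthog_sketch}.
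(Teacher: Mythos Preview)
Your proposal is correct and follows essentially the same approach as the paper: both arguments reduce $L_{\Gb}$ to $L_{ls}$ by using that $\Pibold$ is orthonormal (an isometry) and that the positive scalar $\sqrt{N/r}$ does not affect the $\arg\min$. The paper's proof is simply a terser version of your first computation, and your normal-equations sanity check is a welcome but inessential addition.
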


To prove Theorem \ref{GC_SGD_thm}, note that $\Ombwt_{[t]}$ corresponds to a uniform random selection of $q$ out of $K$ batches for each $t$; as in SSD, while in our procedure we consider the partial gradients of the $q$ fastest responses. When computing $\nabla_{\xb}L_\Gb(\Ab,\bb;\xb)$, the factor $\Pibold$ is annihilated; and the scaling factor $\sqrt{K/q}$ is squared. Since $\E\left[\gh^{[t]}\right]=\frac{q}{K} g_{ls}^{[t]}$, the estimate $\gh^{[t]}$ is unbiased after an appropriate rescaling; which could be incorporated in the step-size $\xi_t$.

\begin{Rmk}
By Theorem \ref{GC_SGD_thm} and Lemma \ref{eq_opt_sols}, it follows that with a diminishing step-size $\xi_t$ the expected regret of the least squares objective through our GCS converges to zero at a rate of $\ow(1/\sqrt{t}+r/t)$ \cite{DGSX12,Bub15}, and our updates $\xbh^{[t]}$ converge to $\xb_{ls}^{\star}$ in expectation at a rate of $\ow(1/t)$ \cite{RSS12,BWE19}.
\end{Rmk}

\begin{Cor}
\label{eq_SSD_dor}  
Consider the problems \eqref{x_star_pr_lr} and \eqref{mod_pr_G}, which are respectively solved through SD and our iterative sketching based GCS. Assume that the two approaches have the same starting point $\xb^{[0]}$ and index set $\Scal^{[t]}$ at each $t$; and $\xih_t=\frac{K}{q}\xi_t$ the step-sizes used for our scheme. Then, in expectation, our approach through Algorithm \ref{alg_orthog_sketch} has the same update at each step $t$ as SD at the corresponding update, i.e $\E\left[\xbh^{[t]}\right]=\xb^{[t]}$.
\end{Cor}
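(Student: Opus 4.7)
My plan is to prove the identity by induction on the iteration index $t$, with the expectation taken over the random choices of $\Scal^{[0]},\ldots,\Scal^{[t-1]}$ that drive the iterative sketch; the SD trajectory $\xb^{[t]}$ is deterministic once $\xb^{[0]}$ and $\{\xi_t\}$ are fixed. The base case $t=0$ is immediate from the shared starting point $\xbh^{[0]}=\xb^{[0]}$. For the inductive step, I would write the sketched update $\xbh^{[t+1]}=\xbh^{[t]}-\xih_t\,\gh^{[t]}$ and, by linearity of expectation, obtain
\begin{equation*}
  \E\!\left[\xbh^{[t+1]}\right] \;=\; \E\!\left[\xbh^{[t]}\right] \;-\; \xih_t\,\E\!\left[\gh^{[t]}\right].
\end{equation*}

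The heart of the argument is the computation of $\E[\gh^{[t]}]$. Since $\gh^{[t]} = 2\sum_{j\in\Scal^{[t]}}\Abt_j^T(\Abt_j\xbh^{[t]}-\bbt_j)$ is \emph{affine} in $\xbh^{[t]}$, and because $\Scal^{[t]}$ is drawn independently at iteration $t$ (Algorithm \ref{alg_orthog_sketch} freshly samples the block indices), conditioning on the history up to time $t$ and applying Theorem \ref{GC_SGD_thm} at the value $\xbh^{[t]}$ gives
\begin{equation*}
  \E\!\left[\gh^{[t]}\,\big|\,\xbh^{[t]}\right] \;=\; \frac{q}{K}\cdot 2\Ab^T\!\left(\Ab\xbh^{[t]}-\bb\right).
\end{equation*}
Taking the outer expectation, using linearity, and substituting the inductive hypothesis $\E[\xbh^{[t]}]=\xb^{[t]}$ yields $\E[\gh^{[t]}] = (q/K)\cdot 2\Ab^T(\Ab\xb^{[t]}-\bb) = (q/K)\,g_{ls}^{[t]}$.

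Plugging this into the update and substituting the prescribed step-size $\xih_t=(K/q)\,\xi_t$, the two scaling factors cancel exactly:
\begin{equation*}
  \E\!\left[\xbh^{[t+1]}\right] \;=\; \xb^{[t]} \;-\; \tfrac{K}{q}\xi_t\cdot\tfrac{q}{K}\,g_{ls}^{[t]} \;=\; \xb^{[t]}-\xi_t\,g_{ls}^{[t]} \;=\; \xb^{[t+1]},
\end{equation*}
which closes the induction. The only subtle point is the conditional-expectation step: one must justify that the fresh index set $\Scal^{[t]}$ is independent of the randomness already baked into $\xbh^{[t]}$ (inherited from $\Scal^{[0]},\ldots,\Scal^{[t-1]}$). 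Since the algorithm samples the $q$ block indices i.i.d.\ uniformly at each epoch, this independence is immediate, and the remainder of the argument is mechanical bookkeeping; no nontrivial obstacle arises.
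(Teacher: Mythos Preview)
Your proof is correct and follows essentially the same inductive approach as the paper's own proof: base case from the shared starting point, then split the expectation of the update and invoke Theorem~\ref{GC_SGD_thm} to replace $\E[\gh^{[t]}]$ by $(q/K)\,g_{ls}^{[t]}$, after which the scaling factors cancel. If anything, your treatment is slightly more careful than the paper's, since you explicitly condition on $\xbh^{[t]}$ and use the affine dependence of $\gh^{[t]}$ on $\xbh^{[t]}$ together with the fresh independence of $\Scal^{[t]}$ to justify passing the expectation through; the paper simply writes $\E[\gh^{[\tau]}]=(q/K)\,g_{ls}^{[\tau]}$ without making this step explicit.
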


By Lemma \ref{eq_opt_sols} and Corollary \ref{eq_SSD_dor}, the updated parameter estimates $\xbh^{[0]},\xbh^{[1]},\xbh^{[2]},\ldots$ of Algorithm \ref{alg_orthog_sketch} approach the optimal solution $\xb_{ls}^{\star}$ of \eqref{x_star_pr_lr}, by solving the modified regression problem \eqref{mod_pr_G} through SSD. It is also worth noting that the contraction rate of our GCS, in expectation is equal to that of regular SD. This can be shown through an analogous derivation of \cite[Theorem 6]{CPH23b}.

\subsection{Subspace Embedding of Algorithm \ref{alg_orthog_sketch}}
\label{subsp_emb_alg1}

To give a $\ell_2$-s.e. guarantee for Algorithm \ref{alg_orthog_sketch}, we first show that the block leverage scores of $\Pibold\Ab$ are ``\textit{flattened}'', \textit{i.e.} they are all approximately equal. This is precisely what allows us to sample blocks for the construction of $\SbPi$; and in the distributed approach the computations of the partial gradients, \textit{uniformly} at random. Recall that the \textit{leverage scores} of $\Ubt\coloneqq\Pibold\Ub$ are $\ell_i\coloneqq\|\Ubt_{(i)}\|_2^2$ for each $i\in\N_N$, and the \textit{block leverage scores} are defined as $\tilde{\ell}_\iota\coloneqq\|\Ubt_{(\K_\iota)}\|_F^2=\sum_{j\in\K_\iota}\ell_j$ for all $\iota\in\N_K$ \cite{OJXE18,CPH20a}. A lot of work $\ell_2$-s.e. has utilized leverage scores as an importance sampling technique  \cite{DMM06,DMMS11,DMMW12,DM16,Woo14}. By generalizing these to sampling blocks, one can show analogous results (\textit{e.g.} \cite{CPH20a,CPH23b}).

Lemma \ref{bd_block_lvg_Unif} suggests that the \textit{normalized} block leverage scores $\ellg_i=\ellt_i/d$ of $\Ubt$ are approximately uniform for all $\iota$ with high probability. This is the key step to proving that each $\SbPi^{[t]}$ of Algorithm \ref{alg_orthog_sketch}, satisfies \eqref{subsp_emb_id}. We illustrate the flattening of the scores through the random projections considered in this paper in Figure \ref{flattening_t_distr}.

\begin{Lemma}
\label{bd_block_lvg_Unif}
For all $\iota\in\N_K$ and $\K_\iota\subsetneq\N_N$ of size $\tau=N/K$:
\begin{equation}
\label{bl_lvg_unif_flatten}
  \Pr\left[\ellg_\iota<_{N\rho}1/K\right] = \Pr\left[\big|\ellg_\iota-1/K\big|<\tau\rho\right] > 1-\delta,
\end{equation}
  for $\rho\geqslant\sqrt{\log(2\tau/\delta)/2}$.
\end{Lemma}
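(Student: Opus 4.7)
The plan is to establish tight concentration of each individual (unblocked) leverage score $\ell_j$ of $\Ubt = \Pibold\Ub$ around the ``fair share'' value $d/N$, and then extend to the block leverage score $\ellt_\iota$ by summing over the $\tau$ rows in $\K_\iota$.

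First, I would identify the expectation. Because $\Pibold$ is drawn uniformly from a subgroup $\Otil_\Ab\subsetneq O_N(\R)$ whose row-action is transitive (the natural setting for the proposed encrypting projections), the marginal distribution of every row $\Pibold_{(j)}$ is identical across $j$, so $\E[\ell_j]$ is the same for all $j\in\N_N$. Combined with the deterministic identity
$$\sum_{j=1}^{N}\ell_j \;=\; \|\Pibold\Ub\|_F^2 \;=\; \|\Ub\|_F^2 \;=\; d$$
(valid because $\Pibold\in O_N(\R)$ and $\Ub^T\Ub=\Ib_d$), this forces $\E[\ell_j]=d/N$, and hence $\E[\ellt_\iota]=\tau d/N=d/K$ and $\E[\ellg_\iota]=1/K$.

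Next, I would push through a Hoeffding-type tail bound per row. Since $\|\Pibold_{(j)}\|_2=1$ and $\Ub$ has orthonormal columns, each $\ell_j=\|\Pibold_{(j)}\Ub\|_2^2$ lies in $[0,1]$. Treating $\ell_j$ as a bounded function of the random $\Pibold$, an Azuma/Hoeffding argument yields
$$\Pr\bigl[\,|\ell_j-d/N|>\rho\,\bigr]\;\leqslant\;2\exp(-2\rho^2).$$
A union bound across the $\tau$ indices of $\K_\iota$ gives
$$\Pr\Bigl[\,\max_{j\in\K_\iota}|\ell_j-d/N|>\rho\,\Bigr]\;\leqslant\;2\tau\exp(-2\rho^2),$$
and choosing $\rho\geqslant\sqrt{\log(2\tau/\delta)/2}$ pushes the right-hand side below $\delta$. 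On the complementary (high-probability) event, the triangle inequality gives $|\ellt_\iota-d/K|\leqslant\sum_{j\in\K_\iota}|\ell_j-d/N|\leqslant \tau\rho$, and dividing by $d$ furnishes the claimed concentration of $\ellg_\iota$ around $1/K$.

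The main obstacle is the per-row step. Applying Hoeffding verbatim is not legitimate because $\ell_j$ is not literally a sum of independent bounded summands, but a single squared norm that depends jointly on the orthonormality of $\Pibold$. To make this rigorous one has to either invoke an isoperimetric/measure-concentration result on $O_N(\R)$ (restricted to the finite subgroup $\Otil_\Ab$), or expose $\Pibold$ as a deterministic function of enough genuinely independent ingredients (for instance a uniform permutation together with independent signs) so that a bona fide bounded-differences argument applies row-by-row. Once that single-row sub-Gaussian bound is in hand, the union bound over the $\tau$ rows of $\K_\iota$ and the ensuing triangle-inequality sum are entirely routine, and the stated dependence $\rho\geqslant\sqrt{\log(2\tau/\delta)/2}$ follows immediately from matching $2\tau\exp(-2\rho^2)\leqslant\delta$.
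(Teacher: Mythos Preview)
Your approach is essentially the paper's: establish $\E[\ell_j]=d/N$, apply a per-row Hoeffding tail bound, then combine over the $\tau$ rows of $\K_\iota$ (the paper multiplies the complementary probabilities and invokes the binomial approximation rather than your union bound, but the outcome is the same $1-2\tau e^{-2\rho^2}$). Your concern in the final paragraph is misplaced: the paper simply applies Hoeffding's inequality with $m=1$ to the single bounded random variable $\ellb_j=\ell_j/d\in[0,1]$, which is legitimate for \emph{any} $[0,1]$-valued random variable (it is just Hoeffding's lemma plus the Chernoff bound) and requires no independence structure, martingale decomposition, or measure concentration on $O_N(\R)$.
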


\begin{Thm}
\label{subsp_emb_thm_Unif}
Fix $\epsilon>0$ such that $\epsilon\ll1/N$. By Lemma \ref{bd_block_lvg_Unif}, we can then assume that $\ellg_\iota=1/K$ for all $\iota\in\N_K$. Then, $\SbPi$ of Algorithm \ref{alg_orthog_sketch} is a $\ell_2$-s.e. sketching matrix of $\Ab$. Specifically, for $\delta>0$ and $q=\Theta\left(\frac{d}{\tau}\log{(2d/\delta)}/\epsilon^2\right)$:
\begin{equation*}
  \Pr\Big[\big\|\Ib_d-\Ub^T\SbPi^T\SbPi\Ub\big\|_2\leqslant\epsilon\Big]\geqslant 1-\delta.
\end{equation*}
\end{Thm}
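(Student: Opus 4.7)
The plan is to rewrite the left-hand side of the target inequality as a sum of $q$ i.i.d.\ positive semidefinite random matrices and then control its deviation from $\Ib_d$ via a matrix Chernoff bound. Since $\Pibold\in O_N(\R)$, the matrix $\Ubt\coloneqq\Pibold\Ub$ remains an orthonormal basis for $\image(\Pibold\Ab)$, so in particular $\Ubt^T\Ubt=\Ib_d$. Writing $\SbPi=\Ombwt\Pibold$ and unpacking the Kronecker structure $\Ombwt=\Omb\otimes\Ib_\tau$, a direct computation yields
\begin{equation*}
  \Ub^T\SbPi^T\SbPi\Ub \;=\; \Ubt^T\Ombwt^T\Ombwt\Ubt \;=\; \sum_{i=1}^q X_i,\qquad X_i\coloneqq \frac{K}{q}\,\Ubt_{(\K_{j_i})}^T\Ubt_{(\K_{j_i})},
\end{equation*}
where $j_1,\dots,j_q$ are i.i.d.\ uniform on $\N_K$. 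Because $\Pr[j_i=\iota]=1/K$ and $\sum_{\iota=1}^K\Ubt_{(\K_\iota)}^T\Ubt_{(\K_\iota)}=\Ubt^T\Ubt=\Ib_d$, each summand has mean $\E[X_i]=\tfrac{1}{q}\Ib_d$, so the sum is centred at the target matrix $\Ib_d$ of the embedding identity.

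Next, I would use the flattening hypothesis to control each summand in operator norm. By Lemma \ref{bd_block_lvg_Unif} and the standing assumption $\epsilon\ll 1/N$ (which absorbs the flattening slack into $\epsilon$), we may take $\ellt_\iota=d/K$ for every $\iota\in\N_K$. Passing from the Frobenius to the spectral norm then produces the uniform almost-sure bound
\begin{equation*}
  \|X_i\|_2 \;=\; \frac{K}{q}\,\|\Ubt_{(\K_{j_i})}\|_2^2 \;\leqslant\; \frac{K}{q}\,\|\Ubt_{(\K_{j_i})}\|_F^2 \;=\; \frac{K}{q}\cdot\ellt_{j_i} \;=\; \frac{d}{q} \;\eqqcolon\; R,
\end{equation*}
which is the single quantitative input the concentration step needs.

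The final step is to apply Tropp's matrix Chernoff inequality to the i.i.d.\ PSD sum $Y=\sum_{i=1}^q X_i$ with $\E[Y]=\Ib_d$ and $\|X_i\|_2\leqslant R$: there is a universal constant $c>0$ such that
\begin{equation*}
  \Pr\bigl[\,\|Y-\Ib_d\|_2\geqslant\epsilon\,\bigr] \;\leqslant\; 2d\cdot\exp\!\bigl(-c\,\epsilon^2/R\bigr) \;=\; 2d\cdot\exp\!\bigl(-c\,\epsilon^2 q/d\bigr).
\end{equation*}
Setting the right-hand side equal to $\delta$ and solving for $q$ recovers the stated sample complexity, once one converts between the block count $q$ and the row count $r=q\tau$ appearing in the theorem. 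The main obstacle will be translating the \emph{approximate} flattening from Lemma \ref{bd_block_lvg_Unif} into the \emph{exact} identity $\ellt_\iota=d/K$ used above without accruing extra $\epsilon$ loss, which is precisely what the regime $\epsilon\ll 1/N$ is designed to handle; everything else is a textbook application of matrix Chernoff, and a matrix Bernstein variant offers no real gain here since the variance bound $\|\sum_i\E[X_i^2]\|\leqslant d/q$ reproduces the same rate.
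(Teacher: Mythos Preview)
Your approach mirrors the paper's: the paper simply combines the block-score flattening (Lemma~\ref{bd_block_lvg_Unif}) with an external block-leverage-score embedding result (stated here as Theorem~\ref{subsp_emb_thm_lvg}, from \cite{CPH23b}) rather than writing out the concentration argument, and your write-up is a direct unpacking of that black box via matrix Chernoff. In that sense the route is the same.

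There is, however, a quantitative gap in your final step. Solving $2d\exp(-c\epsilon^2 q/d)=\delta$ with $R=d/q$ yields $q=\Theta\bigl(d\log(2d/\delta)/\epsilon^2\bigr)$, which is a factor of $\tau$ larger than the claimed $q=\Theta\bigl(\tfrac{d}{\tau}\log(2d/\delta)/\epsilon^2\bigr)$. Your remark about ``converting between the block count $q$ and the row count $r=q\tau$'' does not close this: the theorem is already stated in terms of $q$, and no such conversion manufactures the missing $1/\tau$. The lossy step is the Frobenius-to-spectral inequality $\|\Ubt_{(\K_{j_i})}\|_2^2\leqslant\|\Ubt_{(\K_{j_i})}\|_F^2=d/K$, which gives $R=d/q$; to match the stated rate one needs sharper per-block spectral control of order $d/(K\tau)=d/N$. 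The paper's parallel analysis for the block-SRHT obtains this by routing through \emph{individual} leverage-score flattening (Lemma~\ref{bd_lvg_Had}) rather than only the block version, and the cited Theorem~\ref{subsp_emb_thm_lvg} must encode something similar. Your dismissal of the Bernstein variant is internally consistent with the bound $R=d/q$ you derived, but the $1/\tau$ improvement lives in tightening that per-block input, not in the choice of concentration inequality --- so either refine the operator-norm step, or state explicitly that the direct argument as written only delivers $q=\Theta\bigl(d\log(2d/\delta)/\epsilon^2\bigr)$.
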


The proof of Lemma \ref{bd_block_lvg_Unif} uses Hoeffding’s inequality to show that the individual leverage scores are flattened, which are then grouped together by an application of the binomial approximation. This is then directly applied to a generalized version of the leverage score sketching matrix which samples blocks instead of individual rows \cite[Theorem 1]{CPH23b}, to prove Theorem \ref{subsp_emb_thm_Unif}.

\begin{figure}[h]
  \centering
    \includegraphics[scale=.2]{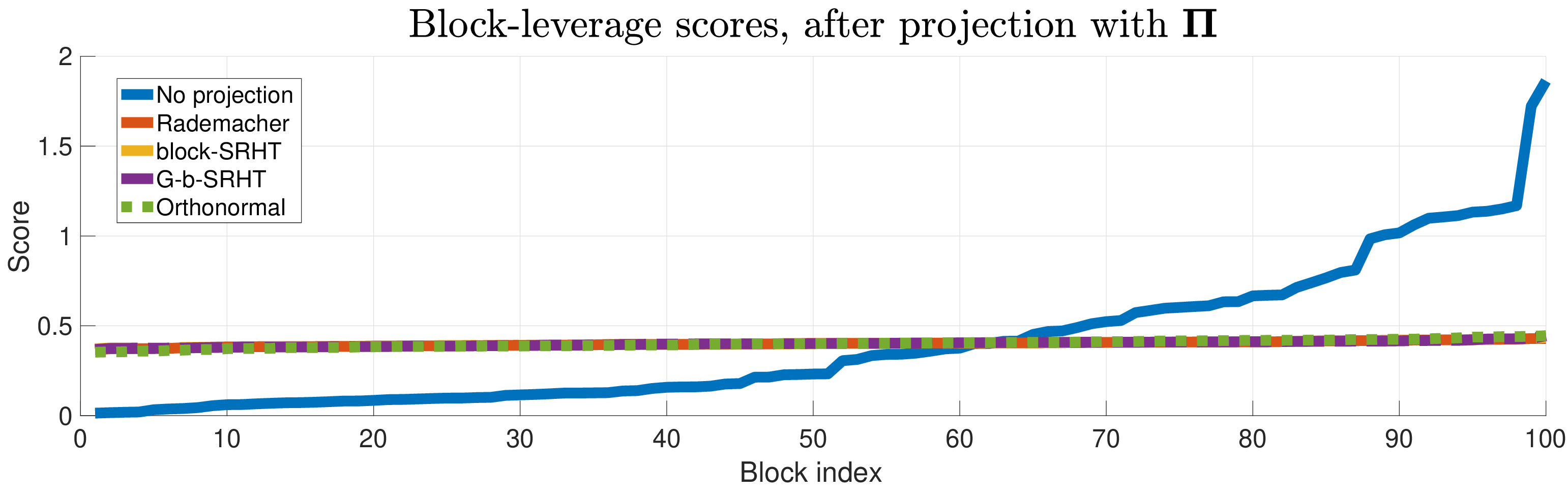}
    \caption{Flattening of block-scores, for $\Ab$ following a $t$-distribution. We abbreviate the garbled block-SRHT to `G-b-SRHT'.}
  \label{flattening_t_distr}
\end{figure}

We note that there is no benefit in considering an overlap across the block batches which are sent to the servers (\textit{e.g.} if a server receives $\big[\Abt_{i-1}^T\ \Abt_i^T\big]^T$ and another receives $\big[\Abt_i^T\ \Abt_{i+1}^T\big]^T$), in terms of sampling. The reason is that since the computations are received uniformly at random, there is still the same chance that $\gh_i$ and $\gh_j$ would be considered, for any $i\neq j$.

Before moving onto the block-SRHT, we show how our scheme compares to other approximate GCSs in terms of the approximation error \eqref{appr_GC_error}, when we consider multiple replications of each encoded block being shared among the servers. The result of Proposition \ref{appr_GC_prop} also applies to other sketching approaches, which satisfy \eqref{subsp_emb_id}.

\begin{Prop}
\label{appr_GC_prop}
By Theorem \ref{subsp_emb_thm_Unif}, $\SbPi$ satisfies \eqref{subsp_emb_id} (w.h.p.). Hence, the approximate gradients $\gh^{[t]}$ of Algorithm \ref{alg_orthog_sketch} satisfy \eqref{appr_GC_error} (w.h.p.), with $\err(\Gb_{(\I)})=\epsilon/\sqrt{K}$.
\end{Prop}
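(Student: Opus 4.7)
The approach is to compute the gradient gap $g^{[t]} - \gh^{[t]}$ in closed form, dualize it into a statement about approximate inner products on a fixed low-dimensional subspace, and then extract that statement from Theorem \ref{subsp_emb_thm_Unif} via a polarization argument. Setting $\bold{r}^{[t]} \coloneqq \Ab\xb^{[t]} - \bb$ and recalling from Subsection \ref{distr_grad_desc} that $\gh^{[t]} = 2(\SbPi^{[t]}\Ab)^T\SbPi^{[t]}(\Ab\xb^{[t]}-\bb)$, I would first write
\[
g^{[t]} - \gh^{[t]} \;=\; 2\Ab^T\bigl(\Ib_N - (\SbPi^{[t]})^T\SbPi^{[t]}\bigr)\bold{r}^{[t]},
\]
and dualize to
\[
\tfrac{1}{2}\bigl\|g^{[t]} - \gh^{[t]}\bigr\|_2 \;=\; \sup_{\vb\in\R^d,\,\|\vb\|_2=1}\Bigl|\langle\Ab\vb,\bold{r}^{[t]}\rangle - \langle\SbPi^{[t]}\Ab\vb,\SbPi^{[t]}\bold{r}^{[t]}\rangle\Bigr|.
\]
Comparing against the target bound \eqref{appr_GC_error}, it suffices to show that this supremum is at most $\epsilon\|\Ab\|_2\|\bold{r}^{[t]}\|_2$, since then matching $2\sqrt{K}\cdot\err(\Gb_{(\I)})=2\epsilon$ immediately yields $\err(\Gb_{(\I)})=\epsilon/\sqrt{K}$.

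Next, I would enlarge the subspace on which the sketch must act as an embedding by exactly one dimension. Both $\Ab\vb$ and $\bold{r}^{[t]}$ lie in $\Hh\coloneqq\mathrm{span}\{\image(\Ub),\bold{r}^{[t]}\}$; since $\bold{r}^{[t]}=\Ab(\xb^{[t]}-\xb_{ls}^{\star})-\bb_\perp$ with $\bb_\perp\coloneqq(\Ib_N-\Ub\Ub^T)\bb$ being iteration-independent, $\Hh$ is in fact a fixed subspace of dimension at most $d+1$. Invoking Theorem \ref{subsp_emb_thm_Unif} on an orthonormal basis of $\Hh$ (i.e., with $d\mapsto d+1$, which only perturbs the sampling complexity $q$ infinitesimally) and, if joint guarantees across $t$ are wanted, a union bound over iterations, ensures that $\SbPi^{[t]}$ is a $(1\pm\epsilon)$ $\ell_2$-s.e.~on $\Hh$ for every $t$ w.h.p.

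The final step is the standard polarization argument, which upgrades norm preservation to inner-product preservation. For unit $\yb_1,\yb_2\in\Hh$, the $\ell_2$-s.e.~property $\|\SbPi^{[t]}(\yb_1\pm\yb_2)\|_2^2\leqslant_\epsilon\|\yb_1\pm\yb_2\|_2^2$ combined with $4\langle\yb_1,\yb_2\rangle=\|\yb_1+\yb_2\|_2^2-\|\yb_1-\yb_2\|_2^2$ yields $|\langle\SbPi^{[t]}\yb_1,\SbPi^{[t]}\yb_2\rangle-\langle\yb_1,\yb_2\rangle|\leqslant\epsilon$, and homogeneity in each argument then upgrades this bound to $\epsilon\|\yb_1\|_2\|\yb_2\|_2$ for arbitrary $\yb_1,\yb_2\in\Hh$. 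Instantiating at $\yb_1=\Ab\vb$ and $\yb_2=\bold{r}^{[t]}$ bounds each term of the supremum by $\epsilon\|\Ab\vb\|_2\|\bold{r}^{[t]}\|_2\leqslant\epsilon\|\Ab\|_2\|\bold{r}^{[t]}\|_2$, and taking the supremum over unit $\vb$ closes the estimate, recovering the proposition.

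The main obstacle is conceptual rather than technical: Theorem \ref{subsp_emb_thm_Unif} is stated for the $d$-dimensional column-space of $\Ab$, whereas the gradient gap forces us to control an inner product against the residual $\bold{r}^{[t]}$, which has a component orthogonal to $\image(\Ub)$ equal to the fixed vector $-\bb_\perp$. The decisive observation that dissolves the obstacle is that the relevant augmented subspace $\Hh$ is iteration-independent and only one dimension larger than $\image(\Ub)$, so neither the sampling rate nor the failure probability incurs any meaningful overhead. Every remaining step is SVD bookkeeping together with the polarization identity.
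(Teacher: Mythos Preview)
Your proposal is correct and takes a genuinely different, more careful route than the paper. The paper's argument is a short submultiplicativity chain: from $\|g^{[t]}-\gh^{[t]}\|_2 = 2\|\Ab^T(\Ib_N-\SbPi^T\SbPi)(\Ab\xb^{[t]}-\bb)\|_2$ it passes to $2\|\Ab\|_2\cdot\|\Ib_N-\SbPi^T\SbPi\|_2\cdot\|\Ab\xb^{[t]}-\bb\|_2$ and then writes the equality $\|\Ib_N-\SbPi^T\SbPi\|_2=\|\Ib_d-\Ub^T\SbPi^T\SbPi\Ub\|_2$ to invoke the embedding bound directly. Your approach instead confronts head-on the fact that the residual $\Ab\xb^{[t]}-\bb$ has the component $-(\Ib_N-\Ub\Ub^T)\bb$ orthogonal to $\image(\Ub)$: you enlarge the embedding subspace by this single fixed direction, apply Theorem~\ref{subsp_emb_thm_Unif} with $d\mapsto d+1$, and use polarization to convert norm preservation into the inner-product control needed for the dualized supremum. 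This buys a fully rigorous treatment of the residual's orthogonal part at the cost of a slightly longer argument and a harmless perturbation of the sampling complexity; the paper's shortcut sidesteps the issue by replacing the ambient $N\times N$ operator norm with its $d$-dimensional compression, a step that is not justified as written and is not valid in general.
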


\section{The Block-SRHT}
\label{block_SRHT_sec}

In this section, we focus on a special case of $\Pibold$ which can be utilized in Algorithm \ref{alg_orthog_sketch}; the \textit{randomized Hadamard transform}. By utilizing this transform we satisfy property (g), and avoid the extra computational cost needed to generate a random orthonormal matrix \cite{JHD21}.

The SRHT is comprised of three matrices: $\Omb\in\R^{r\times N}$ a uniform sampling and rescaling matrix of $r$ rows, $\Hbh_N\in\{\rpm1/\sqrt{N}\}^{N\times N}$ the normalized Hadamard matrix for $N=2^n$, and $\Db\in\{0,\rpm1\}^{N\times N}$ with i.i.d. diagonal Rademacher random entries; \textit{i.e.} it is a signature matrix. The main intuition of the projection is that it expresses the original signal or feature-row in the Walsh-Hadamard basis. Furthermore, $\Hbh_N$ can be applied efficiently due to its structure. As in the case where we transformed the left orthonormal basis and column-space of $\Ab$ by multiplying its columns with a random orthonormal matrix $\Pibold$, in the new basis $\Hbh_N\Db\Ub$; the block leverage scores are close to uniform. Hence, we perform \textit{uniform} sampling through $\Ombwt$ on the blocks of $\Hbh_N\Db\Ab$ to reduce dimension $N$, whilst maintaining the information of $\Ab$.

To exploit the SRHT in distributed GC for linear regression, we generalize it to subsampling blocks instead of rows; of the transformed data matrix, as in Algorithm \ref{alg_orthog_sketch}. We give a $\ell_2$-s.e. guarantee for the block-wise sampling version of SRHT, which characterizes the approximation of our proposed GCS for linear regression.

We refer to this special case as the ``\textit{block-SRHT}'', for which $\Pibh$ is taken from the subset $\hat{H}_N$ of $O_N(\R)$
\begin{equation}
\label{set_Had}  
  \hat{H}_N\coloneqq\left\{\Hbh_N\Db : \Db=\diag(\rpm1)\in\{0,\rpm1\}^{N\times N}\right\},
\end{equation}
where $\Db$ is a random diagonal signature matrix with equiprobable entries of +1 and -1, and $\Hbh_N$ for $N=2^n$ is defined by
$$ \Hb_2 = \begin{pmatrix} 1 & 1 \\ 1 & -1 \end{pmatrix} \qquad \qquad \Hbh_N = \frac{1}{\sqrt{N}}\cdot\Hb_2^{\otimes \log_2(N)}. $$
The SRHT introduced in \cite{DMMS11} corresponds to the case where we select $\tau=1$, \textit{i.e.} $K=N$. Henceforth, we drop the subscript $N$.

The main differences between the SRHT and the proposed block-SRHT $\SbPh$ for $\Pibh\getsU\hat{H}_N$, is the sampling matrix $\Ombwt$; and that $q=r/\tau$ sampling trials take place instead of $r$. The limiting computational step of applying $\SbPh$ in \eqref{x_til_pr_lr} is the multiplication by $\Hbh$. The recursive structure of $\Hbh$ permits us to compute $\SbPh\Ab$ in $\ow(Nd\log N)$ time, through Fourier methods \cite{Osg09}.

\subsection{Subspace Embedding of the Block-SRHT}

To show that $\SbPh$ with $\Pibh\getsU \hat{H}_N$ satisfies \eqref{subsp_emb_id}, we first present a key result, analogous to that of Lemma \ref{bd_block_lvg_Unif}. Considering the orthonormal basis $\Vbh\coloneqq\Hbh\Db\Ub$ of the transformed data $\Hbh\Db\Ab$ with individual leverage scores $\{\ell_i\}_{i=1}^N$, Lemma \ref{bd_block_lvg_Had} suggests that the resulting block leverage scores $\tilde{\ell}_\iota=\|\Vbh_{(\K_\iota)}\|_F^2=\sum_{j\in\K_\iota}\ell_j$ are approximately uniform for all $\iota\in\N_K$. Note that the diagonal entries of $\Db$ is the only place in which randomness takes place other than the sampling. This then allows us to prove Theorem \ref{subsp_emb_thm_SRHT}, our $\ell_2$-s.e. result regarding the block-SRHT.

\begin{Lemma}
\label{bd_block_lvg_Had}
For all $\iota\in\N_K$ and $\K_\iota\subsetneq\N_N$ of size $\tau=N/K$:
\begin{equation*}
\label{bl_lvg_had_flatten}
  \Pr\left[\tilde{\ell}_\iota\leqslant \eta d\cdot\log(Nd/\delta)/K\right]>1-\tau\delta/2,
\end{equation*}
for $0<\eta\leqslant 2+\log(16)/\log(Nd/\delta)$ a constant.
\end{Lemma}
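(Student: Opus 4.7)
The plan is to reduce the block-level bound to a per-row bound on the transformed orthonormal basis $\Vbh=\Hbh\Db\Ub$, then union bound over the $\tau$ rows indexed by $\K_\iota$ and sum. Since $\tilde\ell_\iota=\sum_{j\in\K_\iota}\ell_j$, if I can show that each individual leverage score satisfies $\ell_j\leqslant \eta d\log(Nd/\delta)/N$ with probability at least $1-\delta/2$, then a union bound over $j\in\K_\iota$ (which has cardinality $\tau$) together with the identity $\tau/N=1/K$ immediately yields the claimed bound $\tilde\ell_\iota\leqslant \eta d\log(Nd/\delta)/K$ with probability at least $1-\tau\delta/2$.

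The heart of the argument is the per-row bound, which is essentially the classical SRHT flattening estimate. First I would write the $(i,k)$-entry of $\Vbh$ as the Rademacher sum
\begin{equation*}
  (\Vbh)_{i,k}=\frac{1}{\sqrt{N}}\sum_{j=1}^{N}H_{i,j}\,d_j\,U_{j,k},
\end{equation*}
where $H_{i,j}\in\{\pm1\}$ are the deterministic Hadamard entries (modulo the $1/\sqrt N$ normalisation) and the $d_j$ are i.i.d. Rademacher diagonals of $\Db$. Since $\Ub$ is orthonormal, $\|\Ub^{(k)}\|_2=1$, so by Hoeffding's inequality applied to this bounded-increment Rademacher sum, $|(\Vbh)_{i,k}|\leqslant t$ with probability at least $1-2\exp(-t^2 N/2)$. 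Choosing $t^2=\tfrac{2}{N}\log(4d/\delta)$ and taking a union bound over the $d$ coordinates $k\in\N_d$ gives $\ell_i=\|\Vbh_{(i)}\|_2^2\leqslant d t^2 = \tfrac{2d}{N}\log(4d/\delta)$ with probability at least $1-\delta/2$. To match the form stated in the lemma, I would then verify the algebraic identity $2\log(4d/\delta)=\log(16)+2\log(d/\delta)\leqslant\eta\log(Nd/\delta)$ exactly when $\eta\geqslant 2+\log(16)/\log(Nd/\delta)$, which holds because $\log(Nd/\delta)\geqslant\log(d/\delta)$ whenever $N\geqslant 1$.

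Having established the per-row concentration with constant $\eta$ as above, the final step is a straightforward union bound over the $\tau$ indices in $\K_\iota$: the event that every $\ell_j$ with $j\in\K_\iota$ satisfies the bound has probability at least $1-\tau\cdot(\delta/2)$, and on this event $\tilde\ell_\iota\leqslant \tau\cdot\eta d\log(Nd/\delta)/N = \eta d\log(Nd/\delta)/K$. The only real obstacle is bookkeeping the Hoeffding constants so that the factor of $16$ inside the logarithm absorbs cleanly into the $\log(16)/\log(Nd/\delta)$ correction in the definition of $\eta$; everything else is structural and mirrors the strategy used in the proof of Lemma \ref{bd_block_lvg_Unif}, with the role of the random orthonormal projection $\Pibold$ replaced by the randomised Hadamard transform $\Hbh\Db$.
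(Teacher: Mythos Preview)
Your proposal is correct and follows essentially the same route as the paper: Hoeffding-type concentration on the Rademacher sum $(\Hbh\Db\Ub)_{i,k}$, a union bound over the $d$ columns to control each row's leverage score $\ell_j$, and then aggregation over the $\tau$ rows in $\K_\iota$. The only cosmetic differences are that the paper packages the per-row estimate as two separate lemmas (Lemma~\ref{bd_lvg_Had}, via the Flattening Lemma~\ref{fl_lem}, which additionally unions over all $N$ rows and hence carries the extra $N$ inside the logarithm) and then uses the binomial approximation $(1-\delta/2)^\tau\approx 1-\tau\delta/2$ in place of your direct union bound --- your version is in fact the cleaner of the two.
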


\begin{Thm}
\label{subsp_emb_thm_SRHT}
The block-SRHT $\SbPh$ is a $\ell_2$-s.e. sketching matrix of $\Ab$. For $\delta>0$ and $q=\Theta\big(\frac{d}{\tau}\log(Nd/\delta)\cdot\log(2d/\delta)/\epsilon^2\big)$:
$$ \Pr\Big[\big\|\Ib_d-\Ub^T\SbPh^T\SbPh\Ub\big\|_2\leqslant\epsilon\Big]\geqslant 1-\delta . $$
\end{Thm}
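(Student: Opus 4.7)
The strategy is to reproduce the argument used for Theorem~\ref{subsp_emb_thm_Unif} with Lemma~\ref{bd_block_lvg_Had} in place of Lemma~\ref{bd_block_lvg_Unif}. First note that $\Hbh\Db\in O_N(\R)$, so $\Vbh\coloneqq\Hbh\Db\Ub$ is an orthonormal basis for $\image(\Hbh\Db\Ab)$ with $\Vbh^T\Vbh=\Ib_d$. Writing $\SbPh=\Ombwt\cdot(\Hbh\Db)$ and expanding the block-sampling matrix, one obtains
\begin{equation*}
  \Ub^T\SbPh^T\SbPh\Ub \;=\; \Vbh^T\Ombwt^T\Ombwt\Vbh \;=\; \sum_{i=1}^q X_i, \qquad X_i \coloneqq \tfrac{K}{q}\Vbh_{(\K_{j_i})}^T\Vbh_{(\K_{j_i})},
\end{equation*}
with $j_1,\ldots,j_q\getsU\N_K$ i.i.d., each $X_i$ a PSD matrix of rank at most $\tau$. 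Since the $K$ block contributions sum to $\Vbh^T\Vbh=\Ib_d$, the per-sample expectation is $\E[X_i]=\Ib_d/q$, so $\E\!\left[\sum_i X_i\right]=\Ib_d$ and the theorem reduces to an operator-norm concentration statement for this i.i.d.\ PSD sum about its mean.

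Next I would invoke Lemma~\ref{bd_block_lvg_Had}: up to a retuning of $\delta$, a union bound over $\iota\in\N_K$ ensures that with probability at least $1-\delta/2$ every block leverage score of $\Vbh$ satisfies the flattening bound $\tilde{\ell}_\iota \leq \eta d\log(Nd/\delta)/K$. On this event each summand admits the deterministic spectral bound
\begin{equation*}
  \|X_i\|_2 \;\leq\; \tfrac{K}{q}\,\|\Vbh_{(\K_{j_i})}\|_F^2 \;=\; \tfrac{K}{q}\,\tilde{\ell}_{j_i} \;\leq\; \tfrac{\eta d\log(Nd/\delta)}{q} \;=:\; R.
\end{equation*}
The matrix Chernoff inequality for i.i.d.\ PSD matrices, applied with $\mu_{\min}=\mu_{\max}=1$, yields $\Pr\!\left[\|\Ib_d-\sum_i X_i\|_2\geq\epsilon\right]\leq 2d\exp(-\epsilon^2/(3R))$; forcing this to be at most $\delta/2$ and solving for $q$ gives the stated sampling complexity, and a final union bound against the flattening event delivers the $1-\delta$ guarantee. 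Equivalently, the block-sampling leverage-score embedding result used in the proof of Theorem~\ref{subsp_emb_thm_Unif} can be applied as a black box once Lemma~\ref{bd_block_lvg_Had} is in hand.

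The main obstacle I anticipate is extracting the precise $1/\tau$ factor in the sample complexity. A direct Frobenius-norm bound on $\|X_i\|_2$ as above only yields $q=\Theta(d\log(Nd/\delta)\log(2d/\delta)/\epsilon^2)$, a factor of $\tau$ worse than advertised. Recovering the $1/\tau$ saving requires exploiting that each $X_i$ has rank at most $\tau$ and that, after the randomized Walsh--Hadamard rotation, the rows within a block are approximately orthogonal, so that $\|\Vbh_{(\K_\iota)}\|_2^2 \lesssim \tilde{\ell}_\iota/\tau$. Either an intrinsic-dimension refinement of matrix Chernoff, or a reduction to the standard row-sampling SRHT analysis applied to the $r=q\tau$ individual rows delivered by the $q$ sampled blocks, should close this gap without disturbing the rest of the plan.
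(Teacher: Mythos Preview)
Your outline matches the paper's proof in structure---write $\Ib_d-\Ub^T\SbPh^T\SbPh\Ub$ as an average of centered i.i.d.\ matrices built from the sampled blocks of $\Vbh=\Hbh\Db\Ub$ and apply a matrix concentration inequality---but the paper closes the $1/\tau$ gap by a different mechanism than either of your two proposed fixes. It does \emph{not} use the PSD matrix Chernoff bound with a single norm parameter $R$; it applies a Bernstein-type inequality (Theorem~\ref{matr_Chern}, which carries both a uniform norm bound $\gamma$ and a variance proxy $\sigma^2$) to the centered matrices $\Xb_i=\Ib_d-K\,\Vbh_{(\K^i)}^T\Vbh_{(\K^i)}$. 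The norm bound comes out as $\gamma=1+\eta d\log(Nd/\delta)$, essentially your $qR$, and indeed carries no $\tau$-saving. The $1/\tau$ enters through the variance: expanding $\E[\Xb^T\Xb]$ the block structure collapses to a sum over all $N$ individual rows,
\[
\E[\Xb^T\Xb]+\Ib_d \;=\; K\sum_{l=1}^N \|\Vbh_{(l)}\|_2^2\,\Vbh_{(l)}^T\Vbh_{(l)},
\]
and now the \emph{per-row} flattening bound $\|\Vbh_{(l)}\|_2^2\leq\alpha=\eta d\log(Nd/\delta)/N$ (Lemma~\ref{bd_lvg_Had}, not the block Lemma~\ref{bd_block_lvg_Had}) yields $\sigma^2\leq K\alpha+1=\eta(d/\tau)\log(Nd/\delta)+1$. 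Since $\sigma^2$ dominates the Bernstein denominator $\sigma^2+\gamma\epsilon/3$, the stated sample complexity follows.

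So the missing ingredient is neither an intrinsic-dimension refinement nor an argument that within-block rows are approximately orthogonal after the Hadamard rotation; it is simply to switch from the one-parameter Chernoff bound to the two-parameter Bernstein form and to compute the variance at the row level rather than stopping at the block Frobenius bound. Your second speculation---reducing to the $r=q\tau$ individual rows---is morally right: the variance computation is precisely where the block sum unfolds into a row sum, and the row-level leverage flattening is what buys the factor of $1/\tau$.
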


Compared to Theorem \ref{subsp_emb_thm_Unif}, the above theorem has an additional logarithmic dependence on $N$. This is a consequence of applying the union bound, in order to show that the leverage scores of $\Hbh\Db\Ub$ are flattened (Lemma \ref{cor_fl_lem}). In the proof of Lemma \ref{bd_block_lvg_Unif}, we instead applied Hoeffding’s inequality, which removes such conditioning. Since Lemma \ref{bd_block_lvg_Unif} also holds for the block-SRHT, $\SbPh$ also satisfies the $\ell_2$-s.e. guarantee stated in Theorem \ref{subsp_emb_thm_Unif}.

In Subsection \ref{security_sec_SRHT} we alter the transformation $\Hbh\Db$ by permuting its rows. While the above $\ell_2$-s.e. result remains intact, under a mild but necessary assumption, this transformation now also guarantees computational security.

\subsection{Recursive Kronecker Products of Orthonormal Matrices}

One can consider more general sets of matrices to sample from, while still benefiting from the recursive structure leveraged in Fourier methods. For a fixed `base dimension' of $k\in\Z_{>2}$, let $\Pibold_k\in O_k(\R)$, and define $\Piba=\Pibold_k^{\otimes \ceil{\log_k(N)}}$. Performing the multiplication $\Piba\Ab$ now takes $\ow(Ndk^2\log_k N)$ time.

In the case where $k=2$, up to a permutation of the rows and columns; we have $O_2(\R)=\{\Ib_2,\Hbh_2\}$, which is limiting compared to $O_k(\R)$ for $k\geqslant3$. This allows more flexibility, as more `base matrices' $\Pibold_k\in O_k(\R)$ can be considered, and the security can therefore be improved, as now we do not rely only on applying a random permutation to $\Pibh$ (discussed in \ref{security_sec_SRHT}).

\section{Optimal Step-Size and Adaptive GC}
\label{opt_ss_sec}

Recently, \textit{adaptive gradient coding} (AGC) was proposed in \cite{CT22}. The objective is to adaptively design an exact GCS without prior information about the behavior of potential persistent stragglers, in order to gradually minimize the communication load. This though comes at the cost of further delays due to intermediate designs of GC encoding-decoding pairs, as well as performing the encoding and decoding steps. Furthermore, the assumptions made in \cite{CT22} are more stringent compared to the ones we have made thus far.

In this section, we further accelerate our procedure, by adaptively selecting a step-size which reduces the total number of iterations required for convergence to the solutions of problems \eqref{x_star_pr_lr}, \eqref{x_til_pr_lr} and \eqref{mod_pr_G}, when SD is carried out. The proposed choice $\xi_t^{\star}$ for the step-size, is based on the latest gradient update of \eqref{x_star_pr_lr} and \eqref{x_til_pr_lr}. To determine $\xi_t^{\star}$, we solve 
\begin{align}
\label{ss_opt_prob}  
  \xi_t^{\star} &= \arg\min_{\xi\in\R_{\geqslant0}}\Big\{\big\|\Ab\xb^{[t+1]}-\bb\big\|_2^2\Big\}\notag\\
  &= \arg\min_{\xi\in\R_{\geqslant0}}\Big\{\big\|\Ab(\xb^{[t]}-\xi\cdot g^{[t]})\bb\big\|_2^2\Big\}
\end{align}
for each $t$. If $\xi_t=0$, we have reached the global optimum.

Since \eqref{ss_opt_prob} has a closed form solution, determining $\xi^{\star}_t$ at each iteration reduces to matrix-vector multiplications. In the distributed setting, this will be determined by the central server once sufficiently many servers have responded at iteration $t$, who will then update $\xb^{[t+1]}$ according to \eqref{par_upd}.

Compared to AGC, this is a more practical model, as we do not design and deploy multiple codes across the network. The authors of \cite{CT22} minimize the communication load of individual communication rounds. In contrast, we reduce the total number of iterations of the SD procedure, which leads to fewer communication rounds. Depending on the application and threshold parameters we pick for the two respective AGC methods, our proposed approach would likely have a lower overall communication load. This of course would also depend on the selected step-size used in the AGC for \cite{CT22}, and termination criterion. Furthermore, we are also flexible in tolerating a different number of stragglers $s$ at each iteration, which was a motivation for the design of AGC schemes.

\begin{Prop}
\label{prop_opt_ss}
  Given the respective gradient $g^{[t]}$ and update $\xb^{[t]}$ of the underlying objective function, the optimal step-size according to \eqref{ss_opt_prob} for $L_{ls}(\Ab,\bb;\xb^{[t]})$, $L_{\Gb}(\Ab,\bb;\xb^{[t]})$ and $L_{\Pibold}(\Ab,\bb;\xb^{[t]}) \coloneqq \big\|\Pibold(\Ab\xb-\bb)\big\|_2^2$, is:
  \begin{equation}
  \label{opt_ss}
    \xi_t^{\star} = \langle\Ab g^{[t]},\Ab\xb^{[t]}-\bb\rangle\big/\big\|\Ab g^{[t]}\big\|_2^2.
  \end{equation}
\end{Prop}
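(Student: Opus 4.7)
The plan is to reduce \eqref{ss_opt_prob} in each of the three cases to a scalar quadratic minimization in $\xi$, then read off the stationary point. The key structural fact I would exploit is that $\Pibold\in O_N(\R)$ gives $\Pibold^T\Pibold=\Ib_N$, and $\Gb=\sqrt{N/r}\,\Pibold$ gives $\Gb^T\Gb=(N/r)\Ib_N$, so the sketching/scaling operator in each objective collapses to a scalar multiple of the identity, which will cancel between the numerator and denominator of the optimal step-size.

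Concretely, for $L_{ls}$ I would substitute $\xb^{[t+1]}=\xb^{[t]}-\xi\,g^{[t]}$ and expand
\begin{equation*}
  \big\|\Ab\xb^{[t+1]}-\bb\big\|_2^2 = \big\|\Ab\xb^{[t]}-\bb\big\|_2^2 - 2\xi\big\langle \Ab g^{[t]},\Ab\xb^{[t]}-\bb\big\rangle + \xi^2\big\|\Ab g^{[t]}\big\|_2^2,
\end{equation*}
a convex parabola in $\xi$ whose (unconstrained) minimizer is exactly \eqref{opt_ss}. For $L_\Pibold$, orthogonality of $\Pibold$ yields $L_\Pibold(\Ab,\bb;\xb)=L_{ls}(\Ab,\bb;\xb)$ pointwise, so both its gradient and its optimal step-size coincide with those of $L_{ls}$. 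For $L_\Gb$, the same expansion applied to $\big\|\Gb(\Ab\xb^{[t+1]}-\bb)\big\|_2^2$ produces a parabola whose linear and quadratic coefficients are rescaled by the common factor $N/r$ coming from $\Gb^T\Gb=(N/r)\Ib_N$; this factor cancels when solving for the stationary point, so \eqref{opt_ss} is recovered verbatim, with $g^{[t]}$ now denoting the corresponding gradient $(2N/r)\Ab^T(\Ab\xb^{[t]}-\bb)$.

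There is no substantive obstacle: each case reduces to a one-variable convex quadratic. The only small point to verify is the $\xi\in\R_{\geqslant 0}$ constraint in \eqref{ss_opt_prob}. In every case $g^{[t]}$ is a positive scalar multiple of $\Ab^T(\Ab\xb^{[t]}-\bb)$, so the numerator $\langle \Ab g^{[t]},\Ab\xb^{[t]}-\bb\rangle$ equals a positive multiple of $\|g^{[t]}\|_2^2\geqslant 0$, making $\xi_t^\star\geqslant 0$ automatic; the boundary case $\xi_t^\star=0$ occurs exactly when $g^{[t]}=\vec{\bold{0}}$, i.e.\ when the current iterate is a stationary point, consistent with the remark preceding \eqref{opt_ss}.
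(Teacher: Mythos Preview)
Your proposal is correct and follows essentially the same route as the paper: expand each objective as a scalar quadratic in $\xi$, locate the stationary point, and observe that the scalar factors coming from $\Pibold^T\Pibold=\Ib_N$ and $\Gb^T\Gb=(N/r)\Ib_N$ cancel between numerator and denominator. Your handling of the nonnegativity constraint---directly verifying that the numerator equals a positive multiple of $\|g^{[t]}\|_2^2$---is slightly more direct than the paper's contradiction argument, and your observation that $L_\Pibold=L_{ls}$ pointwise is a bit cleaner than the paper's substitution, but these are cosmetic differences only.
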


In our distributive stochastic procedure, one could select an adaptive step-size $\xi_{t+1}$ which minimizes $L_{\SbPi^{[t]}}(\Ab,\bb;\xb^{[t]})$; but the induced sketching matrix $\SbPi^{[t]}$ would need to be explicitly determined once $q$ servers have responded. This would result in further computations from the central server. Instead, we propose using the step-size \eqref{opt_ss}, as it is optimal in expectation.

The bottleneck in using $\xi_t^{\star}$, is that it can only be updated once $\gt^{[t]}\coloneqq\nabla_{\xb}L_{\Pibold}(\Ab,\bb;\xb^{[t]})$ has been determined, which causes a delay in updating $\xbh^{[t+1]}$. Even so, we significantly reduce the number of iterations, which is evident through our experiments in Section \ref{exper_sec}. The overall computation of the entire network is therefore also reduced. Furthermore, $\Ab^T\Ab$ and $\Ab^T\bb$ which appear in the expansion of $\xi_t^{\star}$ \eqref{opt_ss} can be computed beforehand, hence $\Ab^T\Ab\xb^{[t]}$ can be calculated by the central server while the servers are carrying out their tasks.

\begin{Cor}
\label{cor_opt_ss}
  Assume that we know the parameter update $\xbh^{[t]}$, and the gradient $\gt^{[t]}$. Over the possible index sets $\Scal^{[t]}$ at iteration $t$, the optimal step-size according to
  \begin{equation}
  \label{opt_exp_ss}
    \arg\min_{\xi\in\R}\Big\{\E\left[\big\|\SbPi^{[t]}\big(\Ab\xbh^{[t+1]}-\bb\big)\big\|_2^2\right]\Big\}
  \end{equation}
  matches $\xi_t^{\star}$ of \eqref{opt_ss}.
\end{Cor}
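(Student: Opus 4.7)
\textbf{Proof proposal for Corollary \ref{cor_opt_ss}.}

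The plan is to reduce the stochastic objective in \eqref{opt_exp_ss} to a deterministic one by exploiting the unbiasedness identity of Lemma \ref{lemma_exp}, and then invoke Proposition \ref{prop_opt_ss} directly. First, I would note that at iteration $t$, both $\xbh^{[t]}$ and the gradient $\gt^{[t]}=\nabla_{\xb}L_{\Pibold}(\Ab,\bb;\xb^{[t]})$ are deterministic quantities (by the hypothesis of the corollary), and the SD update gives $\xbh^{[t+1]}=\xbh^{[t]}-\xi\cdot\gt^{[t]}$, which is therefore deterministic as a function of the scalar $\xi$. In particular, the vector $\yb(\xi)\coloneqq \Ab\xbh^{[t+1]}-\bb = \Ab\xbh^{[t]}-\xi\cdot \Ab\gt^{[t]}-\bb$ does not depend on the sketch $\SbPi^{[t]}$.

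Next, I would expand the expected squared sketched norm as a quadratic form,
\begin{equation*}
  \E\!\left[\big\|\SbPi^{[t]}\yb(\xi)\big\|_2^2\right] \;=\; \yb(\xi)^T\,\E\!\left[(\SbPi^{[t]})^T\SbPi^{[t]}\right]\yb(\xi),
\end{equation*}
and apply Lemma \ref{lemma_exp}, which gives $\E[(\SbPi^{[t]})^T\SbPi^{[t]}]=\Ib_N$. Consequently the expectation collapses to $\|\yb(\xi)\|_2^2=\|\Ab\xbh^{[t]}-\xi\,\Ab\gt^{[t]}-\bb\|_2^2$. At this point, the optimization in \eqref{opt_exp_ss} over $\xi\in\R$ coincides exactly with the deterministic problem \eqref{ss_opt_prob} for the objective $L_{\Pibold}(\Ab,\bb;\xb^{[t]})$ (since the gradient there is precisely $\gt^{[t]}$).

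Finally, I would invoke Proposition \ref{prop_opt_ss} applied to $L_{\Pibold}$, whose closed-form minimizer is $\xi_t^{\star} = \langle\Ab\gt^{[t]},\Ab\xbh^{[t]}-\bb\rangle/\|\Ab\gt^{[t]}\|_2^2$, matching \eqref{opt_ss}. I do not expect any real obstacle: the only subtle point is justifying that the minimization over $\xi$ can be taken outside the expectation, which is immediate because $\yb(\xi)$ is deterministic and the quadratic form is convex in $\xi$ with a unique critical point, so the argmin commutes with the expectation on the outer quadratic.
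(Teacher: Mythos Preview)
Your proposal is correct and follows essentially the same approach as the paper: both reduce the expected sketched objective to the deterministic one via Lemma \ref{lemma_exp} (the paper factors $\SbPi^{[t]}=\Ombwt_{[t]}\Pibold$ and invokes $\E[\Ombwt_{[t]}^T\Ombwt_{[t]}]=\Ib_N$, while you directly use $\E[(\SbPi^{[t]})^T\SbPi^{[t]}]=\Ib_N$, which is equivalent), and then apply Proposition \ref{prop_opt_ss} for $L_{\Pibold}$. Your write-up is in fact slightly more explicit about why the expectation and $\arg\min$ interchange, which is left implicit in the paper's version.
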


Recall that other \textit{fixed coefficient decoding} GCSs exist \cite{RTTD17,CPE17,KKR19,BWE19,SH22}, \textit{i.e.} consider a fixed decoding vector $\abt=\gamma\vec{\bold{1}}$ at each iteration. In these schemes, the idea proposed in this subsection also applies; \textit{i.e.} consider decoding vectors $\abt^{[t]}=\xi_t^{\star}\gamma\vec{\bold{1}}$, in order to reduce the overall number of iterations needed for SD to converge. The optimal step-size $\xi_t^{\star}$ is determined by the central server at each iteration once it receives sufficiently many computations from the computational nodes. This computation replaces the decoding step performed by the central server, which is of high complexity. This of course can also be incorporated in \textit{optimal coefficient decoding} approximate or exact GCSs, though this leads to higher computation required from the central server.

\section{Security of Orthonormal Sketches}
\label{security_sec}

In this section, we discuss the security of the proposed orthonormal-based sketching matrices and the block-SRHT. The idea behind securing the resulting sketches is that there is a large ensemble of orthonormal matrices $\Pibold$ to select from, making it near-impossible for adversaries to discover the inverse transformation.

To give information-theoretic security guarantees, we make some mild but necessary assumptions regarding Algorithm \ref{alg_orthog_sketch} and the data matrix $\Ab$. The message space $\M$ needs to be finite, which $\M$ in our case corresponds to the set of possible orthonormal bases of the column-space of $\Ab$. This is something we do not have control over, and it depends on the application and distribution from which we assume the data is gathered. Therefore, we assume that $\M$ is finite. For this reason, we consider a finite multiplicative subgroup $(\Otil_\Ab,\cdot)$ of $O_N(\R)$ (thus $\Ib_N\in\Otil_\Ab$, and if $\Qb\in\Otil_\Ab$ then $\Qb^T\in\Otil_\Ab$), which contains all potential orthonormal bases of $\Ab$.\footnote{In Appendix \ref{OTP_app}, we give an analogy between our approach and the \textit{one-time pad}.} Recall that $O_N(\R)$ is a regular submanifold of $\GL_N(\R)$. Hence, we can define a distribution on any subset of $O_N(\R)$.

We then let $\M=\Otil_\Ab$, and assume $\Ub_\Ab$ the $N\times N$ orthonormal basis of $\Ab$ is drawn from $\M$ w.r.t. a distribution $D$. For simplicity, we consider $D$ to be the uniform distribution. A simple method of generating a random matrix that follows the uniform distribution on the Stiefel manifold $V_n(\R^n)$ can be found in \cite[Theorem 2.2.1]{Yas12}. Alternatively, one could generate a random Gaussian matrix and then perform Gram–Schmidt to orthonormalize it. Furthermore, an inherent limitation of Shannon secrecy is that $|\K|\geqslant|\M|$.

\begin{Thm}
\label{Shan_secr_thm}
In Algorithm \ref{alg_orthog_sketch}, sample $\Pibold$ uniformly at random from $\Otil_\Ab$. The application of $\Pibold$ to $\Ab$ before partitioning the data, provides Shannon secrecy to $\Ab$ w.r.t. $D$ uniform, for $\K,\M,\Cc$ all equal to $\Otil_\Ab$.
\end{Thm}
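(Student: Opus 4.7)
The plan is to identify the encryption as left-multiplication in the finite group $\Otil_\Ab$, and then exploit the well-known fact that the uniform distribution on a finite group is invariant under group multiplication, in direct analogy with the one-time pad over $(\Z_{2^n},+)$. Concretely, the message is the orthonormal basis $\Ub_\Ab\in\Otil_\Ab$ of $\Ab$, the key is $\Pibold\getsU\Otil_\Ab$, and the ciphertext is $\Enc_{\Pibold}(\Ub_\Ab)\coloneqq\Pibold\cdot\Ub_\Ab$. Since $\Otil_\Ab$ is closed under multiplication and inversion (it is a multiplicative subgroup of $O_N(\R)$), all three of $\M$, $\K$, $\Cc$ are naturally equal to $\Otil_\Ab$, consistent with the standard Shannon-secrecy requirement $|\K|\geqslant|\M|$.

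First I would verify the key algebraic observation: for any fixed $\bar{m}\in\Otil_\Ab$, the map $\Pibold\mapsto\Pibold\cdot\bar{m}$ is a bijection of $\Otil_\Ab$ onto itself, with inverse $\Qb\mapsto\Qb\cdot\bar{m}^T$. Consequently, if $\Pibold\getsU\Otil_\Ab$, then for any $\bar{c}\in\Cc=\Otil_\Ab$,
\begin{equation*}
\Pr_{\Pibold\getsU\Otil_\Ab}\!\left[\Pibold\cdot\bar{m}=\bar{c}\right]
= \Pr_{\Pibold\getsU\Otil_\Ab}\!\left[\Pibold=\bar{c}\cdot\bar{m}^T\right]
= \frac{1}{|\Otil_\Ab|},
\end{equation*}
independently of $\bar{m}$. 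This is precisely the perfect-secrecy identity \eqref{perf_secrecy_id}: for any $m_0,m_1\in\M$,
\begin{equation*}
\Pr_{\Pibold\getsU\K}\!\left[\Enc_{\Pibold}(m_0)=\bar{c}\right]
=\frac{1}{|\Otil_\Ab|}
=\Pr_{\Pibold\getsU\K}\!\left[\Enc_{\Pibold}(m_1)=\bar{c}\right].
\end{equation*}

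Next I would invoke the equivalence of perfect secrecy and Shannon secrecy stated in Definition \ref{Sh_secr}, or equivalently verify the Bayes-rule form directly: letting $D$ be the uniform distribution on $\M$, the marginal ciphertext distribution is $\Pr[\Enc_{\Pibold}(m)=\bar{c}]=\sum_{\bar{m}\in\M}\Pr_{m\sim D}[m=\bar{m}]\cdot(1/|\Otil_\Ab|)=1/|\Otil_\Ab|$, which coincides with the conditional $\Pr[\Enc_{\Pibold}(m)=\bar{c}\mid m=\bar{m}]$. Bayes' rule then yields $\Pr[m=\bar{m}\mid \Enc_{\Pibold}(m)=\bar{c}]=\Pr_{m\sim D}[m=\bar{m}]$, the defining identity of Shannon secrecy.

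The main subtlety, rather than any real obstacle, is bookkeeping about what is actually being encrypted: the ciphertext $\Pibold\Ab$ is not itself in $\Otil_\Ab$, but its informative content relative to an eavesdropper who already knows any invariants of $\Ab$ under orthonormal left action (column space, singular values, right singular vectors) is carried by $\Pibold\Ub_\Ab$, which \emph{does} live in $\Otil_\Ab$. I would make this reduction explicit, noting that the assumption $\Ub_\Ab\in\Otil_\Ab$ together with $\Pibold\getsU\Otil_\Ab$ is exactly what makes the group-theoretic one-time-pad argument go through; this is also why the definition of $\Otil_\Ab$ as a multiplicative subgroup containing all potential bases is needed. The appeal to the one-time-pad analogy announced in the forthcoming Appendix \ref{OTP_app} completes the justification.
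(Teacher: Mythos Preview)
Your proposal is correct and follows essentially the same approach as the paper: both arguments reduce to the group-theoretic one-time-pad observation that $\Pr_{\Pibold\getsU\Otil_\Ab}[\Pibold\cdot\bar{m}=\bar{c}]=\Pr[\Pibold=\bar{c}\cdot\bar{m}^{-1}]=1/|\Otil_\Ab|$, which is exactly the perfect-secrecy identity \eqref{perf_secrecy_id}. Your additional Bayes-rule verification and explicit discussion of the $\Ub_\Ab$ versus $\Ab$ reduction are not in the paper's proof itself but mirror the remarks the paper makes immediately after it.
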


\subsection{Securing the SRHT}
\label{security_sec_SRHT}

Unfortunately, the guarantee of Theorem \ref{Shan_secr_thm} does not apply to the block-SRHT, as in this case it is restrictive to assume that $\Ub_\Ab\in \hat{H}_N$. A simple computation on a specific example also shows that this sketching approach does not provide Shannon secrecy.\footnote{Please check Appendix \ref{SRHT_counter_example} for the details.} For instance, if $\Ub_0=\Ib_2$, $\Ub_1=\Hbh_2$ and the observed transformed basis $\bar{\Cb}$ has two zero entries, then
$$ \Pr_{\Pibold\getsU \hat{H}_2}\left[\Pibold\cdot\Ub_1=\bar{\Cb}\right] > \Pr_{\Pibold\getsU \hat{H}_2}\left[\Pibold\cdot\Ub_0=\bar{\Cb}\right]=0. $$
Furthermore, since $\Hbh$ is a known orthonormal matrix, it is a trivial task to invert this projection and reveal $\Db\Ab$. This shows that the inherent security of the SRHT is relatively weak. Proposition \ref{insecurity_Prop} is proven by constructing a counterexample.

\begin{Prop}
\label{insecurity_Prop}
  The SRHT does not provide Shannon secrecy.
\end{Prop}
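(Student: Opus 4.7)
The plan is to construct an explicit counterexample in dimension $N=2$, as foreshadowed by the hint preceding the statement. Since Shannon secrecy requires
\[ \Pr_{k \getsU \K}[\Enc_k(m_0) = \bar{c}] = \Pr_{k \getsU \K}[\Enc_k(m_1) = \bar{c}] \]
for all $m_0, m_1 \in \M$ and all $\bar{c} \in \Cc$, it suffices to exhibit a single triple $(m_0, m_1, \bar{c})$ where this equality fails.

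First I would fix the key space to be $\K = \hat{H}_2$, of cardinality four (one for each $\Db = \diag(\pm 1, \pm 1)$), and take the encryption map on an orthonormal basis $\Ub \in O_2(\R)$ to be $\Enc_{\Pibold}(\Ub) = \Pibold \Ub = \Hbh_2 \Db \Ub$, with $\Pibold$ drawn uniformly from $\hat{H}_2$. Then I would set $m_0 = \Ib_2$ and $m_1 = \Hbh_2$ and enumerate the four possible outputs in each case. A direct computation shows that every entry of $\Enc_{\Pibold}(\Ib_2) = \Hbh_2 \Db$ equals $\pm 1/\sqrt{2}$, so no realization has a zero entry; by contrast, $\Enc_{\Pibold}(\Hbh_2) = \Hbh_2 \Db \Hbh_2$ attains the value $\pm \Ib_2$ precisely when $\Db = \pm \Ib_2$, each with probability $1/4$.

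Choosing the ciphertext $\bar{\Cb} = \Ib_2$ therefore gives
\[ \Pr_{\Pibold \getsU \hat{H}_2}\big[\Enc_{\Pibold}(\Hbh_2) = \bar{\Cb}\big] = \tfrac{1}{4} \;\neq\; 0 = \Pr_{\Pibold \getsU \hat{H}_2}\big[\Enc_{\Pibold}(\Ib_2) = \bar{\Cb}\big], \]
which contradicts the perfect-secrecy identity \eqref{perf_secrecy_id}. The only technical step is the arithmetic tabulation of the eight products $\Hbh_2 \Db$ and $\Hbh_2 \Db \Hbh_2$, which is entirely mechanical; there is no serious obstacle, since the structural reason for the failure is that $\Hbh$ is publicly known, so an adversary need only resolve the $2^N$-fold sign ambiguity introduced by $\Db$ in order to invert the transformation. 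One should also remark that the argument is insensitive to the chosen distribution $D$ on $\M$ provided it places positive mass on both $\Ib_2$ and $\Hbh_2$, so Shannon secrecy fails under any reasonable modelling of the message space that contains these two bases.
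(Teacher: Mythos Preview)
Your proposal is correct and follows essentially the same approach as the paper: both take $N=2$, $\Ub_0=\Ib_2$, $\Ub_1=\Hbh_2$, and exploit the structural fact that $\Hbh_2\Db$ never has a zero entry while $\Hbh_2\Db\Hbh_2$ always does. The only cosmetic difference is that the paper leaves $\bar{\Cb}$ as an arbitrary matrix with two zero entries, whereas you pick the concrete instance $\bar{\Cb}=\Ib_2$ and compute the exact probability $1/4$; both yield the same contradiction of \eqref{perf_secrecy_id}.
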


To secure the SRHT and the block-SRHT, we randomly permute the rows of $\Hbh$, before applying it to $\Ab$. That is, for $\Pb\in S_N$ where $S_N\subsetneq\{0,1\}^{N\times N}$ is the permutation group on $N\times N$ matrices, we let $\Hbt\coloneqq\Pb\Hbh\in\{\rpm1/\sqrt{N}\}^{N\times N}$, and the new sketching matrix is
\begin{equation}
\label{Sbt_proj}  
  \SbPt = \Ombwt\cdot(\Pb\cdot\Hbh)\cdot\Db =  \Ombwt\cdot\Hbt\cdot\Db = \Ombwt\cdot\Pibt,
\end{equation}
for which our flattening result (Corollary \ref{cor_fl_lem}) still holds. We ``garble'' $\Hbh$ so that the projection applied to $\Ab$ now inherently has more randomness, and allows us to draw from a larger ensemble. Specifically, for a fixed $N$, the block-SRHT has $2^N$ options for $\Pibh=\Hbh\Db$, while for $\Pibt=\Hbt\Db$ there are $2^NN!=\ow\big((2N/e)^N\sqrt{N}\big)$ options for $\Pibt$. Moreover, for
\begin{equation}
\label{set_G_Had}  
  \Ht_N\coloneqq\left\{\Pb\Pibold : \Pb\in S_N \text{ and } \Pibold\in \hat{H}_N\right\}
\end{equation}
the set of all possible ``\textit{garbled Hadamard transforms}'', it follows that $(\Ht_N,\cdot)$ is a finite multiplicative subgroup of $O_N(\R)$. Hence, we can define a distribution on $\Ht_N$. We also get the benefits of permuting $\Hbh$'s columns without explicitly applying a second permutation, through $\Db$.

\begin{figure}[h]
  \centering
    \includegraphics[scale=.2]{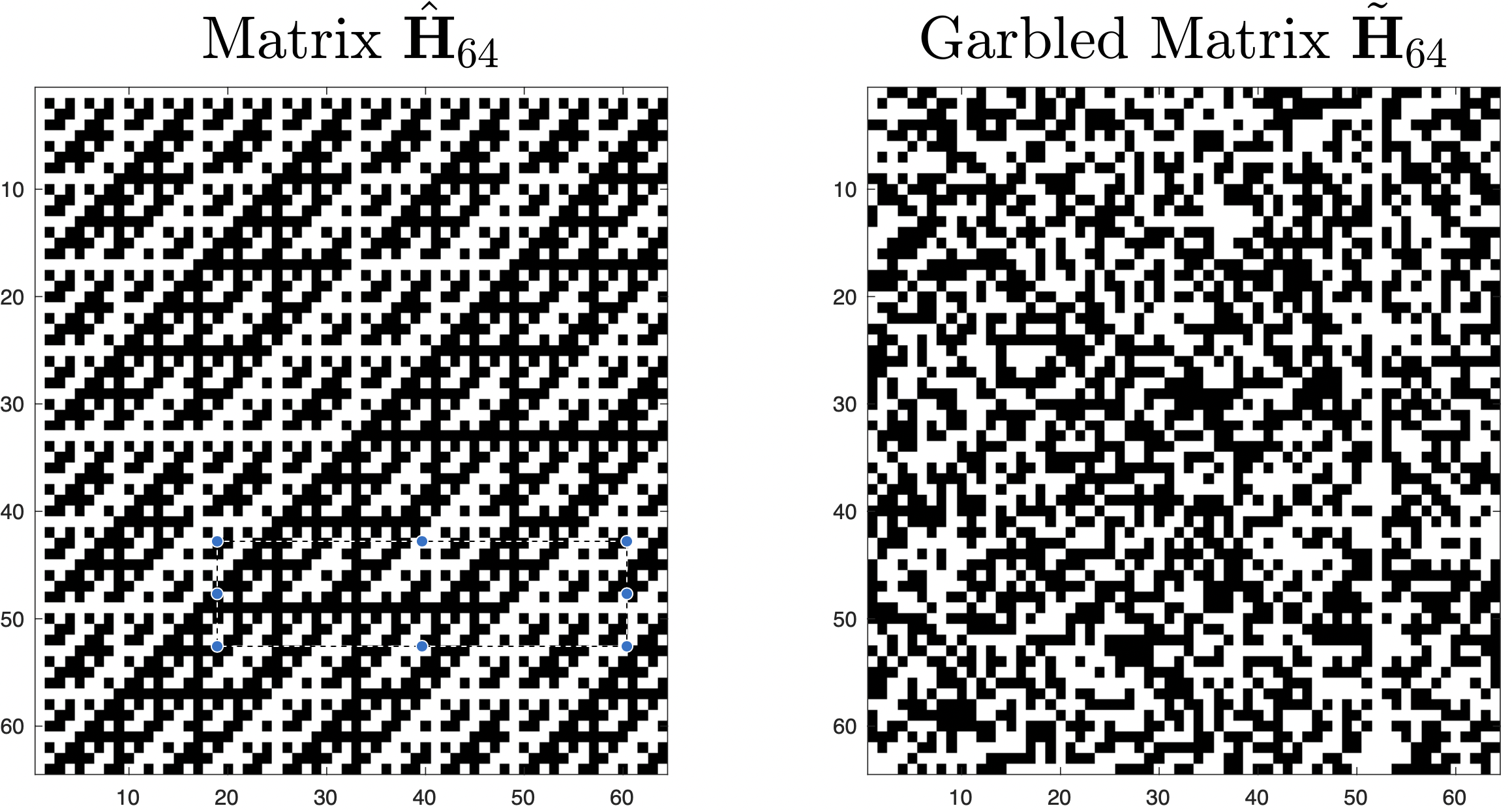}
\caption{Illustration of how $\Pb$ and $\Db$ modify the projection matrix $\Hbh_{64}$.}
  \label{had_transforms_fig}
\end{figure}

By the following Corollary, we deduce that Theorem \ref{subsp_emb_thm_SRHT} also holds for the ``\textit{garbled block-SRHT}'' (an analogous result is used to prove Lemma \ref{bd_block_lvg_Had}). Thus, we can apply any $\Pibt\getsU\Ht_N$ in Algorithm \ref{alg_orthog_sketch}, and get a valid sketch.

\begin{Cor}
\label{cor_fl_lem}
  For $\yb\in\R^N$ a fixed (orthonormal) column vector of $\Ub$, and $\Db\in\{0,\rpm1\}^{N\times N}$ with random equi-probable diagonal entries of $\rpm1$, we have:
  \begin{equation}
  \label{flat_lem_id_tilde}
    \Pr\left[\|\Hbt\Db\cdot\yb\|_\infty> C\sqrt{\log(Nd/\delta)/N}\right]\leqslant\frac{\delta}{2d}
  \end{equation}
  for $0<C\leqslant \sqrt{2+\log(16)/\log(Nd/\delta)}$ a constant.
\end{Cor}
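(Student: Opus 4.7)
\medskip

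\noindent\textbf{Proof proposal for Corollary \ref{cor_fl_lem}.}

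The plan is to reduce the garbled statement to the analogous statement for the standard normalized Hadamard matrix $\Hbh$, and then apply a Rademacher concentration inequality entrywise followed by a union bound. The first observation is that $\Hbt = \Pb\Hbh$ with $\Pb\in S_N$ a permutation matrix; left-multiplication by $\Pb$ simply rearranges the coordinates of a vector and therefore preserves its $\ell_\infty$ norm. Hence $\|\Hbt\Db\yb\|_\infty=\|\Pb\Hbh\Db\yb\|_\infty=\|\Hbh\Db\yb\|_\infty$, and the randomness in $\Pb$ does not enter the bound at all; only the signature matrix $\Db$ remains as a source of randomness.

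Next, I would fix an index $i\in\N_N$ and write the $i^{th}$ coordinate as a Rademacher sum $(\Hbh\Db\yb)_i=\sum_{j=1}^N (\Hbh)_{ij}\,\epsilon_j\,y_j$, where $\epsilon_j\in\{\pm 1\}$ are the i.i.d.\ diagonal entries of $\Db$. Since every entry of $\Hbh$ satisfies $|(\Hbh)_{ij}|=1/\sqrt{N}$ and $\|\yb\|_2=1$ (as $\yb$ is a column of an orthonormal matrix), the coefficient vector $a_j=(\Hbh)_{ij}y_j$ obeys $\sum_j a_j^2 = \tfrac{1}{N}\sum_j y_j^2 = \tfrac{1}{N}$. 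Hoeffding's inequality for Rademacher sums then yields, for any $t>0$,
\begin{equation*}
\Pr\!\left[\big|(\Hbh\Db\yb)_i\big|>t\right]\;\leqslant\;2\exp\!\left(-\tfrac{t^2}{2\sum_j a_j^2}\right)\;=\;2\exp\!\left(-\tfrac{Nt^2}{2}\right).
\end{equation*}

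Setting $t=C\sqrt{\log(Nd/\delta)/N}$, the per-coordinate bound becomes $2(Nd/\delta)^{-C^2/2}$. A union bound over the $N$ coordinates gives $\Pr[\|\Hbh\Db\yb\|_\infty>t]\leqslant 2N(Nd/\delta)^{-C^2/2}$. To arrange $2N(Nd/\delta)^{-C^2/2}\leqslant \delta/(2d)$, take logarithms to obtain the requirement $C^2\geqslant 2+\log(16)/\log(Nd/\delta)$, which matches (up to the direction intended in the statement) the constant range declared in the corollary. Combining with the permutation invariance established in the first paragraph completes the bound for $\Hbt\Db\yb$.

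The work is essentially routine once the reduction to $\Hbh$ is made; the only mildly delicate part is bookkeeping the constant $C$ so that the final failure probability is exactly $\delta/(2d)$ rather than $\delta/d$ or $\delta$. The factor of $2d$ on the right-hand side is chosen so that a subsequent union bound over the $d$ columns of $\Ub$ (used in the proof of Lemma \ref{bd_block_lvg_Had}) produces a clean $\delta$-type guarantee, which is why we absorb the ``$\log 16$'' slack into $C$ rather than into the threshold itself.
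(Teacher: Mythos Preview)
Your proof is correct and follows essentially the same route as the paper: both reduce to the Flattening Lemma (Lemma~\ref{fl_lem}) via the fact that every entry of $\Hbt$ has magnitude $1/\sqrt{N}$, with your permutation-invariance observation for the $\ell_\infty$ norm being a marginally cleaner way to make that reduction than the paper's approach of re-running the Hoeffding argument with $\tilde Z_j=\Hbt_{ij}\Db_{jj}\yb_j$ directly. You are also right that the bookkeeping yields $C^2\geqslant 2+\log(16)/\log(Nd/\delta)$; the stated direction on $C$ in the corollary (and in Lemma~\ref{fl_lem}) is a typo in the paper.
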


Moreover, Corollary \ref{cor_fl_lem} also holds true for random projections $\Rb$ whose entries are rescaled Rademacher random variables, \textit{i.e.} $\Rb_{ij}=\rpm1/\sqrt{N}$ with equal probability. The advantage of this is that we have a larger set of projections
\begin{equation*}
  \Rt_N\coloneqq\Big\{\Rb\in\{\rpm1/\sqrt{N}\}^{N\times N}:\Pr\big[\Rb_{ij}=+1/\sqrt{N}\big]=1/2\Big\}
\end{equation*}
to draw from. This makes it harder for an adversary to determine which projection was applied. Specifically $|\Rt_N|=2^{N^2}$, which is significantly larger than $|\Ht_N|$. Two drawbacks of applying a random Rademacher projection $\Rb$ is that it is much slower than its Hadamard-based counterpart, and the resulting gradients $\gh^{[t]}$ are not unbiased.

Next, we provide a computationally secure guarantee for the garbled block-SRHT $\SbPt=\Ombwt\Pibt$, where $\Pibt\getsU\Ht_N$. The guarantee of Theorem \ref{SRHT_comp_sec_thm} against computationally bounded adversaries, relies heavily on the assumption that \textit{strong pseudorandom permutations} (s-PRPs) and \textit{one-way functions} (OWFs) exist. Through a long line of work, it was shown that s-PRPs exist if and only if OWFs exist. Even though OWFs are minimal cryptographic objects, it is not known whether such functions exist \cite{KL14}. Proving their existence is non-trivial, as this would then imply that $\textsf{P}\neq\textsf{NP}$. In practice however, this is not unreasonable to assume. The proof of Theorem \ref{SRHT_comp_sec_thm} entails a reduction to inverting the s-PRP $\Pb$. In practice, \textit{block ciphers} are used for s-PRPs.

\begin{Thm}
\label{SRHT_comp_sec_thm}
Assume that $\Pb$ is a s-PRP. Then, $\SbPt\Ab$ is computationally secure against polynomial-bounded adversaries, for $\SbPt=\Ombwt\Pibt$ the garbled block-SRHT.
\end{Thm}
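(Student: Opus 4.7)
The plan is a standard reduction: any polynomial-time adversary $\Acal$ that recovers $\Ab$ from the ciphertext $\SbPt\Ab = \Ombwt\Pb\Hbh\Db\Ab$ with non-negligible probability is converted into a polynomial-time distinguisher $\D$ that tells apart the s-PRP $\Pb$ from a truly random permutation $\Pb^{\star}\getsU S_N$, contradicting the s-PRP assumption on $\Pb$.

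Concretely, $\D$ is given oracle access to a permutation $\Qb\in S_N$ that is either $\Pb$ (keyed by a secret seed) or $\Pb^{\star}$, and receives as input the pair $(\Ab,\bb)$. It samples $\Db$ uniformly from the signatures, draws the block-uniform sampler $\Ombwt$ as in Algorithm~\ref{alg_orthog_sketch}, queries its oracle to compute $\Qb\Hbh\Db\Ab$, forms $\Ombwt\Qb\Hbh\Db\Ab$, and feeds this to $\Acal$. It outputs $1$ if $\Acal$ returns $\Ab$ (or any prescribed predicate of $\Ab$ that $\Acal$ is supposed to leak), and $0$ otherwise. All of this runs in polynomial time assuming $\Acal$ does, and the simulation is perfect in the sense that when $\Qb=\Pb$ the input fed to $\Acal$ is distributed exactly as $\SbPt\Ab$.

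The main obstacle, and the heart of the argument, is bounding $\Pr[\D=1\mid\Qb=\Pb^{\star}]$. For this one shows that when $\Qb$ is a truly uniform permutation, the view $\Ombwt\Pb^{\star}\Hbh\Db\Ab$ is information-theoretically close to being independent of $\Ab$: the ensemble $\{\Pb^{\star}\Hbh\Db : \Pb^{\star}\getsU S_N,\, \Db\getsU\text{diag}(\rpm1)\}$ ranges uniformly over the finite multiplicative subgroup $\Ht_N$ defined in \eqref{set_G_Had}, so one may invoke the Shannon-secrecy argument of Theorem~\ref{Shan_secr_thm} verbatim, with key space $\K=\Ht_N$, to conclude that for any fixed $\Ab$ the conditional distribution of $\Pb^{\star}\Hbh\Db\Ab$ is uniform on the orbit of $\Ab$ under $\Ht_N$ (after which $\Ombwt$ only discards information). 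Hence $\Pr[\D=1\mid\Qb=\Pb^{\star}]$ equals the success probability of any adversary trying to recover $\Ab$ from pure noise, which is negligible.

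Combining the two cases gives the distinguishing advantage
\[
\big|\Pr[\D=1\mid\Qb=\Pb]-\Pr[\D=1\mid\Qb=\Pb^{\star}]\big| \;\geqslant\; \Pr[\Acal\text{ succeeds}] - \nu(N),
\]
for some negligible $\nu$. If $\Acal$ succeeds with non-negligible probability, so does $\D$, contradicting the assumption that $\Pb$ is a s-PRP. Therefore no such $\Acal$ exists, and $\SbPt\Ab$ is computationally secure in the sense of Definition~\ref{comp_sec}. The one subtlety to verify is that replacing $\Pb$ by $\Pb^{\star}$ within the composition $\Pb\Hbh\Db$ is a valid use of the s-PRP game: this follows because $\D$ only accesses $\Pb$ through its permutation oracle (a single forward query per invocation suffices to compute $\Pb\Hbh\Db\Ab$ column-by-column), which is exactly the adversarial interface in the s-PRP security definition.
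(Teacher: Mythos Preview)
Your reduction framework is the standard cryptographic approach and is more carefully set up than the paper's own argument, which simply asserts that recovering $\Ab$ from $\Pibt\Ab$ ``means that [the adversary] has revealed $\Pb$'' and hence contradicts the s-PRP assumption. So in spirit your route is the more rigorous one.

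However, there is a genuine gap in the step where you bound $\Pr[\D=1\mid\Qb=\Pb^{\star}]$. You claim that when $\Qb$ is a uniformly random permutation, the view $\Pb^{\star}\Hbh\Db\Ab$ is information-theoretically independent of $\Ab$, and that ``one may invoke the Shannon-secrecy argument of Theorem~\ref{Shan_secr_thm} verbatim, with key space $\K=\Ht_N$.'' This is not correct. Theorem~\ref{Shan_secr_thm} requires $\K=\M=\Cc=\Otil_\Ab$, i.e., the message space (the orthonormal basis of $\Ab$) must itself lie in the same subgroup as the key. The paper explicitly notes in Subsection~\ref{security_sec_SRHT} that ``the guarantee of Theorem~\ref{Shan_secr_thm} does not apply'' here, and the counterexample at the end of Appendix~\ref{SRHT_counter_example} shows that even the garbled block-SRHT (with a truly random permutation) fails perfect secrecy: distinct bases $\Ub_0,\Ub_1$ produce ciphertexts with structurally distinguishable entry patterns regardless of $\Pb$ and $\Db$. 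Hence the view under $\Pb^{\star}$ is \emph{not} independent of $\Ab$, and your bound on $\Pr[\D=1\mid\Qb=\Pb^{\star}]$ does not follow from the argument you give.

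To repair this step you would need a separate argument that, even though some information about $\Ab$ leaks through the orbit under $\Ht_N$, the probability of exact recovery of $\Ab$ is still negligible in $N$; this is plausible given $|\Ht_N|=2^N N!$, but it is not what Theorem~\ref{Shan_secr_thm} says and requires its own justification. A minor additional point: computing $\Qb\Hbh\Db\Ab$ through the permutation oracle requires $N$ forward queries (one per row index), not ``a single forward query per invocation \ldots\ column-by-column''; this is still polynomial and harmless for the reduction, but the phrasing is off.
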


As discussed in \ref{block_SRHT_sec}, the Hadamard matrix satisfies the desired properties (b), (c), (d), (g), while any other form of a discrete Fourier transform would violate (c). By applying $\Pb$ to $\Hbh$, the matrix $\Pb\Hbh$ still satisfies the aforementioned properties, while also incorporating security; \textit{i.e.} property (f). It would be interesting to see if other structured matrices exist which also satisfy (b)-(g). Similar to what we saw with the block-SRHT, if (b) is met; then we can achieve (a) through uniform sampling.

\subsection{Exact Gradient Recovery}
\label{exact_grad_subsec}

In the case where the \textit{exact} gradient is desired, one can use the proposed orthonormal projections to encrypt the data from the servers, while requiring that the computations from \textit{all} servers are received; \textit{i.e.} $r=N$. That is, we instead consider the modified objective function
\begin{equation}
\label{encr_mod_ls_prob}
  L_{\Pibold}(\Ab,\bb;\xb^{[t]}) \coloneqq \|\Pibold(\Ab\xb-\bb)\|_2^2 = \|\Abg\xb-\bbg\|_2^2,
\end{equation}
for which each partial gradient $\gh_i^{[t]}$ is equal to $g_i^{[t]}$, hence $\gh_i^{[t]}=g^{[t]}$ for all $t$. From Theorems \ref{Shan_secr_thm} and \ref{SRHT_comp_sec_thm}, we know that under certain assumptions $\Ab$ is therefore secure. One can utilize $\Pibold$ to encrypt other distributive computations; \textit{e.g.} matrix multiplication or inversion\footnote{We note that the encryption through an arbitrary $\Pibold$ can cost just as much as the computation itself. Selecting $\Pibt\getsU\Ht_N$ though is reasonable.}, and logistic regression, which are discussed in Appendix \ref{orth_encr_distr_tasks}. This resembles a homomorphic encryption scheme, but is by no means fully-homomorphic.

In order to recover the exact gradient while also tolerating stragglers, as with any other GCS; the data needs to be replicated across the network. Straggler resiliency can be achieved by replicating each block $\lceil\frac{K}{m}(s+1)\rceil$ times, and assigning each server $(s+1)$ distinct blocks, which is precisely what is done in \textit{fractional repetition codes} \cite{TLDK17,CMH20}. This idea generalizes to \textit{any} linear regression GCS, by deploying it on the modified problem \eqref{encr_mod_ls_prob}.

Consider any GCS; exact or approximate, \textit{e.g.} \cite{TLDK17,HASH17,OGU19,CMH20,YA18,CT22,RTTD17,CP18,CPE17,WCP19,BWE19,WLS19,KKR19,HYKM19,CHZP18,CPH20a}. If the servers are given partitions of ($\Abg$,$\bbg$), they will have no knowledge of $(\Ab,\bb)$, unless they learn $\Pibold$. By Theorems \ref{Shan_secr_thm} and \ref{SRHT_comp_sec_thm}, this is not a concern. The servers therefore locally recover the partial gradients of the block pairs they were assigned, perform the encoding of the GCS which is being deployed, and communicate it to the central server. After sufficiently many encodings are received; \textit{i.e.} when the threshold recovery of the GCS is met, the central server can then recover the gradient at that iteration.

\section{Experiments}
\label{exper_sec}

We compared our proposed distributed GCSs to analogous approaches where the projection $\Pibold$ is a Gaussian sketch or a Rademacher random matrix. Our approach was found to outperform both these sketching methods in terms of convergence and approximation error, as the resulting gradients through these alternative approaches are biased. In all experiments, the same initialization $\xb^{[0]}$ was selected for each sketching method.

Our approach was also compared to uncoded (regular) SD. Random matrices $\Ab\in\R^{2000\times40}$ with non-uniform block leverage scores were generated for the experiments, and standard Gaussian noise was added to an arbitrary vector from $\image(\Ab)$ to define $\bb$. We considered $K=100$ blocks, thus $\tau=20$. Dimension $N$ was reduced to $r=1000$.

For the experiments in Figure \ref{log_res_err_t_distr} we ran 600 iterations on six different instances for each one, and varied $\xi$ for each experiment by logarithmic factors of the step-size $\xi^\times=2/\sigma_{\max}(\Ab)^2$. The average $\log$ residual errors $\log_{10}\big(\|\xb_{ls}^{\star}-\xbh\|_2\big/\sqrt{N}\big)$ are depicted in Figure \ref{log_res_err_t_distr}. Step-size $\xi^\times$ was considered, as it guarantees descent at each iteration, though it is too conservative.

\begin{figure}[h]
  \centering
    \includegraphics[scale=.2]{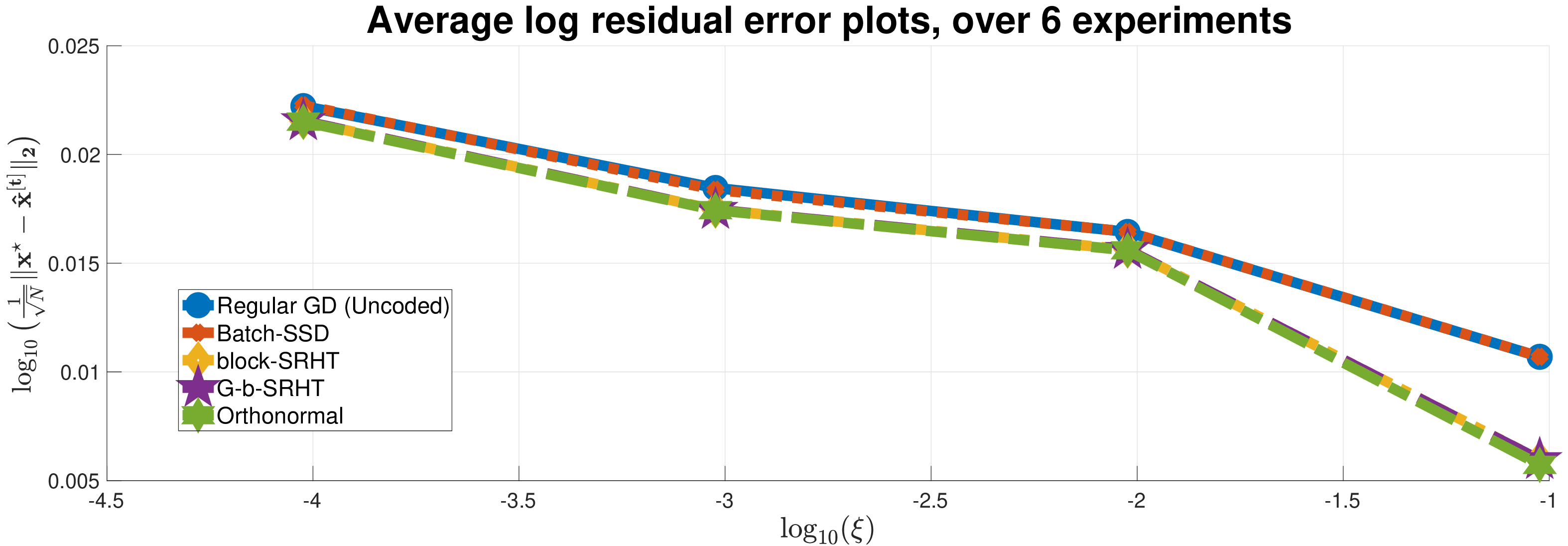}
    \caption{$\log$ residual error, for $\Ab$ following a $t$-distribution.}
  \label{log_res_err_t_distr}
\end{figure}

In contrast to the Gaussian sketch, orthonormal matrices $\Pibold$ also act as preconditioners. One example is the experiment depicted in Figure \ref{precond_figure}, in which the only modification we made from the above experiments, was our initial choice of $\xb^{[0]}$; which was scaled by $1/N$.

\begin{figure}[h]
  \centering
    \includegraphics[scale=.2]{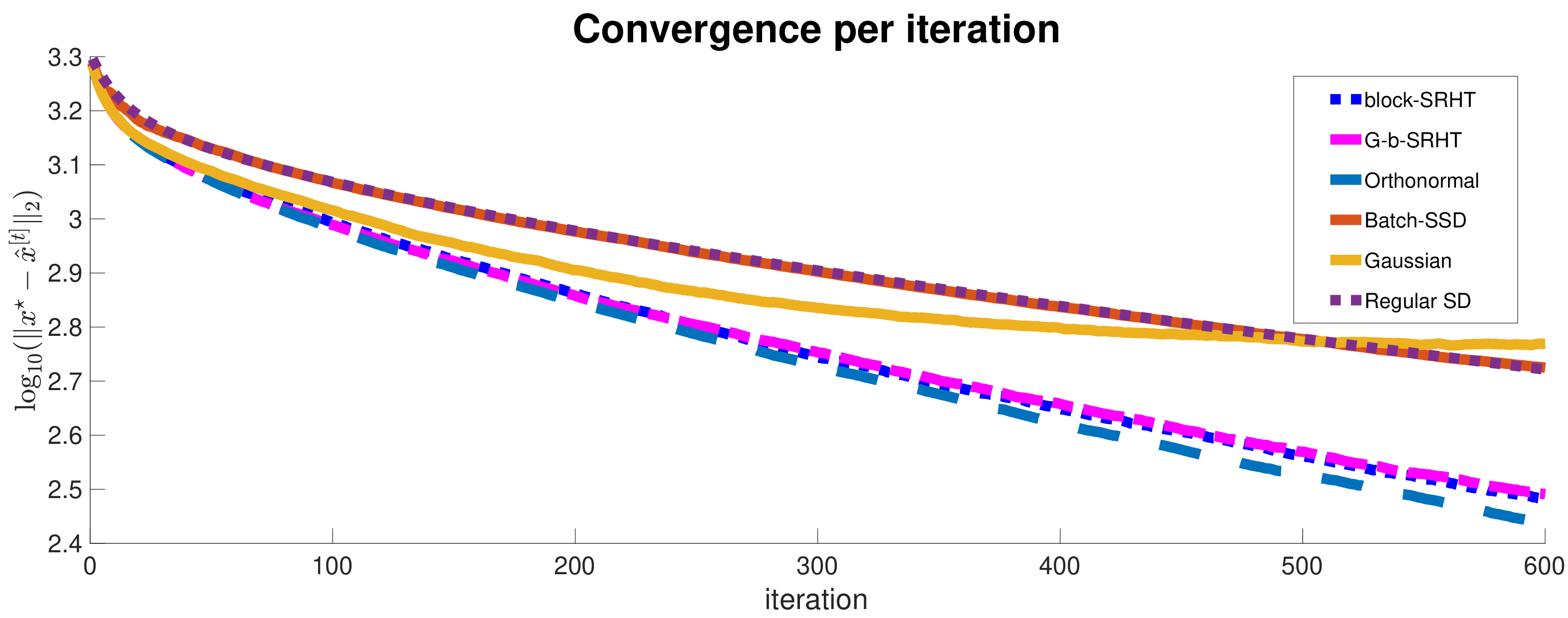}
    \caption{Example where $\Pibold$ also acts as a preconditioner.}
  \label{precond_figure}
\end{figure}

Next, we considered the case where $\xi_t$ was updated according to \eqref{opt_ss}. As above, our sketching approaches outperformed the case where a Gaussian sketch was applied. From Figure \ref{adaptive_GC_t_distr}, our orthonormal sketching approach performs just as well as regular SD for the first 30 iterations, though it slows down afterwards, and is slightly worse than regular SD by the time 50 iterations have been completed. By the discussion in \ref{exact_grad_subsec}, we can achieve the performance of regular SD if we wait until all servers respond; and consider no stragglers, while our security guarantees still hold. This is true also for the block-SRHT and garbled block-SRHT, but not for the Gaussian sketch.

\begin{figure}[h]
  \centering
    \includegraphics[scale=.2]{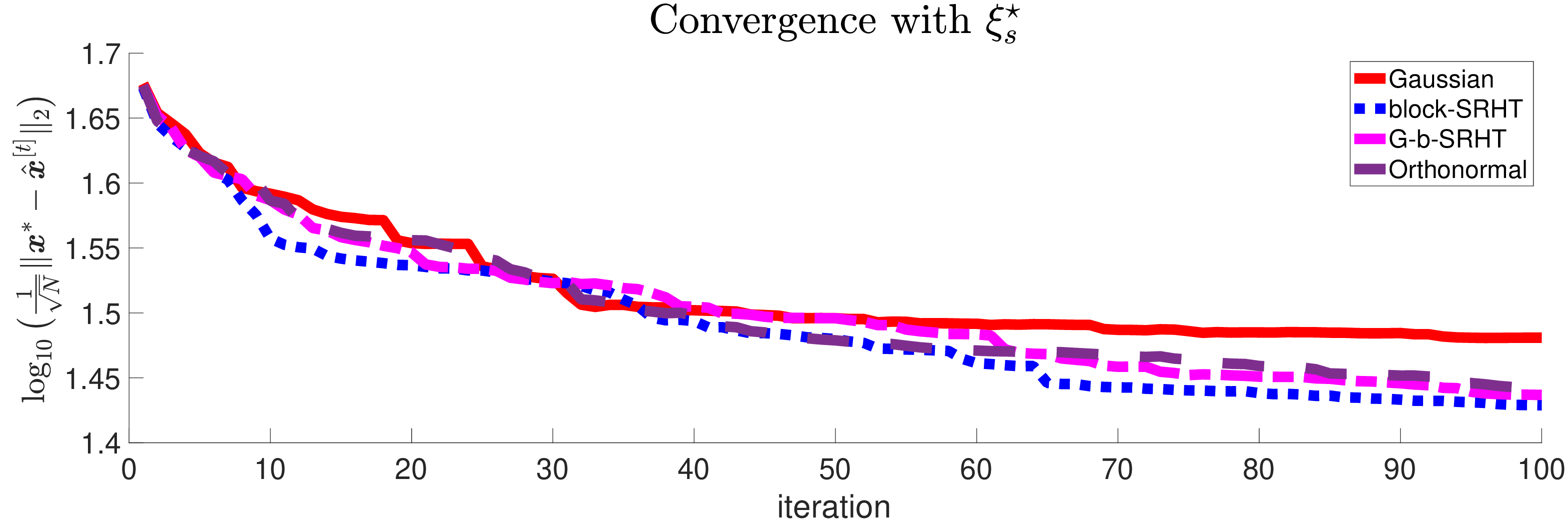}
    \caption{Adaptive step-size update, for $\Ab$ following a $t$-distribution.}
  \label{adaptive_GC_t_distr}
\end{figure}

Lastly, we present an example where it is clear that iterative sketching leads to better convergence than the sketch-and-solve approach. In the experiment depicted in Figure \ref{iter_vs_noniter_fig}, we considered three sketching approaches: the iterative block-SRHT and garbled block-SRHT, and the non-iterative garbled block-SRHT. The step-size was adaptive at each iteration, as was done in the experiment of Figure \ref{adaptive_GC_t_distr}.

\begin{figure}[h]
  \centering
    \includegraphics[scale=.2]{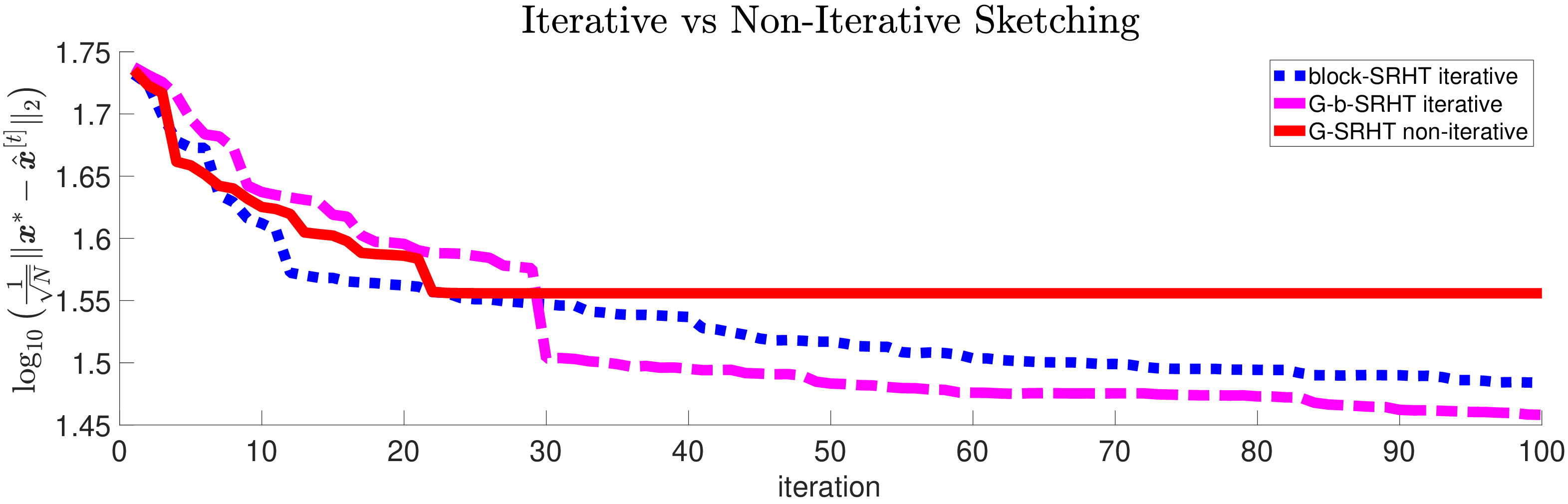}
    \caption{Convergence at each step, for the iterative block-SRHT and garbled block-SRHT, and the non-iterative garbled block-SRHT.}
  \label{iter_vs_noniter_fig}
\end{figure}

We carried out analogous experiments to all of the above; when considering other dense and sparse matrices $\Ab$, with non-uniform block leverage scores. Similar results regarding our approaches were observed, as the ones presented above.

\section{Concluding Remarks and Future Work}
\label{concl_sec}

In this work, we proposed approximately solving a linear system by distributively leveraging iterative sketching and performing first-order SD simultaneously. In doing so, we benefit from both approximate GC and RandNLA. A difference to other RandNLA works is that our sketching matrices sample \textit{blocks} uniformly at random, after applying a random orthonormal projection. An additional benefit is that by considering a large ensemble of orthonormal matrices to pick from, under necessary assumptions, we guarantee information-theoretic security while performing the distributed computations. This approach also enables us to not require encoding and decoding steps at every iteration. We also studied the special case where the projection is the randomized Hadamard transform, and discussed its security limitation. To overcome this, we proposed a modified ``\textit{garbled block-SRHT}'', which guarantees computational security.

We note that applying orthonormal random matrices also secures coded matrix multiplication. There is a benefit when applying a garbled Hadamard transform in this scenario, as the complexity of multiplication resulting from the sketching is less than that of regular multiplication. Also, if such a random projection is used before performing $CR$-multiplication distributively \cite{CT19,CPH20c,RCHV23}, the approximate result will be the same. Moreover, our dimensionality reduction algorithm can be utilized by a single server, to store low-rank approximations of very large data matrices.

\textit{Partial stragglers}, have also been of interest in the GC literature. These are stragglers who are able to send back a portion of their requested tasks. Our work is directly applicable, as we can consider smaller blocks, with multiple ones allocated to each server. This extends on the idea described in \ref{exact_grad_subsec}, where sketching is combined with fractional repetition codes.

There are several interesting directions for future work. We observed experimentally in Figure \ref{precond_figure} that $\Pibold$ and $\Pibh$ may act as preconditioners for SSD. This mere observation requires further investigation. Another direction is to see if the proposed ideas could be applied to federated learning scenarios, in which security and privacy are major concerns. Some of the projections we considered, rely heavily on the recursive structure of $\Hbh$ in order to satisfy (g). One thing we overlooked, is whether other efficient multiplication algorithms (\textit{e.g.} Strassen's \cite{Str69}) could be exploited, in order to construct suitable projections. It would be interesting to see if other structured or sparse matrices exist which also satisfy our desired properties (a)-(g).

There has been a lot of work regarding second-order algorithms with iterative sketching, \textit{e.g.} \cite{PW16,LLDP20}. Utilizing iterative Hessian sketching or sketched Newton's method in CC has been explored in a tangential work \cite{GKCMR20}, though the security aspect of these algorithms has not been extensively studied. A drawback here is that the local computations at the servers would be much larger, though we expect the number of iterations to be significantly reduced; for the same termination criterion to be met, compared to first-order methods. Deeper exploration of the theoretical guarantees of iterative sketched first-order methods, along with a comparison to their second-order counterparts, as well as studying their effect in logistic regression and other applications, are also of potential interest.

\appendices

\section{Orthonormal Encryption for Distributive Tasks}
\label{orth_encr_distr_tasks}

In this appendix, we discuss how applying a random projection $\Pibold$ can be utilized in other existing CC schemes, both approximate and exact, to securely recover other matrix operations. As discussed in \ref{exact_grad_subsec}, the main idea is that after we apply and arbitrary $\Pibold$ to the underlying matrix or matrices, the analysis and conclusions of Theorems \ref{Shan_secr_thm} and \ref{SRHT_comp_sec_thm} still apply. Once the data is encrypted through $\Pibold$, \textit{e.g.} $\Abg=\Pibold\Ab,\bbg=\Pibold\bb$, we can then carry out the CC scheme of choice. We saw that for linear regression GCSs, we will recover the same result as when no encryption took place, without requiring an additional decryption step, and does not increase the system's redundancy. This is also true for \textit{coded matrix multiplication} (CMM) schemes. The drawback of this approach is the additional encryption step, which corresponds to matrix multiplication. Fast matrix multiplication can be used to secure the data \cite{Str69,WXXZ23}, which is faster than computing $\xb_{ls}^{\star}=\Ab^{\dagger}\xb$.

We show how this approach is applied to GCSs for logistic regression through SD, as well as CMM schemes, and an approximate \textit{matrix inversion} CC scheme; which is a non-linear operation \cite{CPH20b}. In the GC and matrix inversion schemes, we utilize the structure of the gradient of the respective objective functions.

The above discussion resembles \textit{Homomorphic Encryption} \cite{Gen09,Gen09a,BGV14}, which allows computations to be performed over encrypted data; and has been used to enable privacy-preserving machine learning. Two main drawbacks of homomorphic encryption though are that it leads to many orders of magnitude slow-down in training, and it allows collusion between a larger number of servers \cite{SGAM19}. Moreover, the privacy guarantees of homomorphic encryption rely on computational assumptions, while we achieved information-theoretic security in Theorem \ref{Shan_secr_thm}. Furthermore, we used (random) orthogonal matrices for encrypting the data, which has been studied in the context of image-processing and message encryption \cite{khan2015hill,AKHK17,AMAK18,Cac14,AID19,reddy2018image}.

\subsection{Securing GCSs for Logistic Regression}
\label{orth_Log_regr}

A popular algorithm whose solution is accelerated though gradient methods is logistic regression, which yields a linear classifier \cite{Mur12}. Applying a random orthonormal matrix can secure the information when GCSs are used to solve logistic regression, though at each iteration the central server will have to apply two encryptions. At each iteration $t$, the servers collectively have $\Abg=\Pibold_1\cdot\Ab$, $\abg_i=\Pibold_2\cdot\big[1\ \ab_i^T\big]^T$ and $\xbg^{[t]}=\Pibold_2\cdot\xb^{[t]}$, for $\Pibold_1\in \Otil_N(\R)$ and $\Pibold_2\in \Otil_{d+1}(\R)$. The gradient update to be recovered is
$$ \gh^{[t+1]}=\Abg^T(\boldsymbol{\mu}-\bb), \ \text{ for } \ \boldsymbol{\mu}_i=\Big(1+\mathrm{exp}\big(-\overbrace{\langle\xbg^{[t]},\abg_i\rangle}^{=\langle\xb^{[t]},[1\ \ab_i^T]\rangle}\big)\Big)^{-1}. $$
Thus, $\gh^{[t+1]}=\Pibold_1^T\cdot g^{[t+1]}$, so we apply $\Pibold_1$ to recover $g^{[t+1]}$. In this problem, the labels $\bb_i\in\{0,1\}$ are not hidden. This is not a concern, as other than basic statistics; nothing meaningful can be inferred from these alone.

\subsection{Securing Coded Matrix Multiplication}
\label{orth_CMM}

Consider the matrices $\Ab_1\in\R^{L\times N}$ and $\Ab_2\in\R^{N\times M}$, whose product is to be computed by a CMM scheme. For $\Pibold\in\Otil_N(\R)$, by carrying out the CMM scheme on $\Abg_1=\Ab_1\cdot\Pibold$ and $\Abg_2=\Pibold\cdot\Ab_2$, we recover
\begin{equation}
  \Abg_1\cdot\Abg_2 = \Ab_1\cdot(\Pibold^T\Pibold)\cdot\Abg_2 = \Ab_1\cdot\Ab_2,
\end{equation}
and security guarantees analogous to Theorems \ref{Shan_secr_thm} and \ref{SRHT_comp_sec_thm} hold. In general, this encryption is useful when $N\ll L,M$, as otherwise the cost of encrypting the two matrices could be higher than that of performing the multiplication.

\subsection{Securing Coded Matrix Inversion}

In \cite{CPH20b} a CC scheme was used to recover an approximation of the inverse of a matrix $\Ab\in\R^{N\times N}$, by requesting from the servers to approximate as part of their computation a subset of the optimization problems
\begin{equation}
  \bbc_i = \arg\min_{\bb\in\R^N} \big\{\|\Ab\bb-\eb_i\|_2^2\big\}
\end{equation}
for each $i\in\N_N$, for $\{\eb_i\}_{i=1}^N$ the standard basis of $\R^N$. The solutions $\{\bbc\}_{i=1}^N$ comprise the columns of the inverse's estimate $\Abc^{-1}$, \textit{i.e.} $\Ab^{-1}\approx\Abc^{-1}=\big[\bbc_1 \ \cdots \ \bbc_N \big]$, as
$$ \Ib_N = \Ab\Ab^{-1} \approx \Ab\Abc^{-1} = \Ab\big[\bbc_1 \ \cdots \ \bbc_N \big] = \big[\Ab\bbc_1 \ \cdots \ \Ab\bbc_N \big]. $$

In this scheme, each server has knowledge of the entire matrix $\Ab$. To utilize the security results of our work, instead of sharing $\Ab$; we distribute $\Abg\coloneqq\Ab\cdot\Pibold^T$, for a randomly chosen $\Pibold\in\Otil_{N}(\R)$. The servers then approximate
\begin{equation}
  \bbbr_i = \arg\min_{\bb\in\R^N} \big\{\|\Abg\bb-\eb_i\|_2^2\big\} ,
\end{equation}
thus $\Abg^{-1}=\Pibold\cdot\Abc^{-1}=\big[\bbbr_1 \ \cdots \ \bbbr_N \big]$. As in the case of logistic regression, here, we also need an additional decryption step: $\Pibold^T\cdot(\Pibold\cdot\Abc^{-1})=\Abc^{-1}$ at the end of the process, to recover the approximation.

\bibliographystyle{IEEEtran}
\bibliography{refs}

\section{Proofs of Section \ref{bl_orth_sk_sec}}

\subsection{Subsection \ref{distr_grad_desc}}

Note that in Lemma \ref{lemma_exp}:
$$ \E\left[\Ombwt_{[t]}^T\Ombwt_{[t]}\right]=\Ib_N \quad \implies \quad \E\left[\Sb_{[t]}^T\Sb_{[t]}\right]=\Ib_N, $$
as
$$ \E\left[\Sb_{[t]}^T\Sb_{[t]}\right]=\Pibold^T\E\left[\Ombwt_{[t]}^T\Ombwt_{[t]}\right]\Pibold=\Pibold^T\Pibold=\Ib_N. $$
We provide both derivations separately in order to convey the respective importance behind the use of the Lemma in subsequent arguments, even though the main idea is the same. Furthermore, the proof of Theorem \ref{GC_SGD_thm} is very similar to that of Lemma \ref{lemma_exp}.

\begin{proof}{[Lemma \ref{lemma_exp}]}
The only difference in $\SbPi^{[t]}$ at each iteration, is $\Scal^{[t]}$ and $\Ombwt_{[t]}$. This corresponds to a uniform random selection of $q$ out of $K$ batches of the data which determine the gradient at iteration $t$ --- all blocks are scaled by the same factor $\sqrt{K/q}$ in $\Ombwt_{[t]}$. Let $\Qcal$ be the set of all subsets of $\N_K$ of size $q$. Then
\begin{align*}
  \E\big[\Sb_{[t]}^T\Sb_{[t]}\big] &= \sum_{\Scal^{[t]}\in\Qcal}\frac{1}{{K\choose q}}\cdot\left(\Sb_{[t]}\cdot\Sb_{[t]}\right)\\
  &= \frac{1}{{K\choose q}}\sum_{\Scal^{[t]}\in\Qcal}\sum_{i\in\Scal^{[t]}}{\left(\sqrt{K/q}\right)^2}\cdot\Pibold_{(\K_i)}^T\Pibold_{(\K_i)}\\
  &= \frac{{K-1\choose q-1}}{{K\choose q}}\sum_{i=1}^K\frac{K}{q}\cdot\Pibold_{(\K_i)}^T\Pibold_{(\K_i)}\\
  &= \frac{{K-1\choose q-1}\cdot\frac{K}{q}}{{K\choose q}}\sum_{i=1}^K\Pibold_{(\K_i)}^T\Pibold_{(\K_i)}\\
  &= \Pibold^T\Pibold\\
  &= \Ib_N
\end{align*}
where ${K-1\choose q-1}$ is the number of sets in $\Qcal$ which include $i$, for each $i\in\N_K$. This completes the first part of the proof.

Note that the sampling and rescaling matrices $\Ombwt_{[t]}$ of Algorithm \ref{alg_orthog_sketch}, may also be expressed as
$$ \Ombwt_{[t]} = \sqrt{K/q}\cdot \sum_{\iota\in\Scal^{[t]}}\Ib_{(\K_\iota)} . $$
Further notice that $\Ombwt_{[t]}$'s corresponding sampling and rescaling matrix of size $N\times N$, which appears in the expansion the objective function \eqref{x_til_pr_lr}, is
\begin{align*}
  \Ombwt_{[t]}^T\Ombwt_{[t]} &= \left(\sqrt{K/q}\right)^2\cdot\sum_{\iota\in\Scal^{[t]}}\big(\Ib_{(\K_\iota)}\big)^T\Ib_{(\K_\iota)}\\
  &= \frac{K}{q}\cdot\sum_{j\in\bigsqcup\limits_{\iota\in\Scal^{[t]}}\K_\iota} \eb_{j}\eb_j^{T}.
\end{align*}

Let $\Bcal$ denote the set of all possible block sampling and rescaling matrices of size $r\times N$, which sample $q$ out of $K$ blocks. For $\Phib\in\Bcal$, by $\Ib_{(\K_\iota)}\subseteq\Phib$ we denote the condition that $\Ib_{(\K_\iota)}$ is a submatrix of $\Phib$. Note that for each $\iota\in\N_K$, there are ${K-1\choose q-1}$ matrices in $\Bcal$ which have $\Ib_{(\K_\iota)}$ as a submatrix. For our setup, we then have
\begin{align*}
  \E\left[\Ombwt_{[t]}^T\Ombwt_{[t]}\right] &= \sum_{\Phib\in\Bcal}\frac{1}{{K\choose q}}\cdot\left(\Phib^T\Phib\right)\\
  &= \frac{{\left(\sqrt{K/q}\right)^2}}{{K\choose q}}\cdot\sum_{\Phib\in\Bcal}\sum_{\Ib_{(\K_\iota)}\subseteq\Phib}\big(\Ib_{(\K_\iota)}\big)^T\Ib_{(\K_\iota)}\\
  &= \frac{{K-1\choose q-1}\cdot(K/q)}{{K\choose q}}\cdot\sum_{\iota\in\N_K}\big(\Ib_{(\K_\iota)}\big)^T\Ib_{(\K_\iota)}\\
  &= \sum_{\iota\in\N_K}\big(\Ib_{(\K_\iota)}\big)^T\Ib_{(\K_\iota)}\\
  &= \sum_{j\in\N_N}\eb_j^T\eb_j\\
  &= \Ib_N
\end{align*}
and the proof is complete.

\end{proof}

\begin{proof}{[Theorem \ref{GC_SGD_thm}]}
The only difference in $\SbPi^{[t]}$ at each iteration, is $\Scal^{[t]}$ and $\Ombwt_{[t]}$. This corresponds to a uniform random selection of $q$ out of $K$ batches of the data which determine the gradient at iteration $t$ --- all blocks are scaled by the same factor $\sqrt{K/q}$ in $\Ombwt_{[t]}$. By \eqref{gr_update}, the gradient update is equal to that of a batch stochastic steepest descent procedure.

We break up the proof of the second statement by first showing that $\E\left[\gh^{[t]}\right]=\gt^{[t]}$; for $\gt$ the gradient in the basis $\Pibold\Ub$, and then showing that $\E\left[\gt^{[t]}\right]=\frac{q}{K}\cdot g_{ls}^{[t]}$.

Let $\Qcal$ be the set of all subsets of $\N_K$ of size $q$, $\gh_{\Scal^{[t]}}$ the gradient determined by the index set $\Scal^{[t]}$, and $\gt_i^{[t]}$ the respective partial gradients at iteration $t$. Then
\begin{align*}
  \E\left[\gh^{[t]}\right] &= \sum_{\Scal^{[t]}\in\Qcal}\frac{1}{{K\choose q}}\cdot\gh_{\Scal^{[t]}}\\
  &= \frac{1}{{K\choose q}}\sum_{\Scal^{[t]}\in\Qcal}\sum_{i\in\Scal^{[t]}}{\left(\sqrt{K/q}\right)^2}\cdot\gt_i^{[t]}\\
  &= \frac{{K-1\choose q-1}}{{K\choose q}}\sum_{i=1}^K\frac{K}{q}\cdot\gt_i^{[t]}\\
  &= \sum_{i=1}^K\gt_i^{[t]}\\
  &= \gt^{[t]}
\end{align*}
where ${K-1\choose q-1}$ is the number of sets in $\Qcal$ which include $i$, for each $i\in\N_K$.

We denote the resulting partial gradient on the sampled index set $\Scal^{[t]}$ of the gradient on \eqref{x_star_pr_lr} at iteration $t$; \textit{i.e.} $g_{ls}^{[t]}$, by $g_{\Scal^{[t]}}$, and the individual partial gradients by $g_i^{[t]}$. Using the same notation as above, we get that
\begin{align*}
  \E\left[\gt^{[t]}\right] &= \sum_{\Scal^{[t]}\in\Qcal}\frac{1}{{K\choose q}}\cdot g_{\Scal^{[t]}}\\
  &= \frac{1}{{K\choose q}}\sum_{\Scal^{[t]}\in\Qcal}\sum_{i\in\Scal^{[t]}}g_i^{[t]}\\
  &= \frac{{K-1\choose q-1}}{{K\choose q}}\sum_{i=1}^Kg_i^{[t]}\\
  &= \frac{q}{K}\cdot\sum_{i=1}^K\gt_i^{[t]}\\
  &= \frac{q}{K}\cdot g^{[t]}
\end{align*}
which completes the proof.
\end{proof}

\begin{proof}{[Lemma \ref{eq_opt_sols}]}
Since $\Pibold$ is an orthonormal matrix, the solution of the least squares problem with the objective $L_{\Gb}(\Ab,\bb;\xb)$ is equal to the optimal solution \eqref{x_star_pr_lr}, as
\begin{align*}
  \xbh &= \arg\min_{\xb\in\R^d}\|\Gb(\Ab\xb-\bb)\|_2^2 \\
  &= \arg\min_{\xb\in\R^d}\|\Pibold(\Ab\xb-\bb)\|_2^2 \\
  &= \arg\min_{\xb\in\R^d}\|\Ab\xb-\bb\|_2^2\\
  &= \xb_{ls}^{\star}.
\end{align*}
\end{proof}

\begin{proof}{[Corollary \ref{eq_SSD_dor}]}
We prove this by induction. From our assumptions we have a fixed starting point $\xb^{[0]}$, for which $\xbh^{[0]}=\xb^{[0]}$. Our base case is therefore $\E[\xbh^{[0]}]=\E[\xb^{[0]}]=\xb^{[0]}$. For the inductive hypothesis, we assume that $\E[\xbh^{[\tau]}]=\xb^{[\tau]}$ for $\tau\in\N$.

It then follows that at step $\tau+1$ we have
\begin{align*}
  \E\big[\xbh^{[\tau+1]}\big] &= \E\big[\xbh^{[\tau]}-\xih_{\tau}\cdot\gh^{[\tau]}\big]\\
  &= \E\big[\xbh^{[\tau]}\big]-\frac{K}{q}\cdot\xi_\tau\cdot\E\big[\gh^{[\tau]}\big]\\
  &= \xb^{[\tau]}-\frac{q}{K}\cdot\left(\frac{K}{q}\cdot\xi_\tau\right)\cdot g_{ls}^{[\tau]}\\
  &= \xb^{[\tau]}-\xi_\tau\cdot g_{ls}^{[\tau]}\\
  &= \xb^{[\tau+1]}
\end{align*}
which completes the inductive step.
\end{proof}

\subsection{Subsection \ref{subsp_emb_alg1}}

In this appendix, we provide the proofs of Lemma \ref{bd_block_lvg_Unif} and Theorem \ref{subsp_emb_thm_Unif}. First, we need to provide Lemmas \ref{Lemma_exp_lev_i} and \ref{norm_lvg_bd_lemma}, and Hoeffding's inequality; which we use to prove the latter Lemma. Throughout this subsection, by $\ell_i$ we denote the $i^{th}$ leverage score of $\Pibold\Ab$ for $\Pibold$ a random orthonormal matrix, \textit{i.e.}
\begin{equation}
\label{lvg_sc_expr}
  \ell_i=\|\Ubt_{(i)}\|_2^2=\|\eb_i^T\Ubt\|_2^2=\eb_i^T\Ubt\Ubt^T\eb_i
\end{equation}
where $\Ubt=\Pibold\Ub$; for $\Ub$ the reduced left orthonormal matrix of $\Ab$. By $\eb_i$ we denote the $i^{th}$ standard basis vector of $\R^N$.

\begin{Lemma}
\label{Lemma_exp_lev_i}
For each $i\in\N_N$, we have $\E[\ell_i]=\frac{d}{N}$.
\end{Lemma}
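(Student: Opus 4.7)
The plan is to reduce $\E[\ell_i]$ to a trace computation, using a standard second-moment identity for uniformly random orthonormal matrices. Starting from the expression \eqref{lvg_sc_expr}, rewrite
\[
\ell_i = \eb_i^T \, \Pibold\, (\Ub \Ub^T)\, \Pibold^T \eb_i ,
\]
so that by linearity of expectation,
\[
\E[\ell_i] = \eb_i^T \, \E\!\left[\Pibold (\Ub \Ub^T) \Pibold^T\right] \eb_i .
\]
The task then becomes computing the matrix $\E[\Pibold M \Pibold^T]$ with $M = \Ub \Ub^T$, which is a symmetric rank-$d$ projector satisfying $\tr(M) = \tr(\Ub^T \Ub) = \tr(\Ib_d) = d$.

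The key step I would invoke is the identity
\[
\E\!\left[\Pibold M \Pibold^T\right] = \frac{\tr(M)}{N}\,\Ib_N ,
\]
which holds for any $M$ when $\Pibold$ is Haar-distributed on $O_N(\R)$; equivalently, the second-moment computation $\E[P_{ik} P_{jl}] = \delta_{ij}\delta_{kl}/N$ makes all off-diagonal entries of $\E[\Pibold M \Pibold^T]$ vanish and forces its diagonal to be constant. Plugging in $\tr(M) = d$ gives $\E[\Pibold(\Ub\Ub^T)\Pibold^T] = (d/N)\,\Ib_N$, and hence $\E[\ell_i] = d/N$, as claimed.

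The only subtlety is that in Algorithm~\ref{alg_orthog_sketch} $\Pibold$ is drawn uniformly from a finite subgroup $\Otil_\Ab \subset O_N(\R)$ rather than from the Haar measure. The cleanest way to handle this is to note the deterministic identity
\[
\sum_{i=1}^N \ell_i \;=\; \tr(\Ubt^T \Ubt) \;=\; \tr(\Ub^T \Pibold^T \Pibold \Ub) \;=\; \tr(\Ub^T \Ub) \;=\; d ,
\]
so that $\sum_{i=1}^N \E[\ell_i] = d$ unconditionally; then invoke the row-exchangeability of $\Pibold$ (coming either from Haar-invariance or from the permutation symmetry built into $\Otil_\Ab$, e.g.\ the garbling permutation $\Pb$ in the case $\Otil_\Ab = \Ht_N$) to conclude $\E[\ell_i]$ is independent of $i$, and hence equals $d/N$. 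The main obstacle is essentially bookkeeping the subgroup hypothesis so that this symmetry is available; once that is done the computation is immediate.
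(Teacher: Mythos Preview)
Your proposal is correct, and your second route---computing $\sum_{i=1}^N \ell_i = \tr(\Ubt^T\Ubt) = d$ deterministically and then invoking row-exchangeability of $\Pibold$ to conclude each $\E[\ell_i]$ equals $d/N$---is exactly the paper's argument: the paper writes $\E[\tr(\eb_i\eb_i^T\Ubt\Ubt^T)] = \frac{1}{N}\sum_{j=1}^N\tr(\eb_j\eb_j^T\Ubt\Ubt^T) = \frac{1}{N}\tr(\Ubt\Ubt^T) = d/N$, and that first equality is precisely the exchangeability step you spell out. If anything you are more explicit than the paper about the symmetry hypothesis on $\Otil_\Ab$ that is needed for this step to go through.
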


\begin{proof}
By \eqref{lvg_sc_expr}, we have
\begin{align*}
  \E[\ell_i] &= \E\left[\tr(\eb_i^T\Ubt\Ubt^T\eb_i)\right]\\
  &= \E\left[\tr(\eb_i\eb_i^T\cdot\Ubt\Ubt^T)\right]\\
  &= \sum_{j=1}^N\frac{1}{N}\cdot\tr(\eb_j\eb_j^T\cdot\Ubt\Ubt^T)\\
  &= \frac{1}{N}\cdot\tr\left(\sum_{j=1}^N\eb_j\eb_j^T\cdot\Ubt\Ubt^T\right)\\
  &= \frac{1}{N}\cdot\tr\left(\Ib_N\cdot\Ubt\Ubt^T\right)\\
  &= \frac{1}{N}\cdot\tr\left(\Ubt\Ubt^T\right)\\
  &= \frac{d}{N}.
\end{align*}
\end{proof}

Let $\ellb_i$ denote the $i^{th}$ normalized leverage score, \textit{i.e.} $\ellb_i=\frac{\ell_i}{d}$. The $\iota^{th}$ normalized block leverage score of $\Ab$ is denoted by $\ellg_\iota$, \textit{i.e.}
\begin{equation}
\label{norm_bl_lvg_sc_expr}
  \ellg_\iota=\frac{1}{d}\cdot\|\Ib_{(\K_\iota)}\Ubt\|_F^2=\frac{1}{d}\cdot\Big(\sum_{j\in\K_\iota}\ell_j\Big)=\sum_{j\in\K_\iota}\ellb_j.
\end{equation}
To prove Lemma \ref{bd_block_lvg_Unif}, we first recall Hoeffding's inequality.

\begin{Thm}[Hoeffding's Inequality, \cite{Mah16}]
\label{Hoef_in}
Let $\{X_i\}_{i=1}^m$ be independent random variables such that $X_i\in[a_i,b_i]$ for all $i\in\N_m$, and let $X=\sum_{i=1}^mX_i$. Then
\begin{equation*}
\label{Hoeffding_id}
  \Pr\left[\big|X-\E[X]\big|\geqslant t\right] \leqslant 2\cdot\exp\left\{\frac{-2t^2}{\sum_{j=1}^m(a_i-b_i)^2}\right\}.
\end{equation*}
\end{Thm}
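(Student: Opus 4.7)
The plan is to prove Hoeffding's inequality via the classical Chernoff--Cram\'er method: convert the tail event for $X - \E[X]$ into a tail event for the nonnegative random variable $\exp(s(X-\E[X]))$ for an arbitrary parameter $s>0$, apply Markov's inequality, exploit the independence of the $X_i$ to factor the moment generating function, bound each factor using a sub-Gaussian estimate for bounded centered random variables (Hoeffding's lemma), and finally optimize over $s$. A symmetric argument handles the left tail, and a union bound yields the two-sided inequality stated in the theorem.

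More concretely, I would first replace each $X_i$ by the centered variable $Y_i = X_i - \E[X_i]$, so that $Y_i \in [a_i - \E[X_i], b_i - \E[X_i]]$ is a zero-mean random variable supported in an interval of length $b_i - a_i$, and $Y \coloneqq X - \E[X] = \sum_{i=1}^m Y_i$. For any $s>0$, Markov's inequality on the nondecreasing map $y \mapsto e^{sy}$ gives
\begin{equation*}
  \Pr[Y \geqslant t] \;=\; \Pr\big[e^{sY}\geqslant e^{st}\big] \;\leqslant\; e^{-st}\,\E\big[e^{sY}\big] \;=\; e^{-st}\prod_{i=1}^m \E\big[e^{sY_i}\big],
\end{equation*}
where the last equality uses independence of the $Y_i$. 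The key step is then to invoke Hoeffding's lemma, which asserts that for a zero-mean random variable $Y_i$ supported in an interval of length $b_i - a_i$ one has $\E[e^{sY_i}] \leqslant \exp\!\big(s^2(b_i - a_i)^2/8\big)$. Combining these bounds yields $\Pr[Y\geqslant t] \leqslant \exp\!\big(-st + s^2 \sum_i (b_i-a_i)^2/8\big)$, and minimizing the right-hand side over $s>0$ with $s = 4t/\sum_i (b_i-a_i)^2$ produces the one-sided bound $\exp\!\big(-2t^2/\sum_i(b_i-a_i)^2\big)$. Applying the same argument to $-Y$ and taking a union bound gives the factor of $2$ in the statement.

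The main obstacle, as in every proof of Hoeffding's inequality, is the sub-Gaussian moment generating function bound (Hoeffding's lemma) for a bounded centered random variable. I would establish it by using convexity of $y\mapsto e^{sy}$ to dominate $e^{sY_i}$ by the linear interpolation between $e^{sa_i'}$ and $e^{sb_i'}$ on $[a_i',b_i']$ (with $a_i' = a_i - \E[X_i]$, $b_i' = b_i - \E[X_i]$), taking expectations to kill the linear term via $\E[Y_i]=0$, and then showing via a second-order Taylor expansion of the resulting log moment generating function $\varphi(s) = \log\!\big(\tfrac{b_i'}{b_i'-a_i'}e^{sa_i'} - \tfrac{a_i'}{b_i'-a_i'}e^{sb_i'}\big)$ about $s=0$ that $\varphi''(u) \leqslant (b_i-a_i)^2/4$ uniformly in $u$, so that $\varphi(s) \leqslant s^2(b_i-a_i)^2/8$. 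The rest of the argument is mechanical; the only choice requiring any thought is the optimal value of $s$, which is dictated by completing the square in the exponent.
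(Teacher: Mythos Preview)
Your proof sketch is the standard Chernoff--Cram\'er argument and is correct. However, the paper does not actually prove this theorem: it is stated as a cited result from \cite{Mah16} and used as a black box in the proof of Lemma~\ref{norm_lvg_bd_lemma}. So there is no paper proof to compare against; you have simply supplied a proof where the paper chose to quote one.
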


\begin{Lemma}
\label{norm_lvg_bd_lemma}
The normalized leverage scores $\{\ellb_i\}_{i=1}^N$ of $\Pibold\Ab$ satisfy
$$ \Pr\left[|\ellb_i-1/N|<\rho\right] > 1-2\cdot e^{-2\rho^2} $$
for any $\rho>0$.
\end{Lemma}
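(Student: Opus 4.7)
The plan is to express $\ell_i$ as a sum of $d$ uniformly bounded random variables and invoke Hoeffding's inequality (Theorem \ref{Hoef_in}), then rescale by $d$ to convert the statement on $\ell_i$ to one on $\ellb_i = \ell_i/d$. Writing $\Ubt = \Pibold\Ub$, I would start from
$$ \ell_i \;=\; \|\Ubt_{(i)}\|_2^2 \;=\; \sum_{j=1}^d \bigl(\Ubt_{ij}\bigr)^2, \qquad \Ubt_{ij} \;=\; \langle \Pibold_{(i)}^T,\, \Ub^{(j)}\rangle, $$
and note that $\Pibold_{(i)}^T$ is a row of an orthonormal matrix and $\Ub^{(j)}$ is an orthonormal column of $\Ub$, both of unit norm; Cauchy--Schwarz then gives $|\Ubt_{ij}|\leqslant 1$, so each summand $X_j \coloneqq (\Ubt_{ij})^2$ lies in $[0,1]$.

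Lemma \ref{Lemma_exp_lev_i} supplies the centering constant $\E[\ell_i]=d/N$. Applying Theorem \ref{Hoef_in} with $m=d$, $a_j=0$, $b_j=1$, so that $\sum_{j=1}^d(b_j-a_j)^2=d$, yields
$$ \Pr\bigl[|\ell_i - d/N| \geqslant t\bigr] \;\leqslant\; 2\exp\!\bigl(-2t^2/d\bigr). $$
Choosing $t = d\rho$ and dividing the event inside the probability by $d$ translates the bound into the normalized form
$$ \Pr\bigl[|\ellb_i - 1/N| \geqslant \rho\bigr] \;\leqslant\; 2\exp(-2 d \rho^2) \;\leqslant\; 2 e^{-2\rho^2}, $$
using $d\geqslant 1$ in the last step, and taking complements delivers the lemma.

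The principal delicate point is the independence hypothesis of Hoeffding's inequality: the variables $X_j=(\Ubt_{ij})^2$ share the common random row $\Pibold_{(i)}$, so they are not strictly independent. I would handle this by recasting the argument as an Azuma--Hoeffding bound for the martingale whose natural filtration is generated by one column of $\Ub$ at a time; the centered increments $X_j - \E[X_j \mid X_1,\ldots,X_{j-1}]$ still lie in $[-1,1]$, which preserves the same deviation bound up to a constant factor, while the telescoping conditional means recombine to the unconditional expectation $d/N$ from Lemma \ref{Lemma_exp_lev_i}. An alternative, and in fact stronger, route is to apply Lévy-type concentration for Lipschitz functions on $O_N(\R)$: the map $\Pibold \mapsto \|\Pibold_{(i)}^T \Ub\|_2^2$ is $2$-Lipschitz, giving a sub-Gaussian tail with variance proxy shrinking in $N$. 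The weaker $e^{-2\rho^2}$ form stated here is however sufficient and more convenient for feeding into Lemma \ref{bd_block_lvg_Unif}, where a binomial-type aggregation across the block $\K_\iota$ is then applied.
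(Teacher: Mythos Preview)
Your approach is correct in spirit but takes a detour that the paper avoids entirely. You decompose $\ell_i=\sum_{j=1}^d(\Ubt_{ij})^2$ into $d$ summands and then rightly worry that these share the common random row $\Pibold_{(i)}$, forcing you to patch the argument with Azuma--Hoeffding or L\'evy concentration. The paper instead observes that the \emph{single} random variable $\ellb_i=\ell_i/d$ already lies in $[0,1]$ (since $\ell_i\in[0,d]$), computes $\E[\ellb_i]=1/N$ via Lemma~\ref{Lemma_exp_lev_i}, and applies Theorem~\ref{Hoef_in} with $m=1$. Hoeffding with a single bounded variable requires no independence hypothesis at all, so the issue you flag simply does not arise.

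Your martingale fix is also shaky as stated: the filtration you propose is ``generated by one column of $\Ub$ at a time,'' but $\Ub$ is deterministic---the only randomness is in $\Pibold$---so that filtration is trivial and does not produce the increments you want. A correct martingale route would need a filtration built from exposing coordinates of $\Pibold_{(i)}$, which is more delicate on $O_N(\R)$. The L\'evy-concentration alternative you mention would work and in fact gives a much sharper bound (variance proxy $\ow(1/N)$ rather than $\ow(1)$), but as you note, the weak $e^{-2\rho^2}$ form is all that is fed into Lemma~\ref{bd_block_lvg_Unif}, and the one-line $m=1$ application of Hoeffding delivers exactly that.
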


\begin{proof}
We know that $\ell_i\in[0,d]$ for each $i\in\N_N$, thus $\ellb_i\in[0,1]$ for each $i$. By Lemma \ref{Lemma_exp_lev_i}, it follows that
$$ \E[\ellb_i]=\E[\ell_i/d]=\frac{1}{d}\cdot\E[\ell_i]=\frac{1}{N}. $$
Now, fix a constant $\rho>0$. By applying Theorem \ref{Hoef_in} with $m=1$, we get
\begin{equation*}
\label{norm_lvg_bd}
  \Pr\left[|\ellb_i-1/N|\geqslant\rho\right] \leqslant 2\cdot e^{-2\rho^2}
\end{equation*}
thus
\begin{equation*}
  \Pr\left[|\ellb_i-1/N|<\rho\right] > 1-2\cdot e^{-2\rho^2}.
\end{equation*}
\end{proof}

Next, we complete the proof of the ``flattening Lemma of block leverage scores'' (Lemma \ref{bd_block_lvg_Unif}).

\begin{proof}{[Lemma \ref{bd_block_lvg_Unif}]}
To show that the two probability events of expression \eqref{bl_lvg_unif_flatten} are equal, note that:
\begin{enumerate}
  \item $\ellg_\iota-\frac{1}{K}<\frac{N}{K}\rho \ \ \iff \ \ \ellg_\iota<(1+N\rho)\frac{1}{K}$
  \item $\frac{1}{K}-\ellg_\iota<\frac{N}{K}\rho \ \ \iff \ \ \ellg_\iota>(1-N\rho)\frac{1}{K}$.
\end{enumerate}
By combining the two inequalities, we conclude that
\begin{equation}
\label{approx_unif_error}
  (1-N\rho)\cdot\frac{1}{K} < \ellg_\iota < (1+N\rho)\cdot\frac{1}{K} \ \ \iff \ \ \ellg_\iota<_{N\rho} 1/K.
\end{equation}
By Lemma \ref{norm_lvg_bd_lemma}, it follows that
\begin{align*}
  \Pr\left[\big|\ellg_\iota-1/K\big|<\tau\rho\right] &> \Pr\left[\bigwedge_{j\in\K_\iota}\big\{|\ellb_i-1/N|<\rho\big\}\right]\\
  &> \left(1-2\cdot e^{-2\rho^2}\right)^\tau\\
  &\overset{\Join}{\approx} 1-2\tau\cdot e^{-2\rho^2}
\end{align*}
where in $\Join$ we applied the binomial approximation. By substituting $\rho\geqslant\sqrt{\log(2\tau/\delta)/2}$, we get
\begin{align*}
  e^{-2\rho^2} &\leqslant e^{-2\frac{\log(2\tau/\delta)}{2}}\\
  &= e^{-\log(2\tau/\delta)}\\
  &= e^{\log(\delta/2\tau)}\\
  &= \delta/2\tau,
\end{align*}
thus $2\tau\cdot e^{-2\rho^2}\leqslant\delta$; and $1-2\tau\cdot e^{-2\rho^2} \geqslant 1-\delta$. In turn, this implies that $\Pr\left[\big|\ellg_\iota-1/K\big|<\tau\rho\right]>1- \delta$.
\end{proof}

The proof of Theorem \ref{subsp_emb_thm_Unif} is a direct consequence of Lemma \ref{bd_block_lvg_Unif} and Theorem \ref{subsp_emb_thm_lvg}. In our statement we make the assumption that $\ellg_\iota=1/K$ for all $\iota$, though this is not necessarily the case, as Lemma \ref{bd_block_lvg_Unif} permits a small deviation. For $\rho\gets\epsilon$, we consider $\epsilon\ll 1/N$ so that the `$N\epsilon$ multiplicative error' in \eqref{approx_unif_error} is small. We note that \cite[Theorem 1]{CPH23b} considers sampling according to \textit{approximate} block leverage scores.

\begin{Thm}[$\ell_2$-s.e. of the block leverage score sampling sketch, \cite{CPH23b}]
\label{subsp_emb_thm_lvg}
The sketching matrix $\Sbwt$ constructed by sampling blocks of $\Ab$ with replacement according to their normalized block leverage scores $\{\ellg_\iota\}_{\iota=1}^K$ and rescaling each sampled block by $\sqrt{1\big/\big(q\ellg_\iota\big)}$, guarantees a $\ell_2$-s.e. of $\Ab$; as defined in \eqref{subsp_emb_id}. Specifically, for $\delta>0$ and $q=\Theta(\frac{d}{\tau}\log{(2d/\delta)}/\epsilon^2)$:
\begin{equation*}
  \Pr\big[\|\Ib_d-\Ub^T\Sbwt^T\Sbwt \Ub\|_2\leqslant\epsilon\big]\geqslant 1-\delta.
\end{equation*}
\end{Thm}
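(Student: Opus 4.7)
The plan is to reduce the subspace-embedding identity to matrix concentration for a sum of i.i.d.\ PSD matrices and invoke Tropp's matrix Chernoff inequality. Conditioning on the random indices $\iota_1,\dots,\iota_q$ drawn i.i.d.\ with $\Pr[\iota_t=\iota]=\ellg_\iota$, by construction the $t$-th stacked block of $\Sbwt$ equals $\sqrt{1/(q\ellg_{\iota_t})}\,\Ib_{(\K_{\iota_t})}$, and therefore
\begin{equation*}
  \Ub^T\Sbwt^T\Sbwt\Ub \;=\; \sum_{t=1}^{q} Y_t,\qquad Y_t \;:=\; \frac{1}{q\,\ellg_{\iota_t}}\,\Ub_{(\K_{\iota_t})}^T\Ub_{(\K_{\iota_t})}.
\end{equation*}
The $Y_t$ are i.i.d., PSD, and rank at most $\tau$. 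Using the defining identity $\|\Ub_{(\K_\iota)}\|_F^2 = d\,\ellg_\iota$ together with $\sum_\iota \Ub_{(\K_\iota)}^T\Ub_{(\K_\iota)}=\Ub^T\Ub=\Ib_d$, a direct computation gives $\E[Y_t] = \tfrac{1}{q}\Ib_d$, so $\E[\sum_t Y_t] = \Ib_d$; the target inequality reduces to the operator-norm concentration $\bigl\|\sum_t Y_t - \Ib_d\bigr\|_2 \leq \epsilon$.

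The concentration step requires a uniform bound on $\|Y_t\|_2$. Since the spectral norm is dominated by the Frobenius norm,
\begin{equation*}
  \|Y_t\|_2 \;\leq\; \frac{\|\Ub_{(\K_{\iota_t})}\|_F^2}{q\,\ellg_{\iota_t}} \;=\; \frac{d}{q},
\end{equation*}
which is exactly where leverage-score (rather than uniform) sampling pays off: the rescaling $1/\ellg_{\iota_t}$ cancels the per-block Frobenius mass, so heavier blocks, which are sampled more often, contribute proportionally smaller increments. Tropp's matrix Chernoff bound with $\mu_{\min} = \mu_{\max} = 1$ and per-summand bound $R = d/q$ then yields
\begin{equation*}
  \Pr\!\left[\bigl\|\textstyle\Ib_d - \sum_t Y_t\bigr\|_2 > \epsilon\right] \;\leq\; 2d\,\exp\!\left(-c\,\epsilon^2\,q/d\right)
\end{equation*}
for an absolute constant $c$; forcing the right-hand side below $\delta$, combined with the equivalence in \eqref{subsp_emb_id}, completes the embedding guarantee.

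The principal obstacle is recovering the extra factor of $\tau$ that separates the naive sample complexity $q=\Theta(d\log(2d/\delta)/\epsilon^2)$ from the sharper $q=\Theta(\tfrac{d}{\tau}\log(2d/\delta)/\epsilon^2)$ claimed in the statement. The step $\|\Ub_{(\K_\iota)}\|_2^2 \leq \|\Ub_{(\K_\iota)}\|_F^2$ loses exactly this factor by collapsing all $\tau$ block singular values into one. To recover it, I would either invoke a matrix Bernstein bound and compute the variance $\bigl\|\sum_t\E[Y_t^2]\bigr\|_2$ directly, using $(\Ub_{(\K_\iota)}^T\Ub_{(\K_\iota)})^2 \preceq \|\Ub_{(\K_\iota)}\|_2^2\cdot\Ub_{(\K_\iota)}^T\Ub_{(\K_\iota)}$ together with a mild block-coherence hypothesis, or reformulate a single block draw as $\tau$ coupled row samples so that the effective row-sample count is $q\tau$ and the classical row-leverage-score Chernoff analysis applies verbatim. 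The latter reduction is where I would focus the detailed work, since it cleanly imports the sharpest known row-sampling bounds into the block setting.
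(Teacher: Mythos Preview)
The paper does not prove this theorem; it is quoted verbatim from \cite{CPH23b} and used as a black box to derive Theorem~\ref{subsp_emb_thm_Unif}. So there is no in-paper proof to compare against directly. The closest analogue the paper \emph{does} prove is Theorem~\ref{subsp_emb_thm_SRHT} via Proposition~\ref{prop_SRHT_b}, and that argument is instructive here.

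Your reduction to an i.i.d.\ sum of PSD matrices with mean $\Ib_d$ and your uniform bound $\|Y_t\|_2\leq d/q$ are correct, and the resulting $q=\Theta(d\log(2d/\delta)/\epsilon^2)$ is a legitimate (weaker) subspace-embedding guarantee. You have also correctly isolated the only nontrivial issue: recovering the extra factor of $\tau$. Of your two proposed fixes, the first --- switch from the PSD Chernoff bound to a Bernstein-type inequality and control the variance --- is exactly what the paper does in the SRHT setting. There the centered summands $\Xb_i=\Ib_d-K\,\Vbh_{(\K^i)}^T\Vbh_{(\K^i)}$ are used with Theorem~\ref{matr_Chern}, and the variance computation collapses the block into a sum over its $\tau$ rows, producing the $d/\tau$ scaling in $\sigma^2$ even though the uniform bound $\gamma$ remains of order $d$. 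Because $\sigma^2$ dominates $\gamma\epsilon/3$ for small $\epsilon$, the final sample complexity inherits the $\tau$ saving. Your second proposal, rewriting one block draw as $\tau$ coupled row draws, is morally the same mechanism viewed differently: the paper's variance step literally expands $\Vbh_{(\K_\iota)}^T\Vbh_{(\K_\iota)}$ as $\sum_{l\in\K_\iota}\Vbh_l^T\Vbh_l$ and then bounds each row outer product individually.

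One caution on the general block-leverage-score version: the paper's SRHT variance bound uses a per-\emph{row} flattening estimate $\|\Vbh_l\|_2^2\leq \alpha$, which is what ultimately yields $\sigma^2=\ow(d/\tau)$. For arbitrary block leverage score sampling, no such per-row control is assumed, and the crude step $(\Ub_{(\K_\iota)}^T\Ub_{(\K_\iota)})^2\preceq\|\Ub_{(\K_\iota)}\|_2^2\,\Ub_{(\K_\iota)}^T\Ub_{(\K_\iota)}$ together with $\|\Ub_{(\K_\iota)}\|_2^2\leq d\ellg_\iota$ only recovers $\sigma^2=\ow(d)$, not $\ow(d/\tau)$. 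So if you pursue the Bernstein route, be explicit about whatever block-coherence or intra-block spread hypothesis you invoke to get the sharper variance bound; this is the step where the argument in \cite{CPH23b} must do real work beyond what is visible in the present paper.
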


Before we prove Proposition \ref{appr_GC_prop}, we first derive \eqref{appr_GC_error}. In \cite{CPE17}, the optimal decoding vector of an approximate GCS was defined as
\begin{equation}
\label{opt_dec_err}
  \ab_\I^\star = \arg\min_{\ab\in\R^{1\times q}}\big\{\|\ab\Gb_{(\I)}-\vec{\bold{1}}\|_2^2\big\}.
\end{equation}
In the case where $q\geqslant K$, it follows that $\ab_\I^\star = \vec{\bold{1}}\Gb_{(\I)}^{\dagger}$. The error can then be quantified as
$$ \err(\Gb_{(\I)}) \coloneqq \|\Ib_K-\Gb_{(\I)}^{\dagger}\Gb_{(\I)}\|_2. $$
The optimal decoding vector \eqref{opt_dec_err} has also been considered in other schemes, \textit{e.g.} \cite{KKR19,SH22}.

Let $\gb^{[t]}$ be the matrix comprised of the transposed exact partial gradients at iteration $t$, \textit{i.e.}
\begin{equation}
\label{g_matrix}
  \gb^{[t]} \coloneqq {\begin{pmatrix} g_1^{[t]} & g_2^{[t]} & \hdots & g_K^{[t]} \end{pmatrix}}^T \in \R^{K\times d}.
\end{equation}
Then, for a GCS $(\Gb,\ab_\I)$ satisfying $\ab_\I\Gb_{(\I)}=\vec{\bold{1}}$ for any $\I$, it follows that $\left(\ab_\I\Gb_{(\I)}\right)\gb^{[t]}=\vec{\bold{1}}\gb^{[t]}=\big(g^{[t]}\big)^T$. Hence, the gradient can be recovered exactly. Considering an optimal approximate scheme $(\Gb,\ab_\I^\star)$ which recovers the gradient estimate $\grave{g}^{[t]}=\left(\ab_\I^\star\Gb_{(\I)}\right)\gb^{[s]}$, the error in the gradient approximation is
\begin{align*}
  \big\|g^{[s]}-\grave{g}^{[s]}\big\|_2 &= \left\|\big(\vec{\bold{1}}-\ab_\I^\star\Gb_{(\I)}\big)\gb^{[s]}\right\|_2\\
  &= \left\|\vec{\bold{1}}\big(\Ib_K-\Gb_{(\I)}^{\dagger}\Gb_{(\I)}\big)\gb^{[s]}\right\|_2\\
  &\leqslant \|\vec{\bold{1}}\|_2\cdot \left\|\Ib_K-\Gb_{(\I)}^{\dagger}\Gb_{(\I)}\right\|_2\cdot\big\|\gb^{[s]}\big\|_2\\
  &\overset{\pounds}{\leqslant} \sqrt{K}\cdot \left\|\Ib_K-\Gb_{(\I)}^{\dagger}\Gb_{(\I)}\right\|_2\cdot\big\|g^{[s]}\big\|_2\\
  &\overset{\$}{\leqslant} 2\sqrt{K}\cdot\underbrace{\left\|\Ib_K-\Gb_{(\I)}^{\dagger}\Gb_{(\I)}\right\|_2}_{\err(\Gb_{(\I)})}\cdot\|\Ab\|_2\cdot\|\Ab\xb^{[s]}-\bb\|_2
\end{align*}
where $\pounds$ follows from the facts that $\|\gb^{[s]}\|_2\leqslant\|g^{[s]}\|_2$ and $\|\vec{\bold{1}}\|_2=\sqrt{K}$, and $\$$ from \eqref{gr_ls} and sub-multiplicativity of matrix norms. This concludes the derivation of \eqref{appr_GC_error}.

\begin{proof}{[Proposition \ref{appr_GC_prop}]}
Let $\gh^{[t]}$ be the approximated gradient of our scheme at iteration $t$. Since we are considering linear regression, it follows that
\begin{align*}
  \big\|g^{[t]}-\gh^{[t]}\big\|_2 &= 2\|\Ab^T(\Ab\xb^{[t]}-\bb)-\Ab^T(\SbPi^T\SbPi)(\Ab\xb^{[t]}-\bb)\|_2 \\
  &= 2\|\Ab^T(\Ib_N-\SbPi^T\SbPi)(\Ab\xb^{[t]}-\bb)\|_2\\
  &\leqslant 2\|\Ab\|_2\cdot\|\Ib_N-\SbPi^T\SbPi\|_2\cdot\|\Ab\xb^{[t]}-\bb\|_2\\
  &= 2\|\Ab\|_2\cdot\|\Ub^T(\Ib_N-\SbPi^T\SbPi)\Ub\|_2\cdot\|\Ab\xb^{[s]}-\bb\|_2\\
  &= 2\|\Ab\|_2\cdot\|\Ib_d-\Ub^T\SbPi^T\SbPi\Ub\|_2\cdot\|\Ab\xb^{[s]}-\bb\|_2\\
  &\overset{\flat}{\leqslant} 2\epsilon\cdot\|\Ab\|_2\cdot\|\Ab\xb^{[t]}-\bb\|_2
\end{align*}
where in $\flat$ we invoked the fact that $\SbPi$ satisfies \eqref{subsp_emb_id}. Our approximate GC approach therefore (w.h.p.) satisfies \eqref{appr_GC_error}, with $\err(\Gb_{(\I)})=\epsilon/\sqrt{K}$
\end{proof}

\section{Proofs of Section \ref{block_SRHT_sec}}

In this appendix, we present two lemmas which we use to bound the entries of $\Vbh\coloneqq\Hbh\Db\Ub$, and its \textit{leverage scores} $\ell_i\coloneqq\|\Vbh_{(i)}\|_2^2$, for which $\sum_{i=1}^N\ell_i=d$. Leverage scores induce a sampling distribution which has proven to be useful in linear regression \cite{DMMW12,Woo14,Mah16,Wan15} and GC \cite{CPH20a}. From these lemmas, we deduce that the leverage scores of $\Hbh\Db\Ab$ are close to being uniform, implying that the \textit{block leverage scores}\cite{OJXE18,CPH20a} are also uniform, which is precisely what Lemma \ref{bd_lvg_Had} states.

Lemma \ref{fl_lem} is a variant of the Flattening Lemma \cite{AC06,Mah16}, a key result to Hadamard based sketching algorithms, which justifies uniform sampling. In the proof, we make use of the Azuma-Hoeffding inequality; a concentration result for the values of martingales that have bounded differences. We also recall a matrix Chernoff bound, which we apply to prove our $\ell_2$-s.e. guarantees. Finally, we present proofs of Lemma \ref{bd_block_lvg_Had} and Theorem \ref{subsp_emb_thm_SRHT}.

\begin{Lemma}[Azuma-Hoeffding Inequality, \cite{Mah16}]
\label{Az_Hoef_in}
For zero mean random variable $Z_i$ (or $Z_0,Z_1,\cdots,Z_m$ a martingale sequence of random variables), bounded above by $|Z_i|\leqslant \beta_i$ for all $i$ with probability 1, we have
$$ \Pr\bigg[\big|\sum_{j=0}^m Z_j\big|>t\bigg] \leqslant 2\exp\left\{\frac{-t^2}{2\cdot\big(\sum_{j=0}^m\beta_j^2\big)}\right\}. $$
\end{Lemma}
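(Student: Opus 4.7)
The plan is to establish the inequality by the Cramér--Chernoff exponential moment method, handling the martingale version (the independent zero-mean version being the special case where the filtration is trivial).

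First I would reduce to a one-sided tail bound. Letting $S_m \coloneqq \sum_{j=0}^m Z_j$, the triangle inequality for probabilities gives $\Pr[|S_m| > t] \leqslant \Pr[S_m > t] + \Pr[-S_m > t]$, so it suffices to prove $\Pr[S_m > t] \leqslant \exp(-t^2/(2B))$ with $B \coloneqq \sum_{j=0}^m \beta_j^2$, and then apply the same argument to $(-Z_j)$, which is still a martingale difference sequence with the same bounds.

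Next, for any $\lambda > 0$, Markov's inequality applied to $e^{\lambda S_m}$ yields
\begin{equation*}
\Pr[S_m > t] \;\leqslant\; e^{-\lambda t}\,\E\!\left[e^{\lambda S_m}\right].
\end{equation*}
To control the moment generating function, I would peel off one term at a time using the tower property with respect to the filtration $\{\mathcal{F}_j\}$: since $S_m = S_{m-1} + Z_m$ and $S_{m-1}$ is $\mathcal{F}_{m-1}$-measurable,
\begin{equation*}
\E\!\left[e^{\lambda S_m}\right] \;=\; \E\!\left[e^{\lambda S_{m-1}}\,\E[e^{\lambda Z_m}\mid \mathcal{F}_{m-1}]\right].
\end{equation*}
The crux of the argument is then the conditional sub-Gaussian bound (\emph{Hoeffding's lemma}): if $W$ satisfies $\E[W\mid \mathcal{F}]=0$ and $|W|\leqslant \beta$ a.s., then $\E[e^{\lambda W}\mid \mathcal{F}] \leqslant e^{\lambda^2 \beta^2/2}$. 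This I would prove by writing any $w\in[-\beta,\beta]$ as a convex combination $w = \frac{\beta-w}{2\beta}(-\beta) + \frac{\beta+w}{2\beta}\beta$, using convexity of $x\mapsto e^{\lambda x}$, taking conditional expectation (the linear term vanishes), and showing that the resulting function of $\lambda$ is dominated by $e^{\lambda^2\beta^2/2}$ via a Taylor expansion of its logarithm.

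Iterating this bound gives $\E[e^{\lambda S_m}] \leqslant \exp(\lambda^2 B / 2)$, hence $\Pr[S_m > t] \leqslant \exp(-\lambda t + \lambda^2 B / 2)$. Optimizing over $\lambda > 0$ at $\lambda^\star = t/B$ produces $\Pr[S_m > t] \leqslant \exp(-t^2/(2B))$, and combining with the symmetric bound yields the factor of $2$ in the statement. The main obstacle is really just the proof of Hoeffding's lemma; once that sub-Gaussian MGF estimate is in hand, the rest is standard exponential tilting and telescoping. For the independent (non-martingale) case, the tower property step is replaced by factorization of the MGF under independence, and the same per-variable bound closes the argument identically.
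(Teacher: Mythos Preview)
Your proof is correct and follows the standard Cram\'er--Chernoff route for Azuma--Hoeffding. Note, however, that the paper does not actually prove this lemma: it is stated with a citation to \cite{Mah16} and used as a black-box tool in the proof of the Flattening Lemma (Lemma~\ref{fl_lem}). So there is no ``paper's own proof'' to compare against here; your argument simply supplies what the paper takes for granted.
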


\begin{Thm}[Matrix Chernoff Bound, {\cite[Fact 1]{Woo14}}]
\label{matr_Chern}
  Let $\Xb_1,\cdots,\Xb_q$ be independent copies of a symmetric random matrix $\Xb\in\R^{d\times d}$, with $\E[\Xb]=\bold{0}, \|\Xb\|_2\leqslant \gamma$, $\|\E[\Xb^T\Xb]\|_2\leqslant \sigma^2$. Let $\Zb=\frac{1}{q}\sum_{i=1}^q\Xb_i$. Then, $\forall\epsilon>0$:
  \begin{equation}
  \label{matr_Chern_expr}
    \Pr\Big[\|\Zb\|_2>\epsilon\Big]\leqslant2d\cdot\exp\left(-\frac{q\epsilon^2}{\sigma^2+\gamma\epsilon/3}\right).
  \end{equation}
\end{Thm}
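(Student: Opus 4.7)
The plan is to invoke the matrix Laplace-transform (Ahlswede–Winter / Tropp) machinery, since the hypotheses — a centered independent sum of bounded symmetric matrices with an operator-norm bound on the variance proxy — fall squarely in the setting of a matrix Bernstein inequality. First I would reduce the operator-norm tail to a one-sided tail on the largest eigenvalue: because $\Zb$ is symmetric, $\|\Zb\|_2=\max\{\lambda_{\max}(\Zb),\lambda_{\max}(-\Zb)\}$, so a union bound gives $\Pr[\|\Zb\|_2>\epsilon]\leqslant 2\Pr[\lambda_{\max}(\Zb)>\epsilon]$, and the $-\Zb$ tail is handled identically because $-\Xb_i$ satisfies the same three hypotheses.

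The core of the argument is the matrix Markov/Chernoff inequality applied to $\theta q\Zb=\theta\sum_i\Xb_i$ for an optimization parameter $\theta>0$:
\[ \Pr\!\big[\lambda_{\max}(\Zb)>\epsilon\big] \leqslant e^{-\theta q\epsilon}\cdot \E\!\left[\tr\exp\!\Big(\theta\sum_{i=1}^{q}\Xb_i\Big)\right]. \]
Because matrix exponentials fail to commute, the independence of the $\Xb_i$ cannot be exploited directly; I would instead apply Lieb's concavity theorem iteratively to decouple the sum into
\[ \E\!\left[\tr\exp\!\Big(\theta\sum_{i=1}^{q}\Xb_i\Big)\right] \leqslant \tr\exp\!\Big(q\cdot\log\E[\exp(\theta\Xb)]\Big). \]

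Next I would establish a semidefinite-order mgf bound for a single $\Xb$. Expanding $e^{\theta\Xb}=\Ib_d+\theta\Xb+\sum_{k\geqslant 2}(\theta\Xb)^k/k!$ and dominating $\Xb^k\preceq \gamma^{k-2}\Xb^2$ via $\|\Xb\|_2\leqslant\gamma$, together with $\E[\Xb]=\bold{0}$, yields
\[ \E[\exp(\theta\Xb)] \preceq \exp\!\big(g(\theta)\cdot\E[\Xb^2]\big), \qquad g(\theta)\coloneqq(e^{\theta\gamma}-1-\theta\gamma)/\gamma^2. \]
Using $\|\E[\Xb^2]\|_2\leqslant\sigma^2$ (noting $\Xb^2=\Xb^T\Xb$ in the symmetric case) and the monotonicity of $\tr\exp(\cdot)$ on the Loewner order bounds the trace above by $d\cdot\exp(qg(\theta)\sigma^2)$.

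Combining everything and optimizing over $\theta>0$ via the classical estimate $g(\theta)\leqslant\theta^2/\bigl(2(1-\theta\gamma/3)\bigr)$ (valid for $0<\theta<3/\gamma$) collapses $qg(\theta)\sigma^2-\theta q\epsilon$ to the Bernstein tail exponent $-q\epsilon^2/(\sigma^2+\gamma\epsilon/3)$, and the initial symmetrization supplies the prefactor $2d$ in \eqref{matr_Chern_expr}. The main obstacle is the noncommutativity of matrix exponentials — Lieb's concavity theorem is the deep ingredient that rescues the independence argument — while the remainder is a careful lift of scalar Bernstein-style calculus to the Loewner order.
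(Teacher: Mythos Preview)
The paper does not prove this theorem; it is quoted verbatim as a known fact from Woodruff's survey \cite{Woo14} and invoked as a black box in the proof of Proposition~\ref{prop_SRHT_b}. Your proposal is the standard Tropp-style derivation of the matrix Bernstein inequality via the matrix Laplace transform and Lieb's concavity theorem, and it is correct in substance and is exactly the argument one would supply if asked to justify the cited fact.

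One minor point: carrying your steps through, the optimization over $\theta$ with the bound $g(\theta)\leqslant\theta^2/\big(2(1-\theta\gamma/3)\big)$ actually yields the exponent $-q\epsilon^2/\big(2(\sigma^2+\gamma\epsilon/3)\big)$, i.e.\ with an extra factor of $2$ in the denominator relative to \eqref{matr_Chern_expr}. This is immaterial for the paper's purposes---only the $\Theta(\cdot)$ dependence is used downstream---and reflects a dropped constant in the quoted statement rather than a gap in your reasoning.
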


\begin{Lemma}[Flattening Lemma]
\label{fl_lem}
  For $\yb\in\R^N$ a fixed (orthonormal) column vector of $\Ub$, and $\Db\in\{0,\rpm1\}^{N\times N}$ with random equi-probable diagonal entries of $\rpm1$, we have:
  \begin{equation}
  \label{flat_lem_id}
    \Pr\left[\|\Hbh\Db\cdot\yb\|_\infty> C\sqrt{\log(Nd/\delta)/N}\right]\leqslant\frac{\delta}{2d}
  \end{equation}
  for $0<C\leqslant \sqrt{2+\log(16)/\log(Nd/\delta)}$ a constant.
\end{Lemma}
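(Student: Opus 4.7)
The plan is to apply Azuma-Hoeffding to a single coordinate of $\Hbh\Db\yb$ and then union bound over the $N$ coordinates. First I would fix an index $i\in\N_N$ and write
\[
  (\Hbh\Db\yb)_i \;=\; \sum_{j=1}^N \Hbh_{ij}\,D_{jj}\,y_j \;=\; \sum_{j=1}^N Z_j^{(i)},
\]
where $Z_j^{(i)} \coloneqq \Hbh_{ij}\,D_{jj}\,y_j$. Since $D_{jj}$ is a symmetric $\rpm1$ random variable and $\Hbh_{ij},y_j$ are deterministic, each $Z_j^{(i)}$ has mean zero, they are mutually independent across $j$, and they are bounded: $|Z_j^{(i)}| = |\Hbh_{ij}|\,|y_j| = |y_j|/\sqrt{N} \eqqcolon \beta_j$. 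Consequently $\sum_{j=1}^N \beta_j^2 = \|\yb\|_2^2/N = 1/N$ because $\yb$ is a unit column of $\Ub$.

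Next I would invoke Lemma \ref{Az_Hoef_in} with this martingale (a sum of independent zero-mean bounded increments) to obtain, for any $t>0$,
\[
  \Pr\!\left[\,\big|(\Hbh\Db\yb)_i\big| > t\,\right] \;\leqslant\; 2\exp\!\left(-\frac{t^2}{2\sum_j \beta_j^2}\right) \;=\; 2\exp\!\left(-\frac{N t^2}{2}\right).
\]
A union bound over the $N$ coordinates then yields
\[
  \Pr\!\left[\,\|\Hbh\Db\yb\|_\infty > t\,\right] \;\leqslant\; 2N\exp\!\left(-\frac{N t^2}{2}\right).
\]

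Finally I would substitute $t = C\sqrt{\log(Nd/\delta)/N}$, so that the right-hand side becomes $2N\cdot(Nd/\delta)^{-C^2/2}$. To conclude $\Pr\big[\|\Hbh\Db\yb\|_\infty>t\big] \leqslant \delta/(2d)$, it suffices that $2N(Nd/\delta)^{-C^2/2}\leqslant \delta/(2d)$, which upon taking logarithms reduces to $C^2 \log(Nd/\delta) \geqslant 2\log(4Nd/\delta) = 2\log(Nd/\delta) + \log 16$, i.e.\ $C^2 \geqslant 2 + \log(16)/\log(Nd/\delta)$. This matches the stated constant regime on $C$.

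The conceptually delicate step is the boundedness input to Azuma-Hoeffding: one must use the sign pattern of $\Hbh$ together with the Rademacher symmetry of $\Db$ to guarantee that each $Z_j^{(i)}$ is genuinely mean-zero with the tight bound $|Z_j^{(i)}|=|y_j|/\sqrt{N}$, so that $\sum_j \beta_j^2$ collapses to $1/N$ rather than something $N$-dependent. Everything else is bookkeeping: applying Lemma \ref{Az_Hoef_in}, a union bound over coordinates, and solving for $C$ to absorb the $2N$ prefactor and the target failure probability $\delta/(2d)$.
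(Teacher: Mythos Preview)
Your proposal is correct and follows essentially the same route as the paper's proof: fix a coordinate $i$, apply Azuma--Hoeffding to $\sum_j \Hbh_{ij}D_{jj}y_j$ using $\sum_j\beta_j^2=\|\yb\|_2^2/N=1/N$, then union bound over the $N$ coordinates and solve for $C$. One small remark: your algebra correctly yields the requirement $C^2\geqslant 2+\log(16)/\log(Nd/\delta)$, which is the \emph{reverse} of the inequality written in the lemma statement; the paper's own proof exhibits the same reversal (it too needs a lower bound on $C$ but refers to it as the ``upper bound''), so the discrepancy lies in the statement rather than in your argument.
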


\begin{proof}{[Lemma \ref{fl_lem}]}
Fix $i$ and define $Z_j=\Hbh_{ij}\Db_{jj}\yb_j$ for each $j\in\N_N$, which are independent random variables. Since $\Db_{jj}=\vec{D}_j$ are i.i.d. entries with zero mean, so are $Z_j$. Furthermore $|Z_j| \leqslant |\Hbh_{ij}|\cdot|\Db_{jj}|\cdot|\yb_j| = \frac{|\yb_j|}{\sqrt{N}}$, and note that
$$ \sum_{j=1}^NZ_j=(\Hbh\Db\yb)_{i}=\sum_{j=1}^N\Hbh_{ij}\Db_{jj}\yb_j=\langle\Hbh_{(i)}\odot \overbrace{\diag(\Db)}^{\vec{D}},\yb\rangle $$
where $\odot$ is the Hadamard product. By Lemma \ref{Az_Hoef_in}
\begin{align}
\label{pr_sum_Zj}
  \Pr\bigg[\Big|\sum_{j=1}^N Z_j\Big|&>\rho\bigg] \leqslant 2\exp\left\{\frac{-\rho^2/2}{\sum_{j=1}^N(\yb_j/\sqrt{N})^2}\right\}\notag\\
  &= 2\exp\left\{\frac{-N\rho^2}{2\cdot\langle\yb,\yb\rangle}\right\} \overset{\flat}{=} 2\cdot e^{-N\rho^2/2}
\end{align}
where $\flat$ follows from the fact that $\yb$ is a column of $\Ub$. By setting $\rho=C\sqrt{\frac{\log(Nd/\delta)}{N}}$, we get
\begin{align*}
  \Pr\left[\Big|\sum_{j=1}^N Z_j\Big|>C\sqrt{\frac{\log(Nd/\delta)}{N}}\right] &\leqslant 2\exp\left\{-\frac{C^2\log(Nd/\delta)}{2}\right\}\\
  &= 2\left(\frac{\delta}{Nd}\right)^{C^2/2} \overset{\natural}{\leqslant} \frac{\delta}{2Nd}
\end{align*}
where $\natural$ follows from the upper bound on $C$. By applying the union bound over all $i\in\N_N$, we attain \eqref{flat_lem_id}.
\end{proof}

\begin{Lemma}
\label{bd_lvg_Had}  
  For all $i\in\N_N$ and $\{\eb_i\}_{i=1}^N$ the standard basis:
  $$ \Pr\left[\sqrt{\ell_i}\leqslant C\sqrt{d\log(Nd/\delta)/N}\right]\geqslant 1-\delta/2 $$
  for $\ell_i=\|\Vbh_{(i)}\|_2^2$ the $i^{th}$ leverage score of $\Vbh=\Hbh\Db\Ub$.
\end{Lemma}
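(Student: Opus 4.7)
The plan is to reduce the leverage score bound to an $\ell_\infty$-bound on each column of $\Vbh$, which is exactly what Lemma \ref{fl_lem} provides. Write $\Ub = [\ub^{(1)}\ \cdots\ \ub^{(d)}]$ and observe that each $\ub^{(j)}$ is a fixed orthonormal column vector in $\R^N$. Since $\Hbh\Db$ acts column-wise, the $(i,j)$ entry of $\Vbh = \Hbh\Db\Ub$ equals $(\Hbh\Db\ub^{(j)})_i$, so the leverage score decomposes as
\begin{equation*}
  \ell_i \;=\; \|\Vbh_{(i)}\|_2^2 \;=\; \sum_{j=1}^d \bigl(\Hbh\Db\ub^{(j)}\bigr)_i^2 \;\leqslant\; \sum_{j=1}^d \bigl\|\Hbh\Db\ub^{(j)}\bigr\|_\infty^2.
\end{equation*}

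The core step is then to control each term on the right-hand side uniformly over $j$. Fix any $j\in\N_d$; since $\ub^{(j)}$ is a unit-norm column of $\Ub$, Lemma \ref{fl_lem} applies and gives
\begin{equation*}
  \Pr\!\left[\bigl\|\Hbh\Db\ub^{(j)}\bigr\|_\infty > C\sqrt{\log(Nd/\delta)/N}\right] \leqslant \frac{\delta}{2d},
\end{equation*}
for $C$ in the stated range. Note that the Flattening Lemma already internally handled the union bound over the $N$ coordinates of the infinity norm, which is why its failure probability scales with $d$ in the denominator rather than $Nd$.

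A further union bound over the $d$ columns of $\Ub$ then shows that, with probability at least $1 - d \cdot \delta/(2d) = 1-\delta/2$, we have $|(\Hbh\Db\ub^{(j)})_i|^2 \leqslant C^2 \log(Nd/\delta)/N$ simultaneously for every $i\in\N_N$ and every $j\in\N_d$. Plugging this into the decomposition of $\ell_i$ yields
\begin{equation*}
  \ell_i \;\leqslant\; d\cdot \frac{C^2\log(Nd/\delta)}{N},
\end{equation*}
and taking square roots gives the desired bound $\sqrt{\ell_i} \leqslant C\sqrt{d\log(Nd/\delta)/N}$ for every $i\in\N_N$ on the same high-probability event.

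There is no real obstacle here: the work was done in establishing Lemma \ref{fl_lem} via Azuma--Hoeffding, and the present statement is essentially its corollary after summing over the $d$ orthonormal directions of the column-space of $\Ab$. The only subtlety worth checking is bookkeeping of the failure probability, specifically that the factor of $d$ from the union bound precisely cancels the $1/(2d)$ provided by Lemma \ref{fl_lem}, yielding the claimed $\delta/2$. This is exactly the reason Lemma \ref{fl_lem} is stated with the $Nd$ inside the logarithm and the $\delta/(2d)$ outside, rather than with a symmetric $N$/$\delta/(2N)$ formulation.
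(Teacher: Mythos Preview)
Your proof is correct and follows essentially the same route as the paper: apply the Flattening Lemma (Lemma \ref{fl_lem}) to each of the $d$ columns of $\Ub$, union bound over these $d$ events to absorb the $1/(2d)$ into a total failure probability of $\delta/2$, and then sum the resulting entrywise bounds across $j$ to control $\ell_i=\sum_{j=1}^d(\Hbh\Db\Ub)_{ij}^2$. If anything, your bookkeeping of the union bound and the decomposition of $\ell_i$ is more explicit than the paper's own exposition.
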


\begin{proof}{[Lemma \ref{bd_lvg_Had}]}
It is straightforward that the columns of $\Vbh$ form an orthonormal basis of $\Ab$, thus Lemma \ref{fl_lem} implies that for $j\in\N_d$
$$ \Pr\left[\|\Vbh\cdot\eb_j\|_\infty> C\sqrt{\log(Nd/\delta)/N}\right] \leqslant \frac{\delta}{2d} . $$
By applying the union bound over all entries of $\Vbh^{(j)}=\Vbh\cdot\eb_j$
\begin{equation}
\label{small_entries}
  \Pr\Bigg[\overbrace{|\eb_i^T\cdot\Vbh\cdot\eb_j|}^{|(\Hbh\Db\Ub)_{ij}|}> C\sqrt{\frac{\log(Nd/\delta)}{N}}\Bigg] \leqslant d\cdot\frac{\delta}{2d} = \delta/2.
\end{equation}
We manipulate the argument of the above bound to obtain
$$ \|\eb_i^{T}\cdot\Vbh\|_2=\Big(\sum_{j=1}^d(\Hbh\Db\Ub)_{ij}^2\Big)^{1/2} > { C}\sqrt{d\cdot\frac{\log(Nd/\delta)}{N}} , $$
which can be viewed as a scaling of the random variable entries of $\Vbh$. The probability of the complementary event is therefore
$$ \Pr\left[\|\eb_i^T\cdot\Vbh\|_2\leqslant C\sqrt{d\log(Nd/\delta)/N}\right]\geqslant 1-\delta/2 $$
and the proof is complete.
\end{proof}

\begin{Rmk}
The complementary probable event of \eqref{small_entries} can be interpreted as `every entry of $\Vbh$ is small in absolute value'.
\end{Rmk}

\begin{proof}{[Lemma \ref{bd_block_lvg_Had}]}
For $\alpha\coloneqq \eta d\cdot\log(Nd/\delta)/N$
$$ \Pr\big[\tilde{\ell}_\iota\leqslant\tau\cdot\alpha\big]>\Pr\big[\{\ell_j\leqslant\alpha:\forall j\in\K_\iota\}\big]\overset{\diamondsuit}{>}(1-\delta/2)^{\tau}$$
where $\eta=C^2$ and $\diamondsuit$ follows from Lemma \ref{bd_lvg_Had}. By the binomial approximation, we have $(1-\delta/2)^{\tau}\approx 1-\tau\delta/2$.
\end{proof}

Define the symmetric matrices
\begin{equation}
\label{def_Xi}
  \Xb_i = \left(\Ib_d-\frac{N}{\tau}\cdot\Vbh_{(\K^i)}^T\Vbh_{(\K^i)}\right) = \left(\Ib_d-K\cdot \Vbh_{(\K^i)}^T\Vbh_{(\K^i)}\right)
\end{equation}
where $\Vbh_{(\K^i)}=\Vbh_{(\K_\iota)}$ is the submatrix of $\Vbh$ corresponding to the $i^{th}$ sampling trial of our algorithm. Let $\Xb$ be the matrix r.v. of which the $\Xb_i$'s are independent copies. Note that the realizations $\Xb_i$ of $\Xb$ correspond to the sampling blocks of the event in \eqref{subsp_emb_id}. To apply Theorem \ref{matr_Chern}, we show that the $\Xb_i$'s have zero mean, and we bound their $\ell_2$-norm and variance. Their $\ell_2$-norms are upper bounded by
\begin{align}
  \|\Xb_i\|_2 &\leqslant \|\Ib_d\|_2+\|\frac{N}{\tau}\cdot \Vbh_{(\K^i)}^T\Vbh_{(\K^i)}\|_2\notag\\
  &= 1+\frac{N}{\tau}\cdot\|\Vbh_{(\K_\iota)}\|_2^2\notag\\
  &\leqslant 1+\frac{N}{\tau}\cdot\max_{\iota\in\N_K}\left\{\|\Ib_{(\K_\iota)}\cdot\Vbh\|_2^2\right\}\notag\\
  &\leqslant 1+\frac{N}{\tau}\cdot\max_{\iota\in\N_K}\left\{\|\Ib_{(\K_\iota)}\cdot\Vbh\|_F^2\right\}\notag\\
  &\overset{\$}{\leqslant} 1+\frac{N}{\tau}\cdot\left(|\K_\iota|\cdot\max_{j\in\N_N}\left\{\|\eb_j^T\cdot\Vbh\|_2^2\right\}\right)\notag\\
  &\leqslant 1+\frac{N}{\tau}\cdot\big(\tau\cdot(\eta \cdot d\log(Nd/\delta)/N)\big) \tag*{[Lemma \ref{fl_lem}]}\notag\\
  &= 1+\eta \cdot d\log(Nd/\delta) \label{bound_gamma}\\
  &=1+N\alpha\notag
\end{align}
for $\alpha=\eta d\cdot\log(Nd/\delta)/N$ where in $\$$ we used the fact that
$$ \|\Ib_{(\K_\iota)}\cdot\Vbh\|_F^2 = \sum_{j\in\K_\iota}\|\eb_j^T\cdot\Vbh\|_2^2 \leqslant |\K_\iota|\cdot\max_{j\in\K_\iota}\left\{\|\eb_j^T\cdot\Vbh\|_2^2\right\} . $$
From the above derivation, it follows that
\begin{align*}
  \|\Vbh_{(\K^i)}\|_2^2 &= \|\Vbh_{(\K^i)}^T\Vbh_{(\K^i)}\|_2 \\
  &\leqslant \frac{\tau}{N}\cdot\left(1+\eta \cdot d\log(Nd/\delta)-\|\Ib_d\|_2\right)\\
  &=\tau \eta d/N\cdot\log(Nd/\delta)\\
  &=\tau\alpha 
\end{align*}
for all $\iota\in\N_K$. By setting $\tau=1$, we get an upper bound on the squared $\ell_2$-norm of the rows of $\Vbh$:
\begin{equation}
\label{bound_out_prod_norm}
  \|\Vbh_l\|_2^2 =\|\Vbh_l\Vbh_l^T\|_2 =\|\Vbh_l^T\Vbh_l\|_2 \leqslant \alpha
\end{equation}
where $\Vbh_l=\Vbh_{(l)}$, for all $l\in\N_N$.

Next, we compute $\Eb\coloneqq\E[\Xb^T\Xb+\Ib_d]$ and its eigenvalues. By the definition of $\Xb$ and its realizations:
{\small
\begin{align*}
  \Xb_i^T\Xb_i &= \left(\Ib_d-N/\tau\cdot \Vbh_{(\K^i)}^T\Vbh_{(\K^i)}\right)^T \cdot \left(\Ib_d-N/\tau\cdot \Vbh_{(\K^i)}^T\Vbh_{(\K^i)}\right)\\
  &= \Ib_d-2\cdot\frac{N}{\tau}\cdot \Vbh_{(\K^i)}^T\Vbh_{(\K^i)} + \left(\frac{N}{\tau}\right)^2\cdot \Vbh_{(\K^i)}^T\Vbh_{(\K^i)}\Vbh_{(\K^i)}^T\Vbh_{(\K^i)}
\end{align*}
}
thus $\Eb$ is evaluated as follows:
{\small
\begin{subequations}
\label{E_eval}  
\begin{align*}
  \E[&\Xb^T\Xb+\Ib_d] = 2\Ib_d - 2\cdot\left(N/\tau\right)\cdot\E\left[ \Vbh_{(\K^i)}^T\Vbh_{(\K^i)}\right] \\ &{\white=}+ \left(N/\tau\right)^2\cdot \E\left[\Vbh_{(\K^i)}^T\Vbh_{(\K^i)}\Vbh_{(\K^i)}^T\Vbh_{(\K^i)}\right]\\
  &= 2\Ib_d - 2\cdot\left(N/\tau\right)\cdot\left({\textstyle\sum_{j=1}^K}K^{-1}\cdot \Vbh_{(\K_j)}^T\Vbh_{(\K_j)}\right) \\ &{\white=}+ \left(N/\tau\right)^2\cdot\left({\textstyle\sum_{j=1}^K}K^{-1}\cdot \Vbh_{(\K_j)}^T\left(\Vbh_{(\K_j)}\Vbh_{(\K_j)}^T\right)\Vbh_{(\K_j)}\right)\\
  &= 2\Ib_d - 2\cdot\left({\textstyle\sum_{l=1}^N} \Vbh_l^T\Vbh_l\right) + (N/\tau)\cdot\left({\textstyle\sum_{l=1}^N} \Vbh_l^T\left(\Vbh_l\Vbh_l^T\right)\Vbh_l\right)\\
  &= K\cdot\left({\textstyle\sum_{l=1}^N} \langle \Vbh_l,\Vbh_l\rangle\cdot \Vbh_l^T\Vbh_l\right)
\end{align*}
\end{subequations}
}
where in the last equality we invoked $\sum_{l=1}^N \Vbh_l^T\Vbh_l=\Ib_d$.

In order to bound the variance of the matrix random variable $\Xb$, we bound the largest eigenvalue of $\Eb$; by comparing it to the matrix
$$ \Fb=K\alpha\cdot\left(\sum_{l=1}^N\Vbh_l^T\Vbh_l\right)=K\alpha\cdot\Ib_d $$
whose eigenvalue $K\alpha$ is of algebraic multiplicity $d$. It is clear that $\Eb$ and $\Fb$ are both real and symmetric; thus they admit an eigendecomposition of the form $\Qb\bold{\Lambda}\Qb^T$. Note also that for all $\yb\in\R^d$:
\begin{align}
  \yb^T\Eb\yb &= K\cdot \yb^T\left(\sum_{l=1}^N \Vbh_l^T\left(\Vbh_l\Vbh_l^T\right)\Vbh_l\right)\yb\notag\\
  &\overset{\sharp}{=} K\cdot\sum_{l=1}^N \langle \yb,\Vbh_l\rangle^2\cdot\|\Vbh_l\|_2^2\notag\\
  &\overset{\flat}{\leqslant} K\alpha\cdot\sum_{l=1}^N \langle \yb,\Vbh_l\rangle^2 \label{bound_quadr_E}\\
  &= K\alpha\cdot\sum_{l=1}^N\yb^T\Vbh_l^T\cdot\Vbh_l\yb\notag\\
  &= \yb^T\left(K\alpha\cdot\sum_{l=1}^N\Vbh_l^T\cdot\Vbh_l\right)\yb\notag\\
  &= \yb^T\Fb\yb\notag
\end{align}
where in $\flat$ we invoked \eqref{bound_out_prod_norm}. By $\sharp$ we conclude that $\yb^T\Eb\yb\geqslant0$, thus $\Fb\succeq\Eb\succeq0$.

Let $\wb_i,\zb_i$ be the unit-norm eigenvectors of $\Eb,\Fb$ corresponding to their respective $i^{th}$ largest eigenvalue. Then
$$ \wb_i^T\left(\Qb_\Eb\bold{\Lambda}_\Eb\Qb_\Eb^T\right)\wb_i = \eb_i^T\cdot\bold{\Lambda}_\Eb\cdot\eb_i = \lambda_i \quad {\white\implies} $$
and by \eqref{bound_quadr_E} we bound this as follows:
\begin{align*}
  \lambda_i = \wb_i^T\Eb\wb_i \leqslant K\alpha\cdot\sum_{l=1}^N \langle \wb_i,\Vbh_l\rangle^2.
\end{align*}
Since
$$ \wb_1 = \arg\max_{\substack{\vb\in\R^d\\ \|\vb\|_2=1}}\big\{\vb^T\Eb\vb\big\} \ \implies \ \|\Eb\|_2 = \lambda_1=\wb_1^T\Eb\wb_1, $$
and $\Fb\succeq\Eb\geqslant0$, it follows that
\begin{align*}
  \|\Eb\|_2&=\wb_1^T\Eb\wb_1 \leqslant \wb_1^T\Fb\wb_1\\
  &\leqslant \arg\max_{\substack{\vb\in\R^d\\ \|\vb\|_2=1}}\big\{\vb^T\Fb\vb\big\} = \|\Fb\|_2 = K\alpha.
\end{align*}
In turn, this gives us 
\begin{align}
  \|\E[\Xb^T\Xb]\|_2 &= \|\Eb-\Ib_d\|_2\notag\\
  &\leqslant \|\Eb\|_2+\|\Ib_d\|_2\notag\\
  &\leqslant \|\Fb\|_2+1\notag\\
  &= K\alpha+1\notag\\
  &\leqslant \eta K\frac{d}{N}\log(Nd/\delta)+1 \notag\\
  &= \eta \frac{d}{\tau}\log(Nd/\delta)+1 \label{bound_var}
\end{align}
hence $\|\E[\Xb^T\Xb]\|_2=O\big(\frac{d}{\tau}\log(Nd/\delta)\big)$.

We now have everything we need to apply Theorem \ref{matr_Chern}.

\begin{Prop}
\label{prop_SRHT_b}
The block-SRHT $\SbPh$ guarantees 
$$ \Pr\Big[\|\bold{I}_d-\Ub^T\SbPh^T\SbPh\Ub\|_2>\epsilon\Big] \leqslant 2d\cdot \exp\left\{\frac{-\epsilon^2\cdot q}{\Theta\left(\frac{d}{\tau}\cdot\log(Nd/\delta)\right)}\right\} $$
for any $\epsilon>0$, and $q=r/\tau>d/\tau$.
\end{Prop}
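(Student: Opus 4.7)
The plan is to identify the random matrix $\Zb=\Ib_d-\Ub^T\SbPh^T\SbPh\Ub$ as an empirical average of $q$ i.i.d.\ copies of the matrix random variable $\Xb$ defined in \eqref{def_Xi}, and then directly invoke the matrix Chernoff bound of Theorem \ref{matr_Chern}. Since $\Ombwt$ selects $q$ blocks uniformly at random (with replacement) from $\Vbh=\Hbh\Db\Ub$ and rescales each by $\sqrt{K/q}$, one computes
$$ \Ub^T\SbPh^T\SbPh\Ub \;=\; \frac{K}{q}\sum_{i=1}^q \Vbh_{(\K^i)}^T\Vbh_{(\K^i)}, $$
so that $\Zb=\frac{1}{q}\sum_{i=1}^q \Xb_i$ with $\Xb_i=\Ib_d-K\cdot\Vbh_{(\K^i)}^T\Vbh_{(\K^i)}$, exactly matching the form required by Theorem \ref{matr_Chern}.

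Next I would verify the three hypotheses of the matrix Chernoff bound. For the mean, note that
$$ \E\big[\Vbh_{(\K^i)}^T\Vbh_{(\K^i)}\big]=\frac{1}{K}\sum_{j=1}^K \Vbh_{(\K_j)}^T\Vbh_{(\K_j)}=\frac{1}{K}\cdot\Vbh^T\Vbh=\frac{1}{K}\Ib_d, $$
so $\E[\Xb_i]=\Ib_d-K\cdot\frac{1}{K}\Ib_d=\bold{0}$. For the $\ell_2$-norm and variance bounds, I would cite the preceding derivations: the flattening result of Lemma \ref{fl_lem} together with the derivation leading to \eqref{bound_gamma} yields $\gamma\coloneqq\|\Xb_i\|_2\leqslant 1+\eta d\log(Nd/\delta)$, while the argument in \eqref{E_eval}--\eqref{bound_var}, which dominates $\E[\Xb^T\Xb]$ by $\Fb=K\alpha\cdot\Ib_d$ in the PSD order, gives $\sigma^2\coloneqq\|\E[\Xb^T\Xb]\|_2\leqslant 1+\eta\frac{d}{\tau}\log(Nd/\delta)$.

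Plugging these into \eqref{matr_Chern_expr} yields
$$ \Pr\big[\|\Zb\|_2>\epsilon\big]\leqslant 2d\cdot\exp\!\left(\frac{-q\epsilon^2}{\sigma^2+\gamma\epsilon/3}\right), $$
and the denominator $\sigma^2+\gamma\epsilon/3$ is $\Theta\!\big(\frac{d}{\tau}\log(Nd/\delta)\big)$ in the regime of interest ($\epsilon=O(1/\tau)$, which is the scale where the sketch can be informative), giving precisely the claimed bound. The main obstacle is bookkeeping rather than anything novel: one has to be careful about which term of $\sigma^2+\gamma\epsilon/3$ dominates, since $\gamma$ carries an extra factor of $\tau$ compared to $\sigma^2$. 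Resolving this amounts to restricting $\epsilon$ implicitly (as is standard for Chernoff-type subspace-embedding arguments, and as is implicit in Theorem \ref{subsp_emb_thm_SRHT}), so that the $\sigma^2$ term governs the denominator and the $\Theta(\cdot)$ statement in the proposition is justified. Inverting the resulting tail bound to solve for $q$ then produces the sampling complexity $q=\Theta\!\big(\frac{d}{\tau}\log(Nd/\delta)\log(2d/\delta)/\epsilon^2\big)$ stated in Theorem \ref{subsp_emb_thm_SRHT}.
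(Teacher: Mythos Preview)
Your proposal is correct and follows essentially the same route as the paper: write $\Zb=\Ib_d-\Ub^T\SbPh^T\SbPh\Ub$ as the empirical average $\frac{1}{q}\sum_i\Xb_i$ of the matrices in \eqref{def_Xi}, plug in the bounds $\gamma$ from \eqref{bound_gamma} and $\sigma^2$ from \eqref{bound_var}, and apply Theorem \ref{matr_Chern}. Your explicit check that $\E[\Xb_i]=\bold{0}$ and your remark that the $\Theta(\cdot)$ in the denominator tacitly requires $\epsilon\tau=O(1)$ (so that the $\gamma\epsilon/3$ term does not swamp $\sigma^2$) are both points the paper leaves implicit, but otherwise the arguments coincide.
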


\begin{proof}{[Proposition \ref{prop_SRHT_b}]}
Let $\{\Xb_i\}_{i=1}^q$ as defined in \eqref{def_Xi} denote $q$ block samples. Let $j(i)$ denote the index of the submatrix which was sampled at the $i^{th}$ random trial, \textit{i.e.} $\K_{j(i)}=\K_{j(i)}^i$. Then
{\small
\begin{align*}
  \Zb &= \frac{1}{q}\sum_{i=1}^t \Xb_{j(i)}\\
  &= \frac{1}{q}\cdot\sum_{i=1}^q\left(\Ib_d-\frac{N}{\tau}\cdot \Vbh_{(\K_{j(i)})}^T\Vbh_{(\K_{j(i)})}\right)\\
  &= \Ib_d-\sum_{i=1}^q\left(\sqrt{N/r}\cdot \Vbh_{(\K_{j(i)})}\right)^T\cdot\left(\sqrt{N/r}\cdot \Vbh_{(\K_{j(i)})}\right)\\
  &= \Ib_d-\sum_{i=1}^q\left(\sqrt{N/r}\cdot\Ib_{(\K_{j(i)})}\cdot \Vbh\right)^T\cdot\left(\sqrt{N/r}\cdot\Ib_{(\K_{j(i)})}\cdot \Vbh\right)\\
  &= \Ib_d-\left(\Ombwt\Hbh\Db\Ub\right)^T\cdot\left(\Ombwt\Hbh\Db\Ub\right)\\
  &= \Ib_d - \Ub^T\SbPh^T\SbPh\Ub.
\end{align*}
}

We apply Lemma \ref{matr_Chern} by fixing the terms we bounded: \eqref{bound_gamma} $\gamma=\eta d\log(Nd/\delta)+1$, \eqref{bound_var} $\sigma^2=\eta \frac{d}{\tau}\log(Nd/\delta)+1$, and fix $q$ and $\epsilon$. The denominator of the exponent in \eqref{matr_Chern_expr} is then
\begin{align*}
  \big(\eta &d/\tau\cdot\log(Nd/\delta)+1\big)+\big((\eta d\log(Nd/\delta)+1)\cdot\epsilon/3\big) = \\
  &= \eta d/\tau\cdot\log(Nd/\delta)\cdot\big(1+\epsilon\tau/3\big)+(1+\epsilon/3)\\
  &= \Theta\left(\frac{d}{\tau}\log(Nd/\delta)\right) 
\end{align*}
and the proof is complete.
\end{proof}

\begin{proof}{[Theorem \ref{subsp_emb_thm_SRHT}]}
By substituting $q$ in the bound of Proposition \ref{prop_SRHT_b} and taking the complementary event, we attain the statement.
\end{proof}

\subsection{The Hadamard Transform}

\begin{Rmk}
The Hadamard matrix is a real analog of the discrete Fourier matrix, and there exist matrix multiplications algorithms for the Hadamard transform which resemble the FFT algorithm. Recall that the Fourier matrix represents the characters of the cyclic group of order $N$. In this case, $\Hbh_N$ represents the characters of the group $(\Z_2^N,+)$, where $\Z_2^n\cong\Z_{N}$. For both of these transforms, it is precisely through this algebraic structure which one can achieve a matrix-vector multiplication in $\ow(N\log N)$ arithmetic operations.
\end{Rmk}

Recall that the characters of a group $G$, form an orthonormal basis of the vector space of functions over the Boolean hypercube, \textit{i.e.} $\mathcal{F}_n=\left\{f:\{0,1\}^n\to\R\right\}$, and it is the Fourier basis. Furthermore, when working over groups of characteristic 2, \textit{e.g.} $\F_{2^q}\cong\F_2^q$ for $q\in\Z_+$, we can move everything so that the underlying field is $\R$. Specifically, we map the elements of the binary field to $\R$ by applying $f(y)=1-2y$. This gives us $f:\{0,1\}\mapsto\{+1,-1\}\subseteq \R$, and we can work with addition and multiplication over $\R$. 

We note that there is a bijective correspondence between the characters of $\Z_m$ and the $m^{th}$ root of unity, which is precisely how we get an orthonormal (Fourier) basis. In the case where $m$ is not a power of two, we have a basis with complex elements, which violates (c) in the list of properties we seek to satisfy. This is why the Hadamard matrix is appropriate for our application, and why we do not consider a general discrete Fourier transform.

\subsection{Recursive Kronecker Products of Orthonormal Matrices}

In this subsection, we show that multiplying a vector of length $N$ with $\Piba=\Pibold_k^{\otimes \ceil{\log_k(N)}}$ for $\Pibold_k\in O_k(\R)$ and $k\in\Z_{>2}$, takes $\ow(Nk^2\log_k N)$ elementary operations. Therefore, multiplying $\Ab\in\R^{N\times d}$ with $\Piba$ takes $\ow(Ndk^2\log_k N)$ operations. We follow a similar analysis to that of \cite[Section 6.10.2]{Osg09}.

For $C(N)$ the number of elementary operations involved in carrying out the above matrix-vector multiplication, the basic recursion relationship is
\begin{equation}
\label{rec_relation}
  C(N) = k(C/k) + NC(1)
\end{equation}
where $C(1)=\zeta k^2$, for $\zeta >0$ a constant.

For $p=\ceil{\log_k(N)}$, we have the following relationship:
\begin{equation}
\label{expr_C}
  T(p) = \frac{C(N)}{N} \quad \implies \quad C(N) = NT(p).
\end{equation}
Then, $p-1=\ceil{\log_k(N/k)}$, which gives us
\begin{equation}
\label{expr_T}
  T(p-1)=\frac{C(N/k)}{N/k}=k\frac{C(N/k)}{N} \quad \implies \quad NT(p-1) = kC(N/k) .
\end{equation}
By substituting \eqref{expr_T} into \eqref{rec_relation}, we get
$$ C(N) = NT(p) = NT(p-1)+NC(1), $$
thus $T(p)=T(p-1)+C(1)$, which implies that $T(p)$ is linear. Therefore
$T(p)=pC(1)=\zeta k^2p$, and from \eqref{expr_C} we conclude that the total number of elementary operations is
$$ C(N) = NT(p) = N\zeta k^2p = N\zeta k^2\ceil{\log_k(N)} = \ow(Nk^2\log_k N). $$

\section{Proofs of Section \ref{opt_ss_sec}}

In this appendix, we present the proofs of Proposition \ref{prop_opt_ss} and Corollary \ref{cor_opt_ss}.

\begin{proof}{[Proposition \ref{prop_opt_ss}]}
Note that the optimization problem \eqref{ss_opt_prob} is equivalent to
\begin{equation}
\label{ss_opt_prob_alt}  
  \xi_t^{\star} = \arg\min_{\xi\in\R}\Big\{\|\Ab\xb^{[t+1]}-\bb\|_2^2\Big\}.
\end{equation}
If we cannot decrease further, the optimal solution to \eqref{ss_opt_prob_alt} will be 0, and we can never have $\xi_{t}<0$, as this would imply that
\begin{align*}
  \|\Ab\xb^{[t+1]}-\bb\|_2^2 &= \|\Ab(\xb^{[t]}-\xi_t\cdot g^{[t]})\bb\|_2^2 > \|\Ab\xb^{[t]}-\bb\|_2^2
\end{align*}
which contradicts the fact that we are minimizing the objective function of \eqref{ss_opt_prob_alt}. Specifically, if $\xi_t<0$, we get an ascent step in \eqref{par_upd}, and a step-size $\xi_t=0$ achieves a lower value. It therefore suffices to prove the given statement by solving \eqref{ss_opt_prob_alt}.

We will first derive \eqref{opt_ss} for $L_{ls}(\Ab,\bb;\xb^{[t]})$, and then show it is the same for the optimization problems $L_{\Pibold}(\Ab,\bb;\xb^{[t]})$ and $L_{\Gb}(\Ab,\bb;\xb^{[t]})$.
  
Recall that $\xb^{[t+1]}\gets\xb^{[t]}-\xi_t\cdot g_{ls}^{[t]}$ for the least squares objective $L_{ls}(\Ab,\bb;\xb^{[t]})$. From here onward, we denote the gradient update of the underlying objective function by $g_t$. We then reformulate the objective function of \eqref{ss_opt_prob} as follows
$$ \Delta_{t+1} \coloneqq \|\Ab\xb^{[t+1]}-\bb\|_2^2=\|\Ab(\xb^{[t]}-\xi\cdot g^{[t]})-\bb\|_2^2. $$
By expanding the above expression, we get
\begin{align*}
  \Delta_{t+1} = \xi^2\cdot&\|\Ab g_t\|_2^2-2\xi\cdot\left(g_t^T\Ab^T\Ab\xb^{[t]}-g_t^T\Ab^T\bb\right) +\\
  &+ \left(\|\Ab\xb^{[t]}\|_2^2-2\langle\Ab\xb^{[t]},\bb\rangle+\langle\bb,\bb\rangle\right)
\end{align*}
and by setting $\frac{\partial\Delta_{t+1}}{\partial\xi}=0$ and solving for $\xi$, it follows that
$$ \frac{\partial\Delta_{t+1}}{\partial\xi} = 2\xi\cdot\left(g_t^T\Ab^T\Ab g_t\right)-2\cdot\left(g_t^T\Ab^T\right)(\Ab\xb^{[t]}-\bb) = 0 $$
\begin{equation}
\label{ss_update}
  \implies \quad \xi_t^{\star} = \frac{\langle\Ab g_t,\Ab\xb^{[t]}-\bb\rangle}{\|\Ab g_t\|_2^2},
\end{equation}
which is the updated step-size we use at the next iteration. Since $\partial^2\Delta_{t+1}/\partial\xi^2=2\|\Ab g_t\|_2^2\geqslant0$, we know that $\Delta_{t+1}$ is convex. Therefore, $\xi_t^{\star}$ derived in \eqref{ss_update} is indeed the minimizer of $\Delta_{t+1}$.

Now consider SD with the objective function $L_{\Pibold}(\Ab,\bb;\xb^{[t]})$. The only thing that changes in the derivation, is that now we have $(\Abg=\Pibold\Ab,\bbg=\Pibold\bb)$ instead of $(\Ab,\bb)$. By replacing $(\Abg,\bbg)\gets(\Ab,\bb)$ in \eqref{ss_update}, it follows that
\begin{equation}
\label{ss_update_Pi}
  \frac{\langle\Abg g_t,\Abg\xb^{[t]}-\bbg\rangle}{\|\Abg g_t\|_2^2} = \frac{\langle\Ab g_t,\Ab\xb^{[t]}-\bb\rangle}{\|\Ab g_t\|_2^2}
\end{equation}
as $\Abg^T\Abg=\Ab^T\Ab$ and $\Abg^T\bbg=\Ab^T\bb$ , since $\Pibold\in O_N(\R)$. The step-sizes for the corresponding iterations are therefore identical.

Moreover, the only difference between the objective functions $L_{\Pibold}(\Ab,\bb;\xb^{[t]})$ and $L_{\Gb}(\Ab,\bb;\xb^{[t]})$ is the factor of $\sqrt{N/r}$. Let $\Abt=\Gb\Ab$ and $\bbt=\Gb\bb$. Therefore, the step-size at iteration $t+1$ when considering the objective function $L_{\Gb}(\Ab,\bb;\xb^{[t]})$ is
\begin{align*}
  \frac{\langle\Abt g_t,\Abt\xb^{[t]}-\bbt\rangle}{\|\Abt g_t\|_2^2} &= \frac{N/r}{N/r}\cdot\frac{\langle\Abg g_t,\Abg\xb^{[t]}-\bbg\rangle}{\|\Abt g_t\|_2^2} \\
  &\overset{\diamond}{=} \frac{\langle\Ab g_t,\Ab\xb^{[t]}-\bb\rangle}{\|\Ab g_t\|_2^2}
\end{align*}
where $\diamondsuit$ follows from \eqref{ss_update_Pi}.
\end{proof}

\begin{proof}{[Corollary \ref{cor_opt_ss}]}
We want to show that $\xi_t^{\star}$ according to \eqref{opt_ss}, is a solution to \eqref{opt_exp_ss}. We know that the only difference in the induced sketching matrices $\SbPi^{[t]}$ at each iteration are the resulting index sets $\Scal^{[t]}$, and the corresponding sampling and rescaling matrices $\Ombwt_{[t]}$.

To prove the given statement, since $\SbPi^{[t]}=\Ombwt_{[t]}\Pibold$; and by Proposition \eqref{prop_opt_ss} $\xi_t^{\star}$ is a solution to
\begin{equation*}
  \arg\min_{\xi\in\R}\Big\{\big\|\Pibold\big(\Ab\xbh^{[t+1]}-\bb\big)\big\|_2^2\Big\},
\end{equation*}
it suffices to show that $\E\left[\Ombwt_{[t]}^{T}\Ombwt_{[t]}\right]=\Ib_N$. This was proven in Lemma \ref{lemma_exp}. Hence, the proof is complete. 
\end{proof}

\section{Proofs of Section \ref{security_sec}}

In this appendix, we present the proofs of Theorems \ref{Shan_secr_thm} and \ref{SRHT_comp_sec_thm}, and Corollary \ref{cor_fl_lem}. We also present a counterexample to perfect secrecy of the SRHT.

\begin{proof}{[Theorem \ref{Shan_secr_thm}]}
Denote the application of $\Pibold$ to a matrix $\Mb$ by $\Enc_\Pibold(\Mb)=\Pibold\Mb$. We will prove secrecy of this scheme, which then implies that a subsampled version of the transformed information is also secure. Let $\Abg=\Enc_\Pibold(\Ab)$ and $\bbg=\Enc_\Pibold(\bb)$.

The adversaries' goal is to reveal $\Ab$. To prove that $\Enc_\Pibold$ is a well-defined security scheme, we need to show that an adversary cannot recover $\Ab$; with only knowledge of $(\Abg,\bbg)$.

For a contradiction, assume an adversary is able to recover $\Ab$ after only observing $(\Abg,\bbg)$. This means that it was able to obtain $\Pibold^{-1}$, as the only way to recover $\Ab$ from $\Abg$ is by inverting the transformation of $\Pibold$: $\Ab=\Pibold^{-1}\cdot\Abg$. This contradicts the fact that only $(\Abg,\bbg)$ were observed. Thus, $\Enc_\Pibold$ is a well-defined security scheme.

It remains to prove perfect secrecy according to Definition \ref{Sh_secr}. Observe that for any $\bar{\Ub}\in\M$ and $\bar{\Qb}\in\Cc$
\begin{equation} \Pr_{\Pibold\getsU\K}\left[\Enc_\Pibold(\bar{\Ub})=\bar{\Qb}\right] = \Pr_{\Pibold\getsU\K}\left[\Pibold\cdot\bar{\Ub}=\bar{\Qb}\right] = \ind \end{equation}
\begin{equation} \ind = \Pr_{\Pibold\getsU\K}\left[\Pibold=\bar{\Qb}\cdot\bar{\Ub}^{-1}\right] \overset{\sharp}{=} \frac{1}{|\Otil_\Ab|} = \frac{1}{|\K|} \end{equation}
where $\sharp$ follows from the fact that $\bar{\Qb}\cdot\bar{\Ub}^{-1}$ is fixed. Hence, for any $\Ub_0,\Ub_1\in\M$ and $\bar{\Qb}\in\Cc$ we have
$$ \Pr_{\Pibold\getsU\K}\left[\Enc_\Pibold(\Ub_0)=\bar{\Qb}\right] = \frac{1}{|\K|} = \Pr_{\Pibold\getsU\K}\left[\Enc_\Pibold(\Ub_1)=\bar{\Qb}\right] $$
as required by Definition \ref{Sh_secr}. This completes the proof.
\end{proof}

We note that through the SVD of $\Abg$, the adversaries can learn the singular values and right singular vectors of $\Ab$, since
\begin{equation}
  \Abg=(\Pibold\cdot\Ub_\Ab)\cdot\Sigb_\Ab\cdot\Vb_\Ab^T=\Ub_\Abg\cdot\Sigb_\Ab\cdot\Vb_\Ab^T .
\end{equation}
Recall that the singular values are unique and, for distinct positive singular values, the corresponding left and right singular vectors are also unique up to a sign change of both columns. We assume w.l.o.g. that $\Vb_\Abg=\Vb_\Ab$ and $\Ub_\Abg=\Pibold\cdot\Ub_\Ab$.

Geometrically, the encoding $\Enc_\Pibold$ changes the orthonormal basis of $\Ub_\Ab$ to $\Ub_\Abg$, by rotating it or reflecting it; when $\det(\Pibold)$ is +1 or -1 respectively. Of course, there are infinitely many ways to do so, which is what we are relying the security of this approach on.

Furthermore, unless $\Ub_\Ab$ has some special structure (\textit{e.g.}, triangular, symmetric, etc.), one cannot use an off-the-shelf factorization to reveal $\Ub_\Ab$. Even though a lot can be revealed about $\Ab$, \textit{i.e.} $\Sigma_\Ab$ and $\Vb_\Ab$, we showed that it is not possible to reveal $\Ub_\Ab$; hence nor $\Ab$, without knowledge of $\Pibold$.

\begin{proof}{[Corollary \ref{cor_fl_lem}]}
The proof is identical to that of Lemma \ref{fl_lem}. The only difference is that the random variable entries $\tilde{Z}_j=\Hbt_{ij}\Db_{jj}\yb_j$ for $j\in\N_N$ and the fixed $i$ now differ, though they still meet the same upper bound
$$|\tilde{Z}_j| \leqslant |\Hbt_{ij}|\cdot|\Db_{jj}|\cdot|\yb_j| = \frac{|\yb_j|}{\sqrt{N}}. $$
Since \eqref{pr_sum_Zj} holds true, the guarantees implied by flattening lemma also do, thus the sketching properties of the SRHT are maintained.
\end{proof}

\begin{Rmk}
Since the Lemma \ref{fl_lem} and Corollary \ref{cor_fl_lem} give the same result for the block-SRHT and garbled block-SRHT respectively, it follows that Theorem \ref{subsp_emb_thm_SRHT} also holds for the garbled block-SRHT.
\end{Rmk}

\begin{proof}{[Theorem \ref{SRHT_comp_sec_thm}]}
Assume w.l.o.g. that a computationally bounded adversary observes $\Pibt\Ab$, for which
$\widetilde{\Ab}_r=\SbPi\cdot\Ab=\Ombwt\cdot(\Pibt\Ab)$ is the resulting sketch of Algorithm \ref{alg_orthog_sketch}, for $\Pibt\in\Ht_N$. To invert the transformation of $\Pibt$, the adversary needs knowledge of the components of $\Pibt$, \textit{i.e.} $\Hbh$ and $\Pb$. Assume for a contradiction that there exists a probabilistic polynomial-time algorithm which, is able to recover $\Ab$ from $\Pibt\Ab$. This means that it has revealed $\Pb$, so that it can compute
$$ \overbrace{(\Db\Hbh\Pb^T)}^{\Pibt^T=\Pibt^{-1}}\cdot(\Pb\Hbh\Db)\cdot\Ab = \Pibt^{-1}\cdot\Pibt\cdot\Ab = \Ab, $$
which contradicts the assumption that the permutation $\Pb$ is a s-PRP. Specifically, recovering $\Ab$ by observing $\Pibt\Ab$ requires finding $\Pb$ in polynomial time.
\end{proof}

Finally, we show that $\gh^{[t]}=g_{ls}^{[t]}$, which we claimed in Subsection \ref{exact_grad_subsec}. Since $\Pibold\in O_N(\R)$ for the suggested projections (except that random Rademacher projection), we have $\Pibold^T\Pibold=\Ib_N$. It then follows that
\begin{align}
  \gh^{[t]} &= 2\sum\limits_{j=1}^K\Abt_j^T\left(\Abt_j\xb^{[t]}-\bbt_j\right) \notag \\
  &= \left(\Pibold\Ab\right)^T\left(\Pibold\Ab\xb^{[t]}-\Pibold\bb\right) \notag \\
  &= \Ab^T\left(\Pibold^T\Pibold\right)\left(\Ab\xb^{[t]}-\bb\right) \notag \\
  &= g_{ls}^{[t]} \notag 
\end{align}
and this completes the derivation.

\subsection{Counterexample to Perfect Secrecy of the SRHT}
\label{SRHT_counter_example}

Here, we present an explicit example for the SRHT (which also applies to the block-SRHT), which contradicts Definition \ref{Sh_secr}. Therefore, the SRHT cannot provide perfect secrecy.

Consider the simple case where $N=2$, and assume that $\Hbh_2\in\Otil_\Ab$. Since $(\Otil_\Ab,\cdot)$ is a multiplicative subgroup of $\GL_2(\R)$, we have $\Ib_2\in\Otil_\Ab$. Let $\Ub_0=\Ib_2$ and $\Ub_1=\Hbh_2$.

For $d_1,d_2$ i.i.d. Rademacher random variables and
$$ \Db = \begin{pmatrix} d_1 & 0\\ 0 & d_2 \end{pmatrix} , $$
it follows that
\begin{align*}
  \Cb_0=\left(\Hbh_2\Db\right)\cdot\Ub_0 = \Hbh_2\Db = \frac{1}{2}\begin{pmatrix} d_1 & -d_2\\ d_1 & d_2\end{pmatrix}
\end{align*}
and
\begin{align*}
  \Cb_1=\left(\Hbh_2\Db\right)\cdot\Ub_1 &= \frac{1}{2} \begin{pmatrix} 1 & -1\\ 1 & 1\end{pmatrix} \begin{pmatrix} d_1 & 0\\ 0 & d_2 \end{pmatrix} \begin{pmatrix} 1 & -1\\ 1 & 1\end{pmatrix}\\
  &=\frac{1}{2} \begin{pmatrix} 1 & -1\\ 1 & 1\end{pmatrix} \begin{pmatrix} d_1 & -d_1\\ d_2 & d_2 \end{pmatrix}\\
  &= \frac{1}{2}\begin{pmatrix} d_1-d_2 & -d_1-d_2\\ d_1+d_2 & -d_1+d_2 \end{pmatrix} .
\end{align*}
It is clear that $\Cb_0$ always has precisely two distinct entries, while $\Cb_1$ has three distinct entries; with 0 appearing twice for any pair $d_1,d_2\in\{\rpm1\}$. Therefore, depending on the observed transformed matrix, we can disregard one of $\Ub_0$ and $\Ub_1$ as being a potential choice for $\Pibold$.

For instance, if $\bar{\Cb}$ is the observed matrix and it has two zero entries, then
$$ \Pr_{\Pibold\getsU \hat{H}_2}\left[\Pibold\cdot\Ub_1=\bar{\Cb}\right] > \Pr_{\Pibold\getsU \hat{H}_2}\left[\Pibold\cdot\Ub_0=\bar{\Cb}\right]=0 $$
which contradicts \eqref{perf_secrecy_id}.

Note that even if we apply a permutation, as in the case of the garbled block-SRHT, we still get the same conclusion. Hence, the garbled block-SRHT also does not achieve perfect secrecy.

\subsection{Analogy with the One-Time Pad}
\label{OTP_app}

It is worth noting that the encryption resulting by the multiplication with $\Pibold$; under the assumptions made in Theorem \ref{Shan_secr_thm}, bares a strong resemblance with the one-time pad (OTP), which is the optimum cryptosystem with theoretically perfect secrecy. This is not surprising, as it is one of the few known perfectly secret encryption schemes.

The main difference between the two, is that the spaces we work over are the multiplicative group $(\Otil_\Ab,\cdot)$ whose identity is $\Ib_N$ in Theorem \ref{Shan_secr_thm}, and the additive group $\big((\Z_2)^\ell,+\big)$ in the OTP; whose identity is the zero vector of length $\ell$.

As in the OTP, we make the assumption that $\K,\M,\Cc$ are all equal to the group we are working over; $\Otil_\Ab$, which it is closed under multiplication. In the OTP, a message is revealed by applying the key on the ciphertext: if $c=m\oplus k$ for $k$ drawn from $\K$, then $c\oplus k=m$. Analogously here, for $\Pibold$ drawn from $\Otil_\Ab$: if $\bar{\Cb}=\Pibold\cdot\Ub_\Ab$, then $\bar{\Cb}^T\cdot\Pibold=(\Ub_\Ab^T\cdot\Pibold^T)\cdot\Pibold=\Ub_\Ab^T$. An important difference here is that the multiplication is not commutative.

Also, for two distinct messages $m_0,m_1$ which are encrypted with the same key $k$ to $c_0,c_1$ respectively, it follows that $c_0\oplus c_1=m_1\oplus m_2$ which reveals the XOR of the two messages. In our case, for the bases $\Ub_0,\Ub_1$ encrypted to $\Cb_0=\Pibold\Ub_0$ and $\Cb_0=\Pibold\Ub_1$ with the same projection matrix $\Pibold$, it follows that $\Cb_0^T\cdot\Cb_1=\Ub_0^T\cdot\Ub_1$.

\end{document}